\documentclass[11pt]{amsart}

\usepackage[T1]{fontenc}
\usepackage{lmodern}
\usepackage{amsmath,amssymb,amsthm,mathtools}
\usepackage{microtype}

\allowdisplaybreaks

\newtheorem{theorem}{Theorem}[section]
\newtheorem{proposition}[theorem]{Proposition}
\newtheorem{lemma}[theorem]{Lemma}
\newtheorem{corollary}[theorem]{Corollary}
\theoremstyle{definition}
\newtheorem{definition}[theorem]{Definition}
\theoremstyle{remark}
\newtheorem{remark}[theorem]{Remark}

\newcommand{\Pm}{\mathbb P_m}
\newcommand{\Hm}{\mathbb H_m}
\newcommand{\Mm}{\mathbb M_m}
\newcommand{\Tr}{\operatorname{Tr}}
\newcommand{\altPhi}{\widehat{\Phi}}
\newcommand{\hatsigma}{\widehat{\sigma}}

\title[Alternative-mean trace divergences]
{Alternative-mean trace divergences: geometry, data processing,
	and barycenters}

\author{Trung Dung Vuong}

\address{
	High School for the Gifted, VNUHCM,
	153 Nguyen Chi Thanh, An Dong Ward,
	Ho Chi Minh City, Vietnam
}

\address{
	Vietnam National University Ho Chi Minh City,
	Linh Xuan Ward,
	Ho Chi Minh City, Vietnam
}

\email{vtdung@ptnk.edu.vn}

\author{Hiroki Shudo}

\address{Graduate school of science and engineering, Ritsumeikan University, Kusatsu, Shiga, 525-8577  Japan}

\email{ra0092px@ed.ritsumei.ac.jp}

\author{Hiroyuki Osaka }

\address{Department of Mathematical Sciences\\ Ritsumeikan University\\ Kusatsu, Shiga, 525-8577  Japan}

\email{osaka@se.ritsumei.ac.jp}

\subjclass[2020]{Primary 15B48, 47A64; Secondary 15A45, 81P17}
\keywords{Alternative mean, operator monotone function, Bures--Wasserstein
	distance, data-processing inequality, quantum divergence, barycenter}

\begin{document}
	
\begin{abstract}
	Let $f:(0,\infty)\to(0,\infty)$ be a nontrivial normalized operator
	monotone function and set $s=f'(1)$. We introduce the alternative-mean
	trace functional
\[
\altPhi_f(A,B)
:=\Tr(A\nabla_s B)
-\Tr\!\left(
f(A^{-1}\sharp B)\,A\,f(A^{-1}\sharp B)
\right).
\]
	on the positive definite cone. We prove that $\altPhi_f$ is a quantum divergence in the sense of
	Bhatia--Gaubert--Jain whose diagonal Hessian induces a positive multiple
	of the Bures--Wasserstein Riemannian metric. We also establish the sharp
	comparison
	\[
	s(1-s)d_{\rm BW}(A,B)^2
	\le \altPhi_f(A,B)
	\le (1-s+s^2)d_{\rm BW}(A,B)^2.
	\] We further establish several equivalent conditions characterizing those
	$f$ for which $\altPhi_f$ satisfies data processing under all positive
	trace-preserving maps; in particular, diagonal pinching on faithful qubit
	states is a complete test. For the associated barycenter problem, we prove
	existence of closed-cone minimizers, derive a first-order equation for
	interior critical points, and give criteria ensuring uniqueness and positive
	definiteness. For the power family $f(x)=x^t$, we determine exactly when
	every finite data set in the positive definite cone admits a unique
	closed-cone minimizer, which is then necessarily positive definite.
\end{abstract}

	\maketitle
	
	\section{Introduction}
	
	For each integer $k\ge1$, let $\mathbb M_k:=\mathbb C^{k\times k}$, let
	$\mathbb H_k\subset\mathbb M_k$ be the real vector space of Hermitian
	matrices, and let $\mathbb P_k\subset\mathbb H_k$ be the open cone of
	positive definite matrices.  For $Y\in\Mm$, we denote its Hilbert--Schmidt norm by
	\[
	\|Y\|_{\rm HS}:=\bigl(\Tr(Y^*Y)\bigr)^{1/2}.
	\] For $A,B\in\Pm$, their geometric mean is
	\[
	A\sharp B
	:=A^{1/2}\bigl(A^{-1/2}BA^{-1/2}\bigr)^{1/2}A^{1/2}.
	\]
	A normalized operator monotone function
	$f:(0,\infty)\to(0,\infty)$ then determines the alternative mean
	\[
	A\hatsigma_fB
	:=f(A^{-1}\sharp B)A f(A^{-1}\sharp B),
	\qquad A,B\in\Pm.
	\]
A general theory of this construction was developed in
	\cite{DumitruFrancoKimCzerwinska2026}.  It includes, for
	$0\le t\le1$, the weighted spectral geometric mean
	\[
	A\natural_tB=(A^{-1}\sharp B)^tA(A^{-1}\sharp B)^t,
	\]
	corresponding to $f(x)=x^t$, and, for $0\le s\le1$, the two-variable
	Bures--Wasserstein mean
	\[
	A\diamond_sB=((1-s)I+s(A^{-1}\sharp B))
	A((1-s)I+s(A^{-1}\sharp B)),
	\]
	corresponding to $f(x)=(1-s)+sx$.
The former originates in the spectral geometric mean of Fiedler and Pt\'ak
\cite{FiedlerPtak1997}; see also
\cite{KimLee2015,GanTam2024,GanKimSpectral2024}.  The latter is the
two-variable least-squares mean for the Bures--Wasserstein distance
\cite{BhatiaJainLim2019,HwangKim2022}.  Recent relations and common
extensions of these two means have been studied in
\cite{GanKimWasserstein2024,GanHuang2024,GanKimMer2026,
	DumitruFrancoKim2025}.
	
	In this paper, \emph{nontrivial} always means that
	$f\not\equiv1$ and $f\not\equiv x$.  If $s=f'(1)$, then a nontrivial
	normalized operator monotone function has $s\in(0,1)$. We associate with $A\hatsigma_fB$ the functional
	\[
	\altPhi_f(A,B)
	:=\Tr(A\nabla_sB)-\Tr(A\hatsigma_fB),
	\qquad s=f'(1).
	\]
	Our first result establishes a sharp global two-sided comparison between
	$\altPhi_f$ and the squared Bures--Wasserstein distance. Since
	$\altPhi_f$ is smooth, this comparison implies that it is a quantum
	divergence in the sense of Bhatia--Gaubert--Jain. We then compute its diagonal Hessian and identify the induced local metric,
	determine exactly when $\altPhi_f$ satisfies the data-processing
	inequality under all positive trace-preserving maps, and develop the
	associated barycenter theory.
	   We use the term \emph{quantum divergence} in the sense of
Bhatia--Gaubert--Jain \cite{BGJ2019}: it refers to a smooth nonnegative
functional that vanishes only on the diagonal, whose first derivative in the
second variable vanishes there, and whose diagonal Hessian is positive
semidefinite.   
	
	\subsection*{Relation to recent work}
	Kubo--Ando trace divergences and generalized Hellinger divergences benefit from
	perspective structures that yield convexity and data processing
	\cite{BGJ2019,PitrikVirosztek2020,OsakaShudo2025}.  Here the nonlinear variable
	$A^{-1}\sharp B$ instead occurs inside an alternative mean, so those
	mechanisms do not apply directly.  The underlying alternative means were
	developed in \cite{DumitruFrancoKimCzerwinska2026}; related algebraic and
	linearization questions appear in
	\cite{GanKimMer2026,DumitruFrancoMerino2026,KimMer2026}. A related Bures--Wasserstein-type construction is the
	$\alpha$-$z$-divergence of Dinh--Le--Vo--Vuong
	\cite{DinhLeVoVuong2021}, which is built from the
	$\alpha$-$z$ R\'enyi trace functional rather than from an alternative
	mean.  In its admissible parameter range, it is a quantum divergence
	satisfying data processing, and its associated least-squares problem was
	also solved there.  At $\alpha=z=1/2$, it reduces to
	$\tfrac12d_{\rm BW}^2$, the same divergence obtained here from
	$f(x)=\sqrt{x}$.
	
For the power family $f_t(x)=x^t$, Gan--Jeong--Kim
	\cite{GanJeongKim2025} studied the associated divergence and barycenter
	for $0<t\le1/2$ and stated the CPTP data-processing inequality throughout
	this range.  Subsequent work showed that this inequality fails away from
	the midpoint: Le--Vu--Ho--Dinh proved that DPI fails under qubit pinching
	for every $t\in(0,1)\setminus\{1/2\}$
	\cite[Theorem~5.4]{LeVuHoDinh2026}.  Hiai's connection between CPTP
	monotonicity and joint concavity, together with his necessary parameter
	bound for the corresponding spectral trace functional, also forces
	$t=1/2$ upon setting $(\alpha,p)=(t,1)$
	\cite[Theorem~5.3(1) and Proposition~5.19(1)]{Hiai2026}.
	
	The two-parameter $(k,t)$-spectral divergence of Vuong--Dinh unifies this
	power family at $k=1/2$ with the divergence of Jeong--Kim--Tam
	\cite{JKT2025} at $t=1/2$ \cite[Remark~3.7]{VuongDinh2026}.  For
	$k\in(0,1)$ and $0<t\le1/2$, its barycenter exists uniquely in the
	positive definite cone \cite[Theorem~3.20]{VuongDinh2026}.  Its
	specialization at $k=1/2$ also provides a faithful qubit counterexample
	to DPI under diagonal pinching at $t=0.3$
	\cite[Proposition~4.3(b)]{VuongDinh2026}.

Our results extend this picture in two directions.  For data processing,
we treat every normalized operator monotone function $f$, classify the
entire exact-DPI class even under all positive trace-preserving maps, and
show that diagonal pinching on faithful qubit states is a complete test.
For barycenters, we show that the power-family range $0<t\le1/2$ is exact:
for every $1/2<t<1$, two commuting positive definite $2\times2$ data
matrices with equal weights already have nonunique closed-cone minimizers.	
	\subsection*{Main results}
	The paper has three connected parts.  The first places $\altPhi_f$ in
	Bures--Wasserstein geometry.  With $R=A^{-1}\sharp B$, both quantities are
	traces against $A$ of scalar functions of the same matrix:
	\[
	\altPhi_f(A,B)
	=\Tr A\bigl((1-s)I+sR^2-f(R)^2\bigr),
	\qquad
	d_{\rm BW}(A,B)^2=\Tr A(R-I)^2.
	\]
	The arithmetic and harmonic extremals give the optimal fixed-weight
	comparison
	\[
	s(1-s)d_{\rm BW}^2\le\altPhi_f
	\le(1-s+s^2)d_{\rm BW}^2.
	\]
	We also determine the best constants for each individual $f$.  At the
	diagonal, the Hessian is
	$2(s(1-s)-f''(1))$ times the Bures--Wasserstein metric.  Thus, each nontrivial normalized operator monotone function $f$ gives
	rise to a quantum divergence $\altPhi_f$.
	
The second and central part concerns data processing.
Theorem~\ref{thm:quadratic-square-classification} proves that, for
$\altPhi_f$, exact DPI under every completely positive trace-preserving
map is equivalent to exact DPI under every positive trace-preserving map.
Thus enlarging the class of admissible maps does not change the resulting
classification of $f$.
Either formulation can be tested by diagonal pinching on faithful qubit
states and holds exactly for
\[
f(x)=\sqrt{\bigl((1-u)+ux\bigr)\bigl((1-v)+vx\bigr)},
\qquad u,v\in[0,1].
\]
Equivalently, $f^2$ is a polynomial of degree at most two and
$\altPhi_f$ is a constant multiple of $d_{\rm BW}^2$.  The same class
is characterized by joint convexity of $\altPhi_f$, or equivalently by
joint concavity of the associated trace fidelity.  For every $f$ outside
this class, data processing fails under diagonal pinching for a pair of
real-symmetric faithful qubit states.
	
The third part develops the associated barycenter problem.  Closed-cone
minimizers always exist, and interior critical points satisfy the
corresponding first-order barycenter equation.  Scalar data have a unique barycenter without an additional
	hypothesis.  In dimension $m$, a nonnegative, nonconstant, order-$m$
	matrix-concave residual $f(x)^2-cx^2$ guarantees a unique positive definite
	minimizer; an operator-monotone residual gives a dimension-free criterion.
For $f(x)=x^t$, every data set has a unique closed-cone minimizer that is
positive definite exactly when $0<t\le1/2$.  Above the midpoint, it fails already for two
	commuting positive definite $2\times2$ matrices.
	
	Section~\ref{sec:sharp-bw} proves the sharp comparison, and
Section~\ref{sec:hessian} identifies the second-order
Bures--Wasserstein geometry.
	Section~\ref{sec:dpi} proves the data-processing rigidity theorem, and Section~\ref{sec:barycenters} is devoted to the associated
	barycenter problem.
	
	\section{Definition and sharp Bures--Wasserstein bounds}
	\label{sec:sharp-bw}
	
	For $A,B\in\Pm$ and $t\in[0,1]$, write
	\[
	A\nabla_tB=(1-t)A+tB,
	\qquad
	A\sharp B
	=A^{1/2}(A^{-1/2}BA^{-1/2})^{1/2}A^{1/2}.
	\]
If
\[
R:=A^{-1}\sharp B
=A^{-1/2}(A^{1/2}BA^{1/2})^{1/2}A^{-1/2},
\]
then $R$ is the unique positive definite solution of the Riccati equation
\begin{equation}\label{eq:riccati}
	RAR=B.
\end{equation}
	
	For a normalized operator monotone function $f$, the Kubo--Ando
	harmonic--arithmetic bounds state that, with $s=f'(1)$,
	\begin{equation}\label{eq:harm-arith}
		h_s(x)\le f(x)\le a_s(x),
		\qquad
		h_s(x):=\bigl((1-s)+sx^{-1}\bigr)^{-1},
		\quad
		a_s(x):=(1-s)+sx.
	\end{equation}
	Here $0\le s\le1$; if $f$ is nontrivial, then $0<s<1$.  We refer to
	\cite{KuboAndo1980} for these standard facts.
	
	\begin{definition}\label{def:divergence}
		Let $f:(0,\infty)\to(0,\infty)$ be a nontrivial normalized operator
		monotone function and set $s=f'(1)$.  Define
		\begin{equation}\label{eq:divergence}
			\altPhi_f(A,B)
			:=(1-s)\Tr A+s\Tr B
			-\Tr\!\left(f(A^{-1}\sharp B)A f(A^{-1}\sharp B)\right).
		\end{equation}
		For the endpoint functions $f\equiv1$ and $f(x)=x$, we use the same
		formula whenever they arise; in both cases $\altPhi_f\equiv0$.
	\end{definition}
	
	We use the squared Bures--Wasserstein distance
	\[
	d_{\rm BW}(A,B)^2
	:=\Tr A+\Tr B
	-2\Tr(A^{1/2}BA^{1/2})^{1/2};
	\]
	see \cite{BhatiaJainLim2019}.
	
	\begin{proposition}
		\label{prop:master-representation}
		Let $f$ be a normalized operator monotone function, set $s=f'(1)$, and
		let $A,B\in\Pm$.  Put $R=A^{-1}\sharp B$.  Then
		\begin{align}
			d_{\rm BW}(A,B)^2
			&=\Tr\!\left(A(R-I)^2\right), \label{eq:bw-R}\\
			\altPhi_f(A,B)
			&=\Tr\!\left(
			A\bigl((1-s)I+sR^2-f(R)^2\bigr)
			\right). \label{eq:phi-R}
		\end{align}
	\end{proposition}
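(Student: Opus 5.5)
The plan is to reduce everything to the Riccati identity \eqref{eq:riccati}, $RAR=B$, together with cyclicity of the trace. Both formulas are statements about traces against $A$ of functions of the single positive definite matrix $R$. The only genuinely nonlinear quantity in either definition is $\Tr(A^{1/2}BA^{1/2})^{1/2}$, so the first step is to express it in terms of $R$.

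\emph{Step 1: the cross term.} By definition $R=A^{-1/2}(A^{1/2}BA^{1/2})^{1/2}A^{-1/2}$, so $(A^{1/2}BA^{1/2})^{1/2}=A^{1/2}RA^{1/2}$. Cyclicity then gives
\[
\Tr(A^{1/2}BA^{1/2})^{1/2}=\Tr(A^{1/2}RA^{1/2})=\Tr(AR).
\]

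\emph{Step 2: the second variable.} From \eqref{eq:riccati} and cyclicity, $\Tr B=\Tr(RAR)=\Tr(AR^2)$.

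\emph{Step 3: assemble \eqref{eq:bw-R}.} Substituting Steps 1 and 2 into the definition of $d_{\rm BW}$ gives
\[
d_{\rm BW}(A,B)^2=\Tr A+\Tr(AR^2)-2\Tr(AR)=\Tr\bigl(A(I-2R+R^2)\bigr)=\Tr\bigl(A(R-I)^2\bigr).
\]
This uses only that $R$ commutes with itself, so $(R-I)^2=R^2-2R+I$.

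\emph{Step 4: assemble \eqref{eq:phi-R}.} In \eqref{eq:divergence}, replace $\Tr B$ by $\Tr(AR^2)$ using Step 2. For the last term, cyclicity gives $\Tr\bigl(f(R)Af(R)\bigr)=\Tr\bigl(Af(R)^2\bigr)$. Linearity of $X\mapsto\Tr(AX)$ then yields the claimed formula; the endpoint cases $f\equiv1$ and $f(x)=x$ are covered by the same computation.

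There is no real obstacle here. The one point needing care is Step 1, namely identifying the square root $(A^{1/2}BA^{1/2})^{1/2}$ with $A^{1/2}RA^{1/2}$ directly from the stated formula for $R$, instead of appealing to the Riccati characterization.
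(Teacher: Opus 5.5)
Your proposal is correct and follows the paper's proof essentially step for step. The paper also uses the Riccati identity to get $\Tr B=\Tr(AR^2)$, reads off $(A^{1/2}BA^{1/2})^{1/2}=A^{1/2}RA^{1/2}$ from the explicit formula for $R$ to obtain the cross term $\Tr(AR)$, and uses cyclicity to rewrite $\Tr(f(R)Af(R))$ as $\Tr(Af(R)^2)$.
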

	
	\begin{proof}
		Equation \eqref{eq:riccati} gives
		$\Tr B=\Tr(AR^2)$.  Moreover,
		\[
		(A^{1/2}BA^{1/2})^{1/2}=A^{1/2}RA^{1/2},
		\]
		and hence its trace is $\Tr(AR)$, proving \eqref{eq:bw-R}.  Finally,
		cyclicity of the trace gives
		$\Tr(f(R)Af(R))=\Tr(Af(R)^2)$, which yields
		\eqref{eq:phi-R}.
	\end{proof}
	
The next theorem gives the optimal comparison constants uniformly over all
normalized operator monotone functions with the prescribed weight
$s=f'(1)$.

\begin{theorem}
		\label{thm:sharp-comparison}
		Let $f$ be a nontrivial normalized operator monotone function and set
		$s=f'(1)\in(0,1)$.  For every $A,B\in\Pm$,
		\begin{equation}\label{eq:endpoint-comparison}
			\altPhi_{a_s}(A,B)
			\le\altPhi_f(A,B)
			\le\altPhi_{h_s}(A,B),
		\end{equation}
		and
		\begin{equation}\label{eq:sharp-sandwich}
			s(1-s)d_{\rm BW}(A,B)^2
			\le\altPhi_f(A,B)
			\le(1-s+s^2)d_{\rm BW}(A,B)^2.
		\end{equation}
		Both constants in \eqref{eq:sharp-sandwich} are optimal uniformly over all
		normalized operator monotone functions with derivative $s$.
		Moreover, for $A\ne B$, equality in the lower bound of
		\eqref{eq:sharp-sandwich} holds if and only if $f=a_s$.  Equality in the
		upper bound for a given pair $A\ne B$ holds if and only if
		\[
		f=h_s,\qquad s\ne\frac12,
		\qquad
		\sigma(A^{-1}\sharp B)
		\subseteq\left\{1,\frac{s^2}{(1-s)^2}\right\},
		\]
		with $s^2/(1-s)^2$ occurring in the spectrum.
	\end{theorem}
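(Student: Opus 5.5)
The plan is to reduce everything to scalar inequalities via Proposition~\ref{prop:master-representation}. Write the spectral decomposition $R=A^{-1}\sharp B=\sum_j\lambda_jP_j$, with distinct eigenvalues $\lambda_j>0$ and spectral projections $P_j$. For any scalar function $g$ we then have
\[
\Tr\bigl(Ag(R)\bigr)=\sum_j g(\lambda_j)\,w_j,\qquad w_j:=\Tr(AP_j)>0,
\]
where $w_j>0$ because $A\in\Pm$. By \eqref{eq:bw-R} and \eqref{eq:phi-R}, both \eqref{eq:endpoint-comparison} and \eqref{eq:sharp-sandwich} follow once we have pointwise inequalities in $x>0$, with
\[
\phi_f(x):=(1-s)+sx^2-f(x)^2 .
\]
For \eqref{eq:endpoint-comparison}, the bounds \eqref{eq:harm-arith} give $0<h_s\le f\le a_s$. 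Squaring yields $\phi_{a_s}\le\phi_f\le\phi_{h_s}$ pointwise, and summing against the positive weights $w_j$ gives the result.

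For \eqref{eq:sharp-sandwich} I will compute the two extremal scalar functions directly. First, $\phi_{a_s}(x)=s(1-s)(x-1)^2$ by expanding. For the upper bound, write $h:=h_s(x)=x/\bigl((1-s)x+s\bigr)$, $a:=a_s(x)$ and $D(x):=(1-s)x+s$. Then
\[
a-h=\frac{s(1-s)(x-1)^2}{D(x)},\qquad
\phi_{h_s}=\phi_{a_s}+(a-h)(a+h).
\]
The required inequality $\phi_{h_s}\le(1-s+s^2)(x-1)^2$ therefore reduces, after removing the factor $(x-1)^2$, to
\[
s(1-s)\,\frac{a+h}{D(x)}\le(1-s)^2 .
\]
I expect this, once cleared of denominators, to become a perfect square proportional to $\bigl((1-s)x-s\bigr)^2$, possibly after dividing by a positive factor. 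That would produce exactly the equality point $x=s^2/(1-s)^2$. This algebra is the main computational step, and I expect it to be the main obstacle, since the stated equality case must come out of it exactly.

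For optimality, take commuting data, for example $A=I$ and $B=x^2I$, so that $R=xI$ and the ratio $\altPhi_f/d_{\rm BW}^2$ equals $\phi_f(x)/(x-1)^2$. The lower constant is attained by $f=a_s$ at every $x\ne1$. The upper constant is attained by $f=h_s$ at $x=s^2/(1-s)^2$ when $s\ne\tfrac12$. When $s=\tfrac12$ this point is $x=1$, and the constant is only approached as $x\to1$; there the ratio tends to the diagonal Hessian value $\tfrac12\phi_{h_s}''(1)$, which should equal $1-s+s^2$.

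For the equality cases, $A\ne B$ means that some eigenvalue $\lambda_j\ne1$, and equality in a sum with positive weights forces pointwise equality at every $\lambda_j$. For the lower bound this means $f(\lambda_j)=a_s(\lambda_j)$ for some $\lambda_j\ne1$. I will use the integral (L\"owner/Kubo--Ando) representation
\[
f=\int_{[0,1]}h_t\,d\mu(t),
\]
with $\mu$ a probability measure of mean $s$. Each $h_t$ is strictly concave in $t$ at any fixed $x\ne1$, so equality with $a_s$ at a single $x\ne1$ forces $\mu=\delta_s$, that is, $f=a_s$. For the upper bound, equality needs $f(\lambda_j)=h_s(\lambda_j)$, which by the same convexity argument forces $f=h_s$. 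It also needs each $\lambda_j$ to be a zero of the square, i.e. $\lambda_j\in\{1,s^2/(1-s)^2\}$, with at least one $\lambda_j\ne1$. This in turn forces $s\ne\tfrac12$.
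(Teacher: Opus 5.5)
Your architecture matches the paper's: you reduce to scalar functions of $R=A^{-1}\sharp B$ with positive weights, square $h_s\le f\le a_s$, use $\phi_{a_s}=s(1-s)(x-1)^2$, and test on the pairs $(I,x^2I)$. However, the step you flag as the main obstacle is set up incorrectly and is left undone. From $\phi_{h_s}=\phi_{a_s}+(a-h)(a+h)$ and $a-h=s(1-s)(x-1)^2/D$, dividing by $(x-1)^2$ gives the target
\[
s(1-s)\,\frac{a+h}{D}\le(1-s+s^2)-s(1-s)=(1-s)^2+s^2,
\]
not $\le(1-s)^2$. Your version is false. At $x=1$ it reads $2s(1-s)\le(1-s)^2$, which fails for $s>1/3$, and it fails for every $s$ as $x\downarrow0$. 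To do the step correctly, multiply by $D^2$ and use $hD=x$ and $aD=x+s(1-s)(x-1)^2$; the slack is then exactly $\bigl((1-s)^2x-s^2\bigr)^2$. This is the paper's identity \eqref{eq:harmonic-scalar}. Your predicted square $\bigl((1-s)x-s\bigr)^2$ vanishes at $s/(1-s)$, not at $s^2/(1-s)^2$, so it cannot produce the stated equality set. Since this identity carries both the upper bound and the zero set $\{1,s^2/(1-s)^2\}$, it has to be computed rather than anticipated. Once you have it, the converse direction of the upper equality case also follows immediately.

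Your lower-bound equality argument is wrong as written, in two ways. First, for $x\ne1$ the map $t\mapsto h_t(x)=\bigl((1-t)+tx^{-1}\bigr)^{-1}$ is strictly \emph{convex}, not concave. Second, $\mu=\delta_s$ represents $h_s$, not $a_s$. Jensen's inequality with strict convexity gives $f(x)\ge h_s(x)$, with equality iff $\mu=\delta_s$. That is the correct rigidity for the upper-bound case $f(\lambda_j)=h_s(\lambda_j)$, so that part of your argument survives. For the lower bound you need the chord inequality instead: $h_t(x)\le(1-t)h_0(x)+th_1(x)=a_t(x)$, strict for $t\in(0,1)$. Then
\[
0=a_s(x)-f(x)=\int_{[0,1]}\bigl(a_t(x)-h_t(x)\bigr)\,d\mu(t)
\]
forces $\mu$ to be carried by $\{0,1\}$, and the mean condition gives $\mu=(1-s)\delta_0+s\delta_1$, that is, $f=a_s$. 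With these two repairs your proof is valid. It then differs from the paper only in the rigidity step. The paper argues by chord/tangent-line concavity on an interval plus real analyticity, and passes to $f^\perp=x/f$ for the harmonic case. Your measure argument needs only a single point $x\ne1$ and treats both endpoints symmetrically, in the spirit of the paper's proof of Proposition~\ref{prop:gamma-moments}.
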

	
	\begin{proof}
		The three matrices $h_s(R),f(R),a_s(R)$ commute.  Thus
		\eqref{eq:harm-arith} implies
		\[
		h_s(R)^2\preceq f(R)^2\preceq a_s(R)^2,
		\]
		and \eqref{eq:phi-R} gives \eqref{eq:endpoint-comparison}.  Since
		\[
		(1-s)I+sR^2-a_s(R)^2=s(1-s)(R-I)^2,
		\]
		we obtain
		\begin{equation}\label{eq:arithmetic-endpoint}
			\altPhi_{a_s}(A,B)=s(1-s)d_{\rm BW}(A,B)^2.
		\end{equation}
		
		It remains to bound the harmonic endpoint.  For $x>0$,
		\[
		h_s(x)=\frac{x}{(1-s)x+s},
		\]
		and direct simplification gives
		\begin{align}
			&(1-s+s^2)(x-1)^2
			-\bigl((1-s)+sx^2-h_s(x)^2\bigr) \notag\\
			&\qquad
			=(x-1)^2
			\frac{\bigl((1-s)^2x-s^2\bigr)^2}
			{\bigl((1-s)x+s\bigr)^2}
			\ge0. \label{eq:harmonic-scalar}
		\end{align}
		Functional calculus and \eqref{eq:phi-R} therefore yield
		\[
		\altPhi_{h_s}(A,B)
		\le(1-s+s^2)\Tr A(R-I)^2,
		\]
		which proves \eqref{eq:sharp-sandwich}.
		
		The lower constant is attained identically by $f=a_s$.  For $f=h_s$,
		the pair $A=I_m,B=x^2I_m$ shows that the upper ratio reaches
		$1-s+s^2$ at $x=s^2/(1-s)^2$ when $s\ne1/2$; when $s=1/2$, the
		same ratio tends to $3/4$ as $x\to1$.  Hence the upper constant is also
		optimal.
		
	We finish with the equality statements.  Since
	\[
	a_s(R)^2-f(R)^2\succeq0
	\]
	and $A\succ0$, equality in the lower bound forces
	$a_s(R)^2=f(R)^2$.  If $A\ne B$, then
	$\sigma(R)$ contains some $r\ne1$, and positivity gives
	$f(r)=a_s(r)$.  On the interval with endpoints $1$ and $r$, the
	chord joining $(1,f(1))$ and $(r,f(r))$ is precisely $a_s$.
	Concavity places $f$ above this chord, whereas the tangent-line
	inequality gives $f\le a_s$.  Hence $f=a_s$ on that interval, and
	real analyticity yields $f=a_s$ on $(0,\infty)$.
	
	For the upper bound, equality forces equality in both steps
	\[
	\altPhi_f(A,B)
	\le \altPhi_{h_s}(A,B)
	\le (1-s+s^2)d_{\rm BW}(A,B)^2.
	\]
	Consequently,
	\[
	f(R)^2=h_s(R)^2,
	\]
	and equality holds in \eqref{eq:harmonic-scalar} throughout
	$\sigma(R)$.  Choose $r\in\sigma(R)\setminus\{1\}$.  Then
	$f(r)=h_s(r)$, or equivalently
	\[
	f^\perp(r)=a_{1-s}(r),
	\qquad
	f^\perp(x):=\frac{x}{f(x)}.
	\]
	The function $f^\perp$ is normalized operator monotone and satisfies
	$(f^\perp)'(1)=1-s$
	\cite[Section~3]{DumitruFrancoKimCzerwinska2026}.  Thus
	$f^\perp\le a_{1-s}$, and the same concavity argument shows that
	$f^\perp=a_{1-s}$ on an interval.  Real analyticity then gives
	$f^\perp=a_{1-s}$ on $(0,\infty)$, and hence $f=h_s$.
	
	The zero set in \eqref{eq:harmonic-scalar} is
	\[
	\left\{1,\frac{s^2}{(1-s)^2}\right\}.
	\]
	Therefore the asserted spectral condition is necessary.  Conversely, if
	$f=h_s$ and this spectral condition holds, functional calculus in
	\eqref{eq:harmonic-scalar} gives equality in the upper bound.  When
	$s=1/2$, the two points coincide, so no nontrivial pair attains the
	upper constant.
	\end{proof}
	
Fix a nontrivial normalized operator monotone function $f$, set
$s=f'(1)$, and define its comparison profile by
\begin{equation}\label{eq:rho-f}
	\rho_f(x):=
	\begin{cases}
		\displaystyle
		\frac{(1-s)+sx^2-f(x)^2}{(x-1)^2},
		& x\ne1,\\[2mm]
		s(1-s)-f''(1),
		& x=1.
	\end{cases}
\end{equation}
The optimal comparison constants admit the following exact description.

\begin{theorem}
	\label{thm:individual-constants}
	The best constants in
	\begin{equation}\label{eq:individual-comparison}
		c_f d_{\rm BW}(A,B)^2
		\le \altPhi_f(A,B)
		\le C_f d_{\rm BW}(A,B)^2
	\end{equation}
	valid for all positive definite matrices $A,B$ are
	\[
	c_f=\inf_{x>0}\rho_f(x),
	\qquad
	C_f=\sup_{x>0}\rho_f(x).
	\]
	Moreover,
	\[
	s(1-s)\le c_f\le C_f\le1-s+s^2.
	\]
	If $A\ne B$ and $R=A^{-1}\sharp B$, equality in the lower,
	respectively upper, bound holds if and only if
	\[
	\sigma(R)\setminus\{1\}
	\subseteq\operatorname*{argmin}_{x>0}\rho_f(x),
	\qquad\text{respectively}\qquad
	\sigma(R)\setminus\{1\}
	\subseteq\operatorname*{argmax}_{x>0}\rho_f(x).
	\]
\end{theorem}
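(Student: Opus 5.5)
The proof rests on the representation in Proposition~\ref{prop:master-representation}. With $R=A^{-1}\sharp B$ and $g(x):=(1-s)+sx^2-f(x)^2$, we have $g(x)=\rho_f(x)(x-1)^2$ for all $x>0$. This holds for $x\ne1$ by definition. At $x=1$ both sides vanish, since $g(1)=0$ and $g'(1)=2s-2f'(1)=0$. Taylor expansion gives $g''(1)/2=s-(f'(1)^2+f''(1))=s(1-s)-f''(1)$, so $\rho_f$ is continuous on $(0,\infty)$. Diagonalize $R=\sum_j r_jP_j$ with spectral projections $P_j$. Then
\[
\altPhi_f(A,B)=\sum_j \rho_f(r_j)(r_j-1)^2\Tr(AP_j),
\qquad
d_{\rm BW}(A,B)^2=\sum_j (r_j-1)^2\Tr(AP_j),
\]
and the weights $w_j:=(r_j-1)^2\Tr(AP_j)$ are nonnegative, vanishing exactly when $r_j=1$. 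This immediately gives \eqref{eq:individual-comparison} with $c_f=\inf\rho_f$ and $C_f=\sup\rho_f$. If $c_f=-\infty$ or $C_f=+\infty$ there would be nothing to prove, but the bounds below rule this out.

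For optimality, take $A=I_m$ and $B=x^2I_m$. Then $R=xI$, and the ratio $\altPhi_f/d_{\rm BW}^2$ equals $\rho_f(x)$ for $x\ne1$. For $x=1$ we use continuity and take $x\to1$. So no constant better than $\inf\rho_f$ or $\sup\rho_f$ can work.

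For the chain $s(1-s)\le c_f\le C_f\le1-s+s^2$, apply the scalar inequalities from the proof of Theorem~\ref{thm:sharp-comparison}. From $h_s\le f\le a_s$ and the identity $(1-s)+sx^2-a_s(x)^2=s(1-s)(x-1)^2$ we get $\rho_f(x)\ge s(1-s)$. From \eqref{eq:harmonic-scalar} we get $\rho_f(x)\le\rho_{h_s}(x)\le1-s+s^2$ for $x\ne1$, and the value at $x=1$ follows by continuity. Hence $\rho_f$ is bounded and the extremal constants are finite.

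For the equality cases, assume $A\ne B$, so some $r_j\ne1$ and $d_{\rm BW}^2=\sum_j w_j>0$. Equality in the lower bound means $\sum_j(\rho_f(r_j)-c_f)w_j=0$. Each term is nonnegative, and $w_j>0$ exactly when $r_j\ne1$ (because $\Tr(AP_j)>0$, as $A\succ0$ and $P_j\ne0$). So equality holds iff $\rho_f(r_j)=c_f$ for every $r_j\in\sigma(R)\setminus\{1\}$. Since the infimum is then attained, this is the argmin condition. The upper bound is handled the same way. If the argmin or argmax is empty, the condition can never hold, which is consistent with equality never occurring.

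The only delicate points are the continuity and the value of $\rho_f$ at $1$. This value is needed to justify the infimum and supremum over all $x>0$, including $x=1$, through the approximating pairs, and it requires smoothness of $f$, which holds by operator monotonicity.
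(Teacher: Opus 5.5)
Your proposal is correct and follows essentially the same route as the paper: the spectral decomposition of $R$ makes $\altPhi_f/d_{\rm BW}^2$ a weighted average of $\rho_f$ over $\sigma(R)$, with weights vanishing exactly at the eigenvalue $1$. As in the paper, the scalar pairs $(I_m,x^2I_m)$ give optimality, and the Kubo--Ando sandwich gives the outer bounds. The only cosmetic differences are that you weight by $\Tr(AP_j)$ rather than by $(U^*AU)_{ii}$, and that you prove $s(1-s)\le\rho_f\le1-s+s^2$ pointwise rather than deducing it from the admissibility of the constants in Theorem~\ref{thm:sharp-comparison}.
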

	
\begin{proof}
	Since $f(1)=1$ and $f'(1)=s$, Taylor expansion at $x=1$
	gives
	\[
	f(x)
	=1+s(x-1)+\frac12f''(1)(x-1)^2
	+o\bigl((x-1)^2\bigr).
	\]
	Consequently,
	\[
	f(x)^2
	=1+2s(x-1)
	+\bigl(s^2+f''(1)\bigr)(x-1)^2
	+o\bigl((x-1)^2\bigr),
	\]
	whereas
	\[
	(1-s)+sx^2
	=1+2s(x-1)+s(x-1)^2.
	\]
	It follows that
	\[
	(1-s)+sx^2-f(x)^2
	=
	\bigl(s(1-s)-f''(1)\bigr)(x-1)^2
	+o\bigl((x-1)^2\bigr).
	\]
	Hence
	\[
	\lim_{x\to1}\rho_f(x)
	=s(1-s)-f''(1),
	\]
	which proves that the prescribed value at $x=1$ gives a continuous
	extension of $\rho_f$.
	
	The comparison is immediate when $A=B$, so suppose that $A\ne B$.
	Put $R=A^{-1}\sharp B$, diagonalize
	\[
	R=U\operatorname{diag}(r_1,\ldots,r_m)U^*,
	\]
	and set
	\[
	\alpha_i:=(U^*AU)_{ii}>0.
	\]
	Using \eqref{eq:bw-R}, \eqref{eq:phi-R}, and the definition of
	$\rho_f$, we obtain
	\begin{align}
		\frac{\altPhi_f(A,B)}{d_{\rm BW}(A,B)^2}
		&=
		\frac{
			\sum_{i=1}^m
			\alpha_i\bigl((1-s)+sr_i^2-f(r_i)^2\bigr)
		}{
			\sum_{i=1}^m\alpha_i(r_i-1)^2
		}\notag\\
		&=
		\frac{
			\sum_{i=1}^m
			\alpha_i(r_i-1)^2\rho_f(r_i)
		}{
			\sum_{i=1}^m\alpha_i(r_i-1)^2
		}.
		\label{eq:rho-weighted-average}
	\end{align}
	The denominator is positive because $A\ne B$, and the weights
	$\alpha_i(r_i-1)^2$ are nonnegative. Thus
	\eqref{eq:rho-weighted-average} is a weighted average of the values
	$\rho_f(r_i)$, and therefore
	\[
	\inf_{x>0}\rho_f(x)
	\le
	\frac{\altPhi_f(A,B)}{d_{\rm BW}(A,B)^2}
	\le
	\sup_{x>0}\rho_f(x).
	\]
	This proves \eqref{eq:individual-comparison} with
	\[
	c_f=\inf_{x>0}\rho_f(x),
	\qquad
	C_f=\sup_{x>0}\rho_f(x).
	\]
	
	The weights in \eqref{eq:rho-weighted-average} are positive precisely
	for those indices $i$ for which $r_i\ne1$. Hence equality in the
	lower bound holds if and only if
	\[
	\rho_f(r_i)=c_f
	\qquad\text{whenever }r_i\ne1,
	\]
	or equivalently,
	\[
	\sigma(R)\setminus\{1\}
	\subseteq\operatorname*{argmin}_{x>0}\rho_f(x).
	\]
	The same argument gives the stated characterization of equality in the
	upper bound using $\operatorname*{argmax}\rho_f$.
	
	It remains to prove that the constants cannot be improved. For any
	$x>0$, $x\ne1$, take
	\[
	A=I_m,
	\qquad
	B=x^2I_m.
	\]
	Then $A^{-1}\sharp B=xI_m$, and
	\eqref{eq:rho-weighted-average} reduces to
	\[
	\frac{\altPhi_f(A,B)}{d_{\rm BW}(A,B)^2}
	=\rho_f(x).
	\]
	Choosing sequences $x_n>0$, $x_n\ne1$, along which
	$\rho_f(x_n)$ approaches its infimum or supremum shows that neither
	constant can be improved, even in the fixed dimension $m$.
	
	Finally, Theorem~\ref{thm:sharp-comparison} shows that
	$s(1-s)$ is an admissible lower constant and $1-s+s^2$ is an
	admissible upper constant. Therefore
	\[
	s(1-s)\le c_f\le C_f\le1-s+s^2,
	\]
	completing the proof.
\end{proof}
	
	\begin{corollary}
		\label{cor:power-sharp-comparison}
		For $0<t<1$ and $f_t(x)=x^t$,
		\begin{equation}\label{eq:power-sharp-comparison}
			\min\{t,1-t\}\,d_{\rm BW}(A,B)^2
			\le \altPhi_{f_t}(A,B)
			\le \max\{t,1-t\}\,d_{\rm BW}(A,B)^2.
		\end{equation}
Both constants are optimal for the fixed power function $f_t(x)=x^t$,
	already among pairs of scalar multiples of the identity.   If $t\ne1/2$, both inequalities are
		strict whenever $A\ne B$; for $t=1/2$,
		$\altPhi_{f_{1/2}}=\tfrac12d_{\rm BW}^2$.
	\end{corollary}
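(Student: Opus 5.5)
The plan is to apply Theorem~\ref{thm:individual-constants} with $f=f_t$, $s=f_t'(1)=t$. The whole corollary then reduces to a scalar statement: for $x>0$, $x\ne1$,
\[
\rho_{f_t}(x)=\frac{(1-t)+tx^2-x^{2t}}{(x-1)^2}
\]
has infimum $\min\{t,1-t\}$ and supremum $\max\{t,1-t\}$, and neither value is attained at any $x\ne1$ when $t\ne1/2$. Optimality among scalar multiples of the identity comes directly from the proof of Theorem~\ref{thm:individual-constants}: for $A=I_m$ and $B=x^2I_m$ the ratio equals $\rho_{f_t}(x)$. The strictness claim follows from the equality characterization there. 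If $A\ne B$, then $\sigma(R)\setminus\{1\}$ is nonempty, so equality would force some $r\ne1$ into the argmin or argmax of $\rho_{f_t}$ over $x>0$.

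For the scalar analysis, substitute $y=x^2$ and write
\[
g(y)=(1-t)+ty-y^t,
\]
which is the nonnegative Young/weighted AM--GM gap. Then $\rho_{f_t}(x)=g(x^2)/(x-1)^2$. The limiting values are:
\begin{itemize}
\item at $x=1$: $t(1-t)-t(t-1)=2t(1-t)$;
\item as $x\to0$: $1-t$;
\item as $x\to\infty$: $t$, since $tx^2/x^2\to t$ and the $x^{2t}$ term is negligible because $2t<2$.
\end{itemize}
Since $2t(1-t)$ lies between $\min$ and $\max$ of $\{t,1-t\}$, the goal is to show that $\rho_{f_t}$ is strictly monotone on $(0,\infty)$. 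Then the infimum and supremum are the two endpoint limits $1-t$ and $t$, in some order, and neither is attained.

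I would first try to prove monotonicity directly by showing that
\[
h(x):=(1-t)+tx^2-x^{2t}-c(x-1)^2
\]
has a fixed sign for $c=t$ and for $c=1-t$. Take $t<1/2$, the other case being symmetric under $x\mapsto1/x$ (up to the weights, since $x^2\rho(1/x)$-type identities swap $t$ and $1-t$). With $c=t$,
\[
h(x)=(1-2t)+2tx-x^{2t}.
\]
This is strictly concave in the variable $x$ after negation: $x^{2t}$ is concave, so $h$ is convex, with $h(1)=0$ and $h'(1)=2t-2t=0$. Hence $h>0$ for $x\ne1$, which gives $\rho_{f_t}>t$.

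With $c=1-t$,
\[
h(x)=(2t-1)x^2+2(1-t)x-x^{2t}.
\]
I need $h<0$ for $x\ne1$, that is,
\[
x^{2t}>2(1-t)x-(1-2t)x^2 .
\]
Dividing by $x^2$ and setting $u=1/x$, this becomes
\[
u^{2-2t}>2(1-t)u-(1-2t).
\]
That is the tangent-line inequality for the convex function $u^{2-2t}$ at $u=1$ (convex since $2-2t>1$), and it is strict for $u\ne1$. This gives $\rho_{f_t}<1-t$.

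These two estimates already give the strict bounds and the sharp values without proving monotonicity at all, since the endpoint limits show the constants are approached. The only delicate point is checking these convexity signs, which is the main (mild) obstacle. For $t=1/2$ we have $g(x^2)=\tfrac12(x-1)^2$, so $\rho\equiv1/2$ and $\altPhi_{f_{1/2}}=\tfrac12d_{\rm BW}^2$ by \eqref{eq:phi-R} and \eqref{eq:bw-R}.
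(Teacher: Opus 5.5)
Your proposal is correct and follows essentially the paper's proof. Your two sign conditions on $h$ are exactly the paper's two weighted AM--GM inequalities, since your tangent-line and convexity arguments express the same fact. You combine them, as the paper does, with Theorem~\ref{thm:individual-constants}, the pair $(I_m,x^2I_m)$ with $x\to\infty$ and $x\downarrow0$ for optimality, and the reflection $x\mapsto1/x$ for $t>1/2$. Two minor points: the reflection step is cleanest as the exact identity $\rho_{f_t}(1/x)=\rho_{f_{1-t}}(x)$, with no extra factor of $x^2$, and your opening monotonicity goal is, as you note yourself, unnecessary.
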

	
	\begin{proof}
		It is enough to treat $0<t\le1/2$.  For $x>0$, weighted
		arithmetic--geometric mean gives
		\begin{align*}
			&(1-t)+tx^2-x^{2t}-t(x-1)^2
			=(1-2t)+2tx-x^{2t}\ge0,\\
			&(1-t)(x-1)^2-\bigl((1-t)+tx^2-x^{2t}\bigr)\\
			&\qquad=(1-2t)x^2-2(1-t)x+x^{2t}\ge0.
		\end{align*}
		For $0<t<1/2$, both inequalities are strict whenever $x\ne1$.
		For $t=1/2$, both become identities and
		$\rho_{f_{1/2}}(x)\equiv1/2$.  Optimality for the fixed power function follows by letting
		$x\to\infty$ for the lower constant and $x\downarrow0$ for the upper
		constant for the pair $(I_m,x^2I_m)$.  The case $t>1/2$ follows from
		\[
		(1-t)+tx^2-x^{2t}
		=x^2\left(t+(1-t)x^{-2}-x^{-2(1-t)}\right).
		\]
	\end{proof}

\section{Second-order Bures--Wasserstein geometry}
\label{sec:hessian}
	
For \(A\in\Pm\) and \(Y\in\Hm\), let \(\mathcal L_A(Y)\) denote the
unique Hermitian solution of the Lyapunov equation
\[
A\mathcal L_A(Y)+\mathcal L_A(Y)A=Y;
\]
see \cite[Section~4, Eqs.~(30)--(31)]{BhatiaJainLim2019}.
The Bures--Wasserstein Riemannian metric is
\begin{equation}\label{eq:bw-metric}
	g_A^{\rm BW}(Y,Z)
	:=\frac12\operatorname{Re}\Tr\!\left(\mathcal L_A(Y)Z\right);
\end{equation}
see \cite[Section~4, Eqs.~(29)--(32) and Theorem~5]
{BhatiaJainLim2019}; cf.\
\cite[Section~4, Propositions~6--7]{MalagoMontrucchioPistone2018}
for the real-symmetric formulation. If \(H=\mathcal L_A(Y)\), then
\(Y=AH+HA\), and hence
\begin{equation}\label{eq:bw-metric-diagonal}
	g_A^{\rm BW}(Y,Y)
	=\frac12\operatorname{Re}\Tr\!\bigl(H(AH+HA)\bigr)
	=\Tr(AH^2).
\end{equation}
	
	\begin{theorem}
		\label{thm:hessian}
		Let $f$ be a nontrivial normalized operator monotone function, put
		$s=f'(1)$, and
		define
		\[
		\gamma_f:=s(1-s)-f''(1).
		\]
		Then $\gamma_f>0$, and for $A\in\Pm$, $Y,Z\in\Hm$,
		\begin{equation}\label{eq:hessian}
			\left.D_B^2\altPhi_f(A,B)\right|_{B=A}[Y,Z]
			=2\gamma_f\,g_A^{\rm BW}(Y,Z).
		\end{equation}
		Equivalently,
		\[
		\altPhi_f(A,A+tY)
		=\gamma_f t^2g_A^{\rm BW}(Y,Y)+o(t^2).
		\]
	\end{theorem}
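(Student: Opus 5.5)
The plan is to expand $\altPhi_f(A,A+tY)$ to second order in $t$ using the representation \eqref{eq:phi-R} from Proposition~\ref{prop:master-representation}, then recover the bilinear Hessian by polarization. Put $B_t=A+tY$ and $R_t=A^{-1}\sharp B_t$. Then $R_0=I$, and $R_t$ depends smoothly (indeed real-analytically) on $t$ near $0$, because it is built from square roots of positive definite matrices. Differentiating the Riccati equation \eqref{eq:riccati}, $R_tAR_t=B_t$, at $t=0$ gives $\dot R A+A\dot R=Y$, so $\dot R_0=\mathcal L_A(Y)=:H$. Hence $R_t-I=tH+O(t^2)$.

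Next, expand the scalar factor. By the proof of Theorem~\ref{thm:individual-constants},
\[
(1-s)+sx^2-f(x)^2=\gamma_f(x-1)^2+o\bigl((x-1)^2\bigr)
\quad\text{as } x\to1.
\]
Write this as $\gamma_f(x-1)^2+(x-1)^3k(x)$ with $k$ smooth near $1$; this is possible since $f$ is real analytic. Functional calculus applied to $R_t$ then gives
\[
\altPhi_f(A,B_t)=\gamma_f\Tr\bigl(A(R_t-I)^2\bigr)+\Tr\bigl(A(R_t-I)^3k(R_t)\bigr).
\]
The second term is $O(\|R_t-I\|^3)=O(t^3)$. The first term equals $\gamma_f t^2\Tr(AH^2)+O(t^3)$, and by \eqref{eq:bw-metric-diagonal} this is $\gamma_f t^2 g_A^{\rm BW}(Y,Y)+O(t^3)$. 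This proves the expansion form of \eqref{eq:hessian}.

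The bilinear statement then follows. Since $\altPhi_f(A,\cdot)$ is smooth, the expansion identifies the quadratic form $D_B^2\altPhi_f(A,A)[Y,Y]=2\gamma_f g_A^{\rm BW}(Y,Y)$. Both sides are symmetric real bilinear forms on $\Hm$, and symmetry of $g^{\rm BW}_A$ follows from $\operatorname{Re}\Tr(\mathcal L_A(Y)Z)=\operatorname{Re}\Tr(\mathcal L_A(Y)(A\mathcal L_A(Z)+\mathcal L_A(Z)A))$, which is symmetric in $Y,Z$. Polarization therefore gives \eqref{eq:hessian}. A cleaner route avoids the $o(\cdot)$ remainder: apply the same expansion directly to $D_t^2|_{t=0}$. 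The first derivative vanishes because the scalar function has a double zero at $x=1$.

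It remains to show $\gamma_f>0$. Theorem~\ref{thm:sharp-comparison} gives $\rho_f(x)\ge s(1-s)>0$ for $x\ne1$, and by continuity $\gamma_f=\rho_f(1)\ge s(1-s)>0$. Thus $\gamma_f>0$ comes directly from the lower comparison; equivalently, $f''(1)\le0$ by concavity of $f$.

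I expect the main obstacle to be justifying the smooth dependence of $R_t$ on $t$ together with the $O(t^3)$ remainder control in the matrix (noncommutative) setting. The factorization $(x-1)^3k(x)$ handles the remainder, because every term is a function of the single matrix $R_t$, and smoothness of $R_t$ follows from the smoothness of the matrix square root on $\Pm$.
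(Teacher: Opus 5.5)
Your proof is correct. It differs from the paper's mainly in how the second-order bookkeeping is organized.

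The paper expands $R_t=I+tH+t^2K+O(t^3)$ to second order and reads off the trace identity $2\Tr(AK)=-\Tr(AH^2)$ from the $t^2$-coefficient of the Riccati equation. It then expands $q=f^2$ on its own and cancels the constant and linear terms against the affine part $(1-s)\Tr A+s\Tr B_t$.

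You instead combine the affine part with $f^2$ from the start, via \eqref{eq:phi-R}. The scalar function $\phi(x)=(1-s)+sx^2-f(x)^2$ then has a double zero at $x=1$ with $\tfrac12\phi''(1)=\gamma_f$. Writing $\phi(x)=\gamma_f(x-1)^2+(x-1)^3k(x)$ gives $\altPhi_f(A,B)=\gamma_f\,d_{\rm BW}(A,B)^2+O(\|R-I\|^3)$. Only the first-order coefficient $H=\mathcal L_A(Y)$ is needed; the second-order coefficient $K$ never enters, because it would be multiplied by $\phi'(1)=0$.

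What your route buys: it is shorter, gives an explicit $O(t^3)$ remainder, and shows directly that the Hessian constant is the value $\rho_f(1)$ of the comparison profile from Theorem~\ref{thm:individual-constants}. Your positivity argument via the lower comparison also gives the sharper bound $\gamma_f\ge s(1-s)$ at once (cf.\ Proposition~\ref{prop:gamma-moments}), whereas the paper only uses $f''(1)\le0$.

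What the paper's route buys: it does not depend on the earlier comparison machinery. It also yields, as a by-product, the separate second-order expansion of $\Tr(Aq(R_t))$ and of $\Tr(AR_t)$.
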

	
	\begin{proof}
		Operator monotonicity implies concavity, so $f''(1)\le0$.  Since
		$0<s<1$, this already gives $\gamma_f>0$.
		
		For small real $t$, set $B_t=A+tY$ and
		$R_t=A^{-1}\sharp B_t$.  Since the principal square-root map is real analytic on $\Pm$, the map
		\[
		t\longmapsto R_t
		=A^{-1/2}\bigl(A^{1/2}B_tA^{1/2}\bigr)^{1/2}A^{-1/2}
		\]
		is real analytic near $t=0$.  Its Taylor expansion in operator norm has
		the form
		\[
		R_t=I+tH+t^2K+O(t^3).
		\]
		Substituting this expansion into the Riccati equation
		$R_tAR_t=B_t$ and comparing the coefficients of $t$ and $t^2$
		gives
		\begin{equation}\label{eq:riccati-expansion}
			AH+HA=Y,
			\qquad
			AK+KA+HAH=0.
		\end{equation}
		Thus $H=\mathcal L_A(Y)$, and taking the trace in the second identity yields
		\begin{equation}\label{eq:trace-K}
			2\Tr(AK)=-\Tr(AH^2).
		\end{equation}
		
		Put $q(x):=f(x)^2$.  Since
		\[
		q(1)=1,\qquad q'(1)=2s,\qquad
		\frac12q''(1)=f''(1)+s^2,
		\]
		functional calculus at the scalar matrix $I$ gives
		\[
		q(R_t)
		=I+2stH
		+t^2\!\left(2sK+\bigl(f''(1)+s^2\bigr)H^2\right)
		+O(t^3).
		\]
		Using \eqref{eq:trace-K},
		\[
		\Tr(Aq(R_t))
		=\Tr A+2st\Tr(AH)
		+t^2\bigl(f''(1)+s^2-s\bigr)\Tr(AH^2)
		+O(t^3).
		\]
		On the other hand,
		\[
		(1-s)\Tr A+s\Tr B_t
		=\Tr A+st\Tr Y
		=\Tr A+2st\Tr(AH).
		\]
		The constant and linear terms therefore cancel in
		\eqref{eq:divergence}, leaving
		\[
		\altPhi_f(A,A+tY)
		=\bigl(s(1-s)-f''(1)\bigr)t^2\Tr(AH^2)+O(t^3).
		\]
		Now use \eqref{eq:bw-metric-diagonal}.  The map $t\mapsto R_t$ is real
		analytic near zero, which justifies the norm expansions above.  Finally,
		polarization gives \eqref{eq:hessian} for $Y,Z$.
	\end{proof}
	
	\begin{remark}
		If $A=\operatorname{diag}(a_1,\ldots,a_m)$, then
		\[
		\mathcal L_A(Y)_{ij}=\frac{Y_{ij}}{a_i+a_j},
		\qquad
		g_A^{\rm BW}(Y,Y)
		=\frac12\sum_{i,j=1}^m\frac{|Y_{ij}|^2}{a_i+a_j}.
		\]
		Thus the diagonal value of the Hessian is
		\[
		\left.D_B^2\altPhi_f(A,B)\right|_{B=A}[Y,Y]
		=\gamma_f\sum_{i,j=1}^m\frac{|Y_{ij}|^2}{a_i+a_j}.
		\]
	\end{remark}
	
\begin{corollary}\label{cor:quantum-divergence}
	For every nontrivial normalized operator monotone function $f$, the
	functional $\altPhi_f$ is a quantum divergence on $\Pm$ in the sense
	of Bhatia--Gaubert--Jain \cite{BGJ2019}. It is smooth and nonnegative,
	vanishes exactly on the diagonal, and has zero first derivative there in
	either variable. In fact, its second-variable Hessian at the diagonal is
	positive definite.
\end{corollary}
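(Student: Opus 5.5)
The plan is to check the four defining properties of a Bhatia--Gaubert--Jain quantum divergence one at a time, and to obtain each from results already proved rather than by fresh computation.

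\emph{Smoothness.} The maps $(A,B)\mapsto A^{1/2}$, $(A,B)\mapsto A^{-1/2}$ and the principal square root are real analytic on $\Pm$. Hence
\[
R=A^{-1/2}(A^{1/2}BA^{1/2})^{1/2}A^{-1/2}
\]
depends real analytically on $(A,B)$. The function $f$ is operator monotone, so it extends holomorphically to $\mathbb C\setminus(-\infty,0]$. Since $\sigma(R)\subset(0,\infty)$, the map $R\mapsto f(R)$ is real analytic on $\Pm$, for instance via the Riesz--Dunford integral. By \eqref{eq:divergence}, $\altPhi_f$ is therefore real analytic, and in particular smooth, on $\Pm\times\Pm$.

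\emph{Nonnegativity and vanishing exactly on the diagonal.} Both follow from the lower bound in \eqref{eq:sharp-sandwich}:
\[
\altPhi_f(A,B)\ge s(1-s)\,d_{\rm BW}(A,B)^2 .
\]
Here $s(1-s)>0$ because $f$ is nontrivial, and $d_{\rm BW}$ is a metric, so $d_{\rm BW}(A,B)=0$ forces $A=B$. Conversely, if $A=B$ then $R=I$. By \eqref{eq:phi-R}, $\altPhi_f(A,A)=\Tr A\,((1-s)+s-1)=0$.

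\emph{First derivatives at the diagonal.} The function $\altPhi_f$ is smooth, nonnegative, and equals $0$ on the diagonal. So every diagonal point $(A,A)$ is a global minimum of $\altPhi_f$ on the open set $\Pm\times\Pm$. At such a point the full gradient vanishes, so both partial derivatives vanish.

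As a cross-check one can use the two-sided bound $\altPhi_f\le(1-s+s^2)d_{\rm BW}^2$. Along the line $(A,A+tY)$ one has $d_{\rm BW}(A,A+tY)^2=O(t^2)$, which follows from Theorem~\ref{thm:hessian} with $f=a_s$ together with \eqref{eq:arithmetic-endpoint}. This again gives $\altPhi_f(A,A+tY)=O(t^2)$, and the same holds with the roles of $A$ and $B$ exchanged.

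\emph{Hessian.} Theorem~\ref{thm:hessian} gives
\[
D_B^2\altPhi_f(A,B)\big|_{B=A}=2\gamma_f\,g_A^{\rm BW},\qquad \gamma_f>0.
\]
By \eqref{eq:bw-metric-diagonal}, $g_A^{\rm BW}(Y,Y)=\Tr(AH^2)$ with $H=\mathcal L_A(Y)$. This is strictly positive for $Y\neq0$, because $\mathcal L_A$ is injective: $Y=AH+HA$. The Hessian is therefore positive definite, which is stronger than the positive semidefiniteness that the definition requires.

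The only step needing real care is smoothness, namely justifying real analyticity of the nested functional calculus. I expect the holomorphic-extension argument above to handle it directly. The remaining properties are immediate consequences of Theorems~\ref{thm:sharp-comparison} and~\ref{thm:hessian}.
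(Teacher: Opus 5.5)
Your proposal is correct and follows essentially the same route as the paper. You get smoothness from analytic functional calculus, nonnegativity and vanishing exactly on the diagonal from the lower bound in \eqref{eq:sharp-sandwich}, vanishing first derivatives from the minimum at the diagonal, and positive definiteness of the Hessian from Theorem~\ref{thm:hessian} with $\gamma_f>0$; your explicit check that $\Tr(AH^2)>0$ for $Y\ne0$ simply spells out the paper's remark that $g_A^{\rm BW}$ is a Riemannian metric.
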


\begin{proof}
	Smoothness follows from analytic functional calculus.  The lower bound in
	\eqref{eq:sharp-sandwich} gives nonnegativity and shows that equality holds
	exactly when $A=B$.  For each fixed $A$, the map
	$B\mapsto\altPhi_f(A,B)$ therefore has a minimum at $B=A$, so its first
	derivative vanishes there.  The same argument, with $B$ fixed, gives
	vanishing of the first derivative in the first variable.  Finally,
	Theorem~\ref{thm:hessian} gives
	\[
	\left.D_B^2\altPhi_f(A,B)\right|_{B=A}
	=2\gamma_f g_A^{\rm BW},
	\]
	which is positive definite because $\gamma_f>0$ and
	$g_A^{\rm BW}$ is a Riemannian metric.
\end{proof}

	The coefficient $\gamma_f$ appearing in the diagonal
	Bures--Wasserstein Hessian can be estimated sharply directly from the
	Kubo--Ando representing measure.  For $\lambda\in[0,1]$, define
	\[
	h_\lambda(x)
	:=\bigl((1-\lambda)+\lambda x^{-1}\bigr)^{-1},
	\qquad x>0.
	\]
	The Kubo--Ando representation theorem gives a unique Borel measure
	$\mu_f$ on $[0,1]$ of total mass one such that
	\begin{equation}\label{eq:representing-measure}
		f(x)=\int_{[0,1]}h_\lambda(x)\,d\mu_f(\lambda),
		\qquad
		s=f'(1)=\int_{[0,1]}\lambda\,d\mu_f(\lambda).
	\end{equation}
	
	\begin{proposition}
		\label{prop:gamma-moments}
		With the notation above, assume that $f$ is nontrivial.  Then
		\begin{align}
			f''(1)
			&=-2\int_{[0,1]}
			\lambda(1-\lambda)\,d\mu_f(\lambda),
			\label{eq:f-second-moment}\\
			\gamma_f
			&=s(1-s)+2\int_{[0,1]}
			\lambda(1-\lambda)\,d\mu_f(\lambda).
			\label{eq:gamma-integral}
		\end{align}
		Consequently,
		\begin{equation}\label{eq:gamma-sharp-range}
			s(1-s)\le\gamma_f\le3s(1-s).
		\end{equation}
		Equality in the lower bound holds if and only if $f=a_s$, while
		equality in the upper bound holds if and only if $f=h_s$.
	\end{proposition}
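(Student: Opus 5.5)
We will compute $f''(1)$ by differentiating the representation \eqref{eq:representing-measure} under the integral sign, and then bound the second moment. Write $h_\lambda(x)=x/((1-\lambda)x+\lambda)$. A direct computation gives
\[
h_\lambda'(x)=\frac{\lambda}{((1-\lambda)x+\lambda)^2},
\qquad
h_\lambda''(x)=-\frac{2\lambda(1-\lambda)}{((1-\lambda)x+\lambda)^3},
\]
so $h_\lambda(1)=1$, $h_\lambda'(1)=\lambda$ and $h_\lambda''(1)=-2\lambda(1-\lambda)$.

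Differentiation under the integral is legitimate on a neighbourhood $x\in[1/2,2]$. There $(1-\lambda)x+\lambda\ge 1/2$ uniformly in $\lambda\in[0,1]$, so the derivatives are uniformly bounded, and $\mu_f$ is a probability measure. This yields \eqref{eq:f-second-moment}, and substituting into $\gamma_f=s(1-s)-f''(1)$ gives \eqref{eq:gamma-integral}.

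For the bounds \eqref{eq:gamma-sharp-range}, the lower bound is immediate because $\lambda(1-\lambda)\ge0$ on $[0,1]$. For the upper bound, use $\lambda^2\ge0$ in the form of Jensen's inequality:
\[
\int\lambda(1-\lambda)\,d\mu_f
=s-\int\lambda^2\,d\mu_f
\le s-s^2=s(1-s),
\]
because $\int\lambda^2\,d\mu_f\ge\bigl(\int\lambda\,d\mu_f\bigr)^2$. Hence $\gamma_f\le 3s(1-s)$.

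For the equality cases, the key tool is uniqueness of the Kubo--Ando representing measure.

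\emph{Lower bound.} Equality holds iff $\lambda(1-\lambda)=0$ for $\mu_f$-a.e.\ $\lambda$, i.e.\ $\mu_f$ is supported on $\{0,1\}$. Then $\mu_f=(1-s)\delta_0+s\delta_1$, since the mean of $\mu_f$ is $s$. Because $h_0\equiv1$ and $h_1(x)=x$, this gives $f=(1-s)+sx=a_s$. Conversely, $a_s$ has this measure, so equality holds.

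\emph{Upper bound.} Equality holds iff the variance of $\mu_f$ vanishes, i.e.\ $\mu_f=\delta_s$, which means $f=h_s$. Conversely, $h_s$ has representing measure $\delta_s$, and then $\gamma_f=s(1-s)+2s(1-s)$.

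We should also confirm that the labelling of the $h_\lambda$ is consistent with the paper's $h_s$ in \eqref{eq:harm-arith}. It is, since both are defined as $((1-s)+sx^{-1})^{-1}$.

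The only mildly delicate points are the justification of differentiation under the integral and the appeal to uniqueness of $\mu_f$ in the equality cases. Nontriviality ensures $0<s<1$, so neither case degenerates. No further obstacle is expected.
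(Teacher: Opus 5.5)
Your proposal is correct and follows essentially the same route as the paper: differentiate the Kubo--Ando representation twice under the integral (justified by uniform bounds near $x=1$) to get $f''(1)=-2\int\lambda(1-\lambda)\,d\mu_f$, then bound this moment and read off the equality cases from the support of $\mu_f$. Your variance inequality $\int\lambda^2\,d\mu_f\ge s^2$ is the same as the paper's Jensen inequality for the strictly concave $r(\lambda)=\lambda(1-\lambda)$, since $r$ differs from $-\lambda^2$ only by a linear term.
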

	
	\begin{proof}
		Writing
		\[
		h_\lambda(x)
		=\frac{x}{(1-\lambda)x+\lambda},
		\]
		we obtain
		\[
		h_\lambda''(x)
		=-\frac{2\lambda(1-\lambda)}
		{\bigl((1-\lambda)x+\lambda\bigr)^3},
		\]
		and hence
		\[
		h_\lambda''(1)=-2\lambda(1-\lambda).
		\]
		The second derivatives are uniformly bounded for
		$\lambda\in[0,1]$ and $x$ in a compact neighborhood of $1$.
		We may therefore differentiate
		\eqref{eq:representing-measure} under the integral sign, obtaining
		\eqref{eq:f-second-moment}.  Since
		\[
		\gamma_f=s(1-s)-f''(1),
		\]
		equation \eqref{eq:gamma-integral} follows immediately.
		
		Now let
		\[
		r(\lambda):=\lambda(1-\lambda),
		\qquad 0\le\lambda\le1.
		\]
		The function $r$ is nonnegative and strictly concave.  Hence,
		using \eqref{eq:representing-measure} and Jensen's inequality,
		\[
		0
		\le
		\int_{[0,1]}r(\lambda)\,d\mu_f(\lambda)
		\le
		r\left(
		\int_{[0,1]}\lambda\,d\mu_f(\lambda)
		\right)
		=r(s)
		=s(1-s).
		\]
		Substitution into \eqref{eq:gamma-integral} proves
		\eqref{eq:gamma-sharp-range}.
		
		Equality in the lower bound holds if and only if
		\[
		\int_{[0,1]}\lambda(1-\lambda)\,d\mu_f(\lambda)=0.
		\]
		Since $\lambda(1-\lambda)$ is nonnegative and vanishes precisely at
		$0$ and $1$, this is equivalent to
		\[
		\operatorname{supp}\mu_f\subseteq\{0,1\}.
		\]
		The identity
		$\int_{[0,1]}\lambda\,d\mu_f(\lambda)=s$ then gives
		\[
		\mu_f=(1-s)\delta_0+s\delta_1.
		\]
		Because $h_0(x)=1$ and $h_1(x)=x$, the representation
		\eqref{eq:representing-measure} yields
		\[
		f(x)=(1-s)+sx=a_s(x).
		\]
		
		Equality in the upper bound is precisely the equality case of
		Jensen's inequality for the strictly concave function $r$.
		It holds if and only if $\mu_f$ is supported at a single point.
		The identity
		$\int_{[0,1]}\lambda\,d\mu_f(\lambda)=s$ forces that point to be
		$s$, so
		\[
		\mu_f=\delta_s.
		\]
		Equation \eqref{eq:representing-measure} then gives $f=h_s$.
	\end{proof}
	
	\section{Exact data processing and qubit pinching}
	\label{sec:dpi}
	
Let $m,\ell$ be positive integers and let
	$\Psi:\mathbb M_m\to\mathbb M_\ell$ be positive and trace preserving
	(PTP).  Since $\altPhi_f$ is defined on the open positive cone, an
	expression of the form $\altPhi_f(\Psi(A),\Psi(B))$ is understood only
	when both images are positive definite, except where we explicitly use
	the continuous extension to the closed positive cone.  If
	$\Psi(I_m)\succ0$, this image condition holds automatically for every
	$A,B\in\mathbb P_m$.
	
	We say that $\altPhi_f$ satisfies the \emph{exact data-processing
		inequality} (exact DPI) if, for every pair of positive integers $m,\ell$,
	every PTP map $\Psi:\mathbb M_m\to\mathbb M_\ell$, and every
	$A,B\in\mathbb P_m$ such that
	$\Psi(A),\Psi(B)\in\mathbb P_\ell$, one has
	\[
	\altPhi_f(\Psi(A),\Psi(B))
	\le \altPhi_f(A,B).
	\]
	Here ``exact'' means that the contraction constant is $1$, with no
	multiplicative loss; it does not refer to the equality case in the
	preceding inequality.  Since every CPTP map is PTP, this requirement is
	a priori stronger than the usual CPTP formulation.  The theorem below
	shows that the two formulations are equivalent for the family
	$\altPhi_f$.

For later use, set
\[
\mathcal F_f(A,B)
:=\Tr(A\hatsigma_fB)
=\Tr\!\bigl(Af(A^{-1}\sharp B)^2\bigr).
\]
Because a trace-preserving map preserves the two affine trace terms in
$\altPhi_f$, exact DPI is equivalent to the reverse monotonicity
\begin{equation}\label{eq:fidelity-reverse-dpi}
	\mathcal F_f(\Psi(A),\Psi(B))
	\ge \mathcal F_f(A,B).
\end{equation}
We shall also use the root fidelity, namely the homogeneous extension of
the unsquared Uhlmann fidelity to the closed positive cone:
\[
\mathcal F(A,B)
:=\Tr\!\left(A^{1/2}BA^{1/2}\right)^{1/2},
\qquad A,B\succeq0.
\]
It is monotone increasing even under PTP maps.  Indeed, by the
Bhattacharyya-coefficient characterization
\cite[Theorem~3.24]{Watrous}, there exists a finite positive
operator-valued measurement $\{M_j\}_{j=1}^N$ in $\mathbb M_\ell$ for
which equality holds in that characterization for the pair
$(\Psi(A),\Psi(B))$.

Let $\Psi^*:\mathbb M_\ell\to\mathbb M_m$ denote the adjoint map with
respect to the trace pairing.  Since $\Psi$ is positive and trace
preserving, $\Psi^*$ is positive and unital.  Hence
$\{\Psi^*(M_j)\}_{j=1}^N$ is a positive operator-valued measurement in
$\mathbb M_m$.  Applying the inequality part of the same characterization
to this pulled-back measurement gives
\begin{align*}
	\mathcal F(A,B)
	&\le
	\sum_{j=1}^N
	\sqrt{\Tr\!\bigl(A\Psi^*(M_j)\bigr)\,
		\Tr\!\bigl(B\Psi^*(M_j)\bigr)}\\
	&=
	\sum_{j=1}^N
	\sqrt{\Tr\!\bigl(\Psi(A)M_j\bigr)\,
		\Tr\!\bigl(\Psi(B)M_j\bigr)}\\
	&=\mathcal F(\Psi(A),\Psi(B)).
\end{align*}
Since
\[
d_{\rm BW}(A,B)^2
=\Tr A+\Tr B-2\mathcal F(A,B)
\]
and $\Psi$ preserves traces, it follows that
\begin{equation}\label{eq:bw-contractivity}
	d_{\rm BW}(\Psi(A),\Psi(B))
	\le d_{\rm BW}(A,B).
\end{equation}
	Let $\mathcal P:\mathbb M_2\to\mathbb M_2$ denote diagonal pinching,
	\[
	\mathcal P\!\begin{pmatrix}a&b\\c&d\end{pmatrix}
	:=\begin{pmatrix}a&0\\0&d\end{pmatrix}.
	\]
This map is PTP; in fact, it is CPTP.

The next theorem gives a complete classification of exact data processing
and shows that it is already determined by diagonal pinching on faithful
qubit states.

\begin{theorem}
	\label{thm:quadratic-square-classification}
	Let $f:(0,\infty)\to(0,\infty)$ be a normalized operator monotone
	function.  Then the following conditions are equivalent.
	\begin{enumerate}
	\item[\rm(i)] The divergence $\altPhi_f$ satisfies exact DPI under every
	positive trace-preserving map.
		
		\item[\rm(ii)] Reverse monotonicity holds for diagonal pinching on
		faithful qubit states:
		\[
		\mathcal F_f(\mathcal P(\rho),\mathcal P(\sigma))
		\ge \mathcal F_f(\rho,\sigma),
		\qquad
		\rho,\sigma\in\mathbb P_2,
		\quad
		\Tr\rho=\Tr\sigma=1.
		\]
		
		\item[\rm(iii)] The function $x\mapsto f(x)^2$ is a polynomial
		of degree at most two.
		
		\item[\rm(iv)] There exist $\alpha,\beta,\gamma\ge0$ such that
		\[
		f(x)^2=\alpha+\beta x+\gamma x^2,
		\qquad
		\alpha+\beta+\gamma=1,
		\qquad
		\beta^2\ge4\alpha\gamma.
		\]
		
		\item[\rm(v)] There exist $u,v\in[0,1]$, unique up to
		interchange, such that
		\begin{equation}\label{eq:fuv}
			f(x)=f_{u,v}(x)
			:=\sqrt{\bigl((1-u)+ux\bigr)
				\bigl((1-v)+vx\bigr)}.
		\end{equation}
		
		\item[\rm(vi)] For some positive integer $m$, there exists a
		constant $c\ge0$ such that
		\begin{equation}\label{eq:exact-bw}
			\altPhi_f(A,B)=c\,d_{\rm BW}(A,B)^2
		\end{equation}
		for all $A,B\in\mathbb P_m$.
	\end{enumerate}
These conditions remain equivalent if condition ${\rm(i)}$ is replaced
by exact DPI under all completely positive trace-preserving maps.
	
	These conditions are also equivalent to joint convexity of
	$\altPhi_f$, or equivalently joint concavity of $\mathcal F_f$,
	in every matrix size.  Moreover, validity of \eqref{eq:exact-bw}
	in one fixed matrix size forces the same identity, with the same
	constant $c$, in every matrix size.
	
	If the equivalent conditions fail, exact DPI fails already for
	diagonal pinching on a pair of real-symmetric faithful qubit states.
\end{theorem}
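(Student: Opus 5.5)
The plan is to prove the cycle (iii)$\Leftrightarrow$(iv)$\Leftrightarrow$(v)$\Rightarrow$(vi)$\Rightarrow$(i)$\Rightarrow$(ii)$\Rightarrow$(iii). The extra claims (the CPTP variant, joint convexity, and the size-independence of $c$) are then attached to this cycle.

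\emph{Algebraic equivalences.} I would start with (iii)$\Leftrightarrow$(iv)$\Leftrightarrow$(v).
\begin{itemize}
\item (v)$\Rightarrow$(iv) is immediate: the product of two affine functions has nonnegative coefficients summing to $1$, and its discriminant is $\ge0$ because both roots are real.
\item (iv)$\Rightarrow$(v): factor $f^2$ over $\mathbb R$ (the discriminant condition gives real roots, which are nonpositive since the coefficients are nonnegative) and normalize each factor at $x=1$. Uniqueness up to interchange follows from unique factorization.
\item (iii)$\Rightarrow$(iv): if $f^2=\alpha+\beta x+\gamma x^2$ with $f>0$ and $f$ operator monotone, the sublinear growth of $f$ gives $\gamma\ge0$, and letting $x\downarrow0$ gives $\alpha\ge0$.
\item To get $\beta^2\ge4\alpha\gamma$, I would argue by contradiction. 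Complex conjugate roots $-p\pm iq$ would make $f(z)^2$ vanish in the open upper half-plane, which is incompatible with $f$ being a Pick function mapping $\mathbb C^+$ into $\mathbb C^+$ (the value $0$ is not attained). Then $\beta\ge0$ follows.
\end{itemize}

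\emph{Sufficiency.} For (v)$\Rightarrow$(vi), write $f^2=(1-s)+\dots$ and note $s=f'(1)=(u+v)/2$. Then $(1-s)+sx^2-f(x)^2$ is a quadratic vanishing to second order at $x=1$, hence equals $c(x-1)^2$ with $c=s-uv$. Proposition~\ref{prop:master-representation} then gives (vi) in every matrix size, with a constant that does not depend on the size.

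Conversely, (vi) in one size forces (iii): taking $A=I$ and $B=x^2I$ gives $\rho_f\equiv c$, so $f^2=(1-s)+sx^2-c(x-1)^2$. This also proves the size-independence claim.

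For (vi)$\Rightarrow$(i), use the PTP contractivity \eqref{eq:bw-contractivity}. Since (i)$\Rightarrow$ the CPTP version $\Rightarrow$(ii) trivially (pinching is CPTP), the CPTP variant is absorbed into the cycle.

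Joint convexity follows from (vi), because $d_{\rm BW}^2=\Tr A+\Tr B-2\mathcal F$ and the root fidelity is jointly concave. In the other direction, joint convexity together with unitary invariance yields DPI under pinching by the standard averaging argument: pinching is an average of two unitary conjugations $\rho\mapsto U\rho U^*$ with $U=\operatorname{diag}(1,\pm1)$. So joint convexity implies (ii).

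\emph{Main obstacle: (ii)$\Rightarrow$(iii).} This is the step I expect to be hardest. I would test with real-symmetric qubit pairs near commuting ones, taking $\rho=\operatorname{diag}(p,1-p)$ and $\sigma=\operatorname{diag}(q,1-q)+\epsilon X$ with $X$ the off-diagonal symmetric unit.

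Pinching removes $\epsilon X$, so (ii) demands that the $\epsilon^2$ coefficient of $\mathcal F_f(\rho,\sigma_\epsilon)$ be $\le0$. I would compute this coefficient by a Daleckii--Krein expansion of $R_\epsilon$ through the Riccati equation. The expansion should produce an expression involving divided differences of $q=f^2$ at the two eigenvalues $r_1=\sqrt{q/p}$ and $r_2=\sqrt{(1-q)/(1-p)}$, weighted by the diagonal of $\rho$. The goal is to show that the sign condition, over all $(r_1,r_2)$ and weights, forces the second divided difference of $f^2$ to be constant, that is, forces $f^2$ to be quadratic.

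If the perturbative route does not give full rigidity, my fallback is to use the free weights to isolate $q[r_1,r_2]$ against $(q'(r_1)+q'(r_2))/2$, which would yield a functional equation characterizing quadratics. Where the leading coefficient vanishes identically, I would take large $\epsilon$ explicitly instead. The contrapositive of this step is the final assertion of the theorem: it produces a real-symmetric faithful qubit pair violating DPI under pinching.
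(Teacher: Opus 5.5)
Your treatment of (iii)$\Leftrightarrow$(iv)$\Leftrightarrow$(v)$\Leftrightarrow$(vi), of (vi)$\Rightarrow$(i)$\Rightarrow$(ii), of the CPTP variant, and of joint convexity is fine. The Pick-function argument for $\beta^2\ge4\alpha\gamma$ is a legitimate alternative to the paper's $f''$ computation; note that $\beta\ge0$ then comes from $q'(0+)=\beta$ and monotonicity, not from the discriminant.

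\textbf{The gap.} It is in (ii)$\Rightarrow$(iii), and therefore also in the final counterexample claim. Take $\rho=\operatorname{diag}(a,b)$ fixed, $\sigma_0=\Lambda\rho\Lambda$ with $\Lambda=\operatorname{diag}(x,y)$ and $0<x<1<y$, and $\sigma_\epsilon=\sigma_0+\epsilon K$. The Riccati expansion gives $R_\epsilon=\Lambda+\epsilon K/\eta+\epsilon^2R_2+O(\epsilon^3)$, where $\eta=ax+by$ and $R_2=-\eta^{-2}\operatorname{diag}\bigl(b/(2ax),a/(2by)\bigr)$. So (ii) yields only
\[
a\,q[x,x,y]+b\,q[x,y,y]-\frac12\Bigl(\frac{b\,q'(x)}{x}+\frac{a\,q'(y)}{y}\Bigr)\le0 .
\]
This is a one-sided condition, and it holds automatically whenever $q=f^2$ is concave, since then its second divided differences are $\le0$ and $q'\ge0$. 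An example is $f(x)=x^t$ with $0<t<1/2$, which nonetheless violates (ii).

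Taking $\epsilon$ large does not help either. Suppose $f^2=g$ is operator monotone. Then $\sigma\mapsto\Tr(\rho\,g(\rho^{-1}\sharp\sigma))$ is concave, by operator concavity of $\sigma\mapsto\rho^{-1}\sharp\sigma$ together with monotonicity and concavity of $g$. For diagonal $\rho$ this map is also invariant under $\sigma\mapsto J\sigma J$, where $J=\operatorname{diag}(1,-1)$. Since $\mathcal P(\sigma)=\frac12(\sigma+J\sigma J)$, pinching never decreases $\mathcal F_f$ anywhere on your family. So your family cannot detect DPI failure for any $f=\sqrt g$. The ``free weights'' you hoped to use are not free: $\Tr\rho=\Tr\sigma$ forces $a(1-x^2)=b(y^2-1)$ once $x,y$ are chosen.

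\textbf{The missing idea.} You need a second, correlated perturbation that produces a flat direction and turns the inequality into an equality. The paper moves $\rho$ as well: $A_r=A_0+rK$ and $B_{r,t}=\Lambda A_r\Lambda+t\eta K$. Then $R(r,0)=\Lambda$ exactly, so $G(r,0)=\Tr(A_rq(\Lambda))$ is constant and $G_{rr}(0,0)=0$. A local maximum at the origin makes $\begin{pmatrix}0&G_{rt}\\ G_{rt}&G_{tt}\end{pmatrix}$ negative semidefinite, which forces $G_{rt}(0,0)=0$. Differentiating the Riccati equation gives $R_t(0,0)=K$ and $R_{rt}(0,0)=-A_0^{-1}$, hence $G_{rt}(0,0)=2q[x,y]-q'(x)-q'(y)$. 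This is exactly the trapezoidal identity you guessed in your fallback, and its solutions are the quadratics. When it fails, $\det\operatorname{Hess}G(0,0)<0$ gives a strict real-symmetric faithful qubit counterexample, whatever the sign of $G_{tt}$. Your $\epsilon$-direction is precisely the $t$-direction at $r=0$. It therefore probes only $G_{tt}$, the one entry the argument does not need.
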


\begin{proof}
	Put $q:=f^2$.  Assume first that
	\[
	q(x)=\alpha+\beta x+\gamma x^2.
	\]
	Normalization gives
	\[
	\alpha+\beta+\gamma=q(1)=1.
	\]
	Since $f(0+)$ exists, $\alpha=f(0+)^2\ge0$, while positivity of
	$q$ for arbitrarily large $x$ implies $\gamma\ge0$.  The function
	$q=f^2$ is increasing, so
	\[
	q'(x)=\beta+2\gamma x\ge0.
	\]
	Letting $x\downarrow0$ gives $\beta\ge0$.  Finally, scalar
	concavity of $f$ and
	\[
	f''(x)=
	\frac{4\alpha\gamma-\beta^2}
	{4\bigl(\alpha+\beta x+\gamma x^2\bigr)^{3/2}}
	\]
	give $\beta^2\ge4\alpha\gamma$.  Thus
	${\rm(iii)}\Rightarrow{\rm(iv)}$.
	
	Suppose next that ${\rm(iv)}$ holds, and let $u,v$ be the two
	roots of
	\begin{equation}\label{eq:uv-polynomial}
		z^2-(\beta+2\gamma)z+\gamma=0.
	\end{equation}
	Its discriminant is
	\[
	(\beta+2\gamma)^2-4\gamma
	=\beta^2-4\alpha\gamma\ge0.
	\]
	Moreover,
	\[
	u+v=\beta+2\gamma\ge0,
	\qquad
	uv=\gamma\ge0,
	\]
	so $u,v\ge0$.  We also have
	\[
	(1-u)(1-v)=1-(u+v)+uv=\alpha\ge0
	\]
	and
	\[
	u+v=1-\alpha+\gamma\le2.
	\]
	These relations imply $u,v\le1$.  Hence $u,v\in[0,1]$, and
	\[
	q(x)=\bigl((1-u)+ux\bigr)\bigl((1-v)+vx\bigr).
	\]
	
	Condition ${\rm(v)}$ immediately implies ${\rm(iii)}$.  Moreover,
	for $u,v\in[0,1]$ and $X\succ0$, the matrices
	$(1-u)I+uX$ and $(1-v)I+vX$ commute, and therefore
	\[
	f_{u,v}(X)
	=\bigl((1-u)I+uX\bigr)
	\sharp
	\bigl((1-v)I+vX\bigr).
	\]
	Both affine maps are operator monotone, and the geometric mean is
	jointly monotone.  Thus $f_{u,v}$ is a normalized operator monotone
	function.  This proves the equivalence of
	${\rm(iii)}$--${\rm(v)}$, and uniqueness up to interchange follows
	from the sum and product in \eqref{eq:uv-polynomial}.
	
For $f=f_{u,v}$, set
\[
\alpha=(1-u)(1-v),
\qquad
\beta=u+v-2uv,
\qquad
\gamma=uv,
\]
and put $s:=f'(1)$.  Since $q=f^2$ and $f(1)=1$,
\[
2s=q'(1)=u+v.
\]
Thus
\[
s=\frac{u+v}{2}.
\]
Consequently,
\[
(1-s)-\alpha=s-\gamma=\frac{\beta}{2}.
\]
Hence
	\[
	(1-s)+sx^2-f(x)^2=\frac{\beta}{2}(x-1)^2.
	\]
	Functional calculus and
	Proposition~\ref{prop:master-representation} give
	\eqref{eq:exact-bw} in every matrix size, with
	$c=\beta/2$.  This proves
	${\rm(v)}\Rightarrow{\rm(vi)}$.
	
	Conversely, assume ${\rm(vi)}$ and set
	\[
	A=I_m,
	\qquad
	B=x^2I_m.
	\]
	After cancelling the common factor $m$, identity
	\eqref{eq:exact-bw} becomes
	\[
	(1-s)+sx^2-f(x)^2=c(x-1)^2.
	\]
	Thus $f^2$ is a polynomial of degree at most two.  Hence
	${\rm(vi)}\Rightarrow{\rm(iii)}$, and
	${\rm(iii)}$--${\rm(vi)}$ are equivalent.  The preceding
	calculation also shows that \eqref{eq:exact-bw} then holds, with the
	same $c$, in every matrix size.
	
By \eqref{eq:exact-bw} and the contractivity
\eqref{eq:bw-contractivity}, condition ${\rm(iii)}$ implies
${\rm(i)}$ for every PTP map.  Since diagonal pinching is PTP,
\eqref{eq:fidelity-reverse-dpi} also shows that
${\rm(i)}$ implies ${\rm(ii)}$.
	
	It remains to prove ${\rm(ii)}\Rightarrow{\rm(iii)}$.  Fix
$0<x<1<y$, and put
\[
\Lambda=\begin{pmatrix}x&0\\0&y\end{pmatrix},
\qquad
K=\begin{pmatrix}0&1\\1&0\end{pmatrix},
\]
and
\[
a=\frac{y^2-1}{y^2-x^2},
\qquad
b=\frac{1-x^2}{y^2-x^2},
\qquad
A_0=\begin{pmatrix}a&0\\0&b\end{pmatrix}.
\]
Then $a,b>0$, $a+b=1$, and $ax^2+by^2=1$.  Set
\[
\eta=ax+by.
\]
For sufficiently small real $r,t$, consider
\[
A_r=A_0+rK,
\qquad
B_{r,t}=\Lambda A_r\Lambda+t\eta K.
\]
Since $K$ and $\Lambda K\Lambda=xyK$ are traceless, we have
\[
\Tr A_r
=\Tr A_0+r\Tr K
=a+b=1,
\]
and
\[
\Tr B_{r,t}
=\Tr(\Lambda A_0\Lambda)+r\Tr(\Lambda K\Lambda)+t\eta\Tr K
=ax^2+by^2=1.
\]
Since $A_0>0$ and $\Lambda A_0\Lambda>0$, the matrices $A_r$ and
$B_{r,t}$ remain positive definite for sufficiently small $r,t$.
Together with the trace identities above, this shows that they are
faithful states.  Moreover, since $K$ and $\Lambda K\Lambda$ are
off-diagonal, diagonal pinching gives
\[
\mathcal P(A_r)=A_0,
\qquad
\mathcal P(B_{r,t})=\Lambda A_0\Lambda.
\]
By ${\rm(ii)}$,
\[
G(r,t):=\mathcal F_f(A_r,B_{r,t})
\le \mathcal F_f(A_0,\Lambda A_0\Lambda)=G(0,0),
\]
so $G$ has a local maximum at the origin.

Let
\[
R(r,t)=A_r^{-1}\sharp B_{r,t}.
\]
Then
\[
R(r,t)A_rR(r,t)=B_{r,t}.
\]
When $t=0$, we have $B_{r,0}=\Lambda A_r\Lambda$.  Since
$\Lambda>0$ and $R(r,0)$ is the unique positive definite solution of
\[
X A_r X=B_{r,0},
\]
it follows that $R(r,0)=\Lambda$.  Therefore, defining
$g(r):=G(r,0)$ and using
$q(\Lambda)=\operatorname{diag}(q(x),q(y))$, we obtain
\[
\begin{aligned}
	g(r)
	&=\Tr\!\left(A_rq(\Lambda)\right)\\
	&=\Tr\!\begin{pmatrix}
		a q(x)&r q(y)\\
		r q(x)&b q(y)
	\end{pmatrix}\\
	&=a q(x)+b q(y).
\end{aligned}
\]
Thus $g$ is constant and, in particular, $g''(0)=0$.  Since $G$
has a local maximum at the origin, its Hessian at $(0,0)$ is negative
semidefinite.  Its $(r,r)$-entry is
\[
\frac{\partial^2G}{\partial r^2}(0,0)=g''(0)=0.
\]
Thus
\[
\operatorname{Hess}G(0,0)
=
\begin{pmatrix}
	0&G_{rt}(0,0)\\
	G_{rt}(0,0)&G_{tt}(0,0)
\end{pmatrix}
\preceq0.
\]
Taking determinants gives
\[
0\le\det\!\bigl(\operatorname{Hess}G(0,0)\bigr)
=-G_{rt}(0,0)^2.
\]
Therefore
\begin{equation}\label{eq:mixed-pinching-zero}
	G_{rt}(0,0)
	=\frac{\partial^2G}{\partial r\,\partial t}(0,0)=0.
\end{equation}
We now obtain a second expression for the same mixed derivative by
differentiating the Riccati equation
\[
R(r,t)A_rR(r,t)=B_{r,t}
\]
with respect to $t$. Since $A_r$ is independent of $t$ and
$\partial B_{r,t}/\partial t=\eta K$, the product rule gives
\[
\frac{\partial R}{\partial t}(r,t)A_rR(r,t)
+R(r,t)A_r\frac{\partial R}{\partial t}(r,t)
=\eta K.
\]
Along $t=0$, we have $R(r,0)=\Lambda$, and hence
\[
\frac{\partial R}{\partial t}(r,0)A_r\Lambda
+\Lambda A_r\frac{\partial R}{\partial t}(r,0)
=\eta K.
\]
In particular, at $r=0$,
\[
\frac{\partial R}{\partial t}(0,0)A_0\Lambda
+\Lambda A_0\frac{\partial R}{\partial t}(0,0)
=\eta K.
\]

We use that, for $H\succ0$, the Lyapunov map
$Z\mapsto ZH+HZ$ is invertible: in an eigenbasis of $H$, its
$(i,j)$-entry is multiplication by $h_i+h_j>0$.

At $r=0$, put
\[
X:=\frac{\partial R}{\partial t}(0,0),
\qquad
C:=A_0\Lambda=\Lambda A_0=\operatorname{diag}(ax,by)>0.
\]
The preceding equation becomes
\[
XC+CX=\eta K.
\]
Since $\eta=ax+by$, we have
\[
KC+CK=(ax+by)K=\eta K.
\]
Hence,
\[
\frac{\partial R}{\partial t}(0,0)=X=K.
\]
Next, differentiate
\[
\frac{\partial R}{\partial t}(r,0)A_r\Lambda
+\Lambda A_r\frac{\partial R}{\partial t}(r,0)=\eta K
\]
with respect to $r$.  Since
$\frac{dA_r}{dr}=K$
and the right-hand side is independent of $r$, the product rule gives
\[
\begin{aligned}
	0={}&
	\frac{\partial^2R}{\partial r\,\partial t}(r,0)A_r\Lambda
	+\frac{\partial R}{\partial t}(r,0)K\Lambda\\
	&\quad
	+\Lambda K\frac{\partial R}{\partial t}(r,0)
	+\Lambda A_r\frac{\partial^2R}{\partial r\,\partial t}(r,0).
\end{aligned}
\]
Evaluating at $r=0$ and using
$\frac{\partial R}{\partial t}(0,0)=K$, we obtain, with
\[
Y:=\frac{\partial^2R}{\partial r\,\partial t}(0,0),
\]
that
\[
YA_0\Lambda+KK\Lambda+\Lambda KK+\Lambda A_0Y=0.
\]
Since $K^2=I$ and $A_0\Lambda=\Lambda A_0=C$, this reduces to
\[
YC+CY=-2\Lambda.
\]
Moreover,
\[
(-A_0^{-1})C+C(-A_0^{-1})=-2\Lambda.
\]
We get
\[
\frac{\partial^2R}{\partial r\,\partial t}(0,0)
=Y=-A_0^{-1}.
\]
Since $R_r(0,0)=0$, the chain rule and the divided-difference formula
\cite[Chapter~3]{Higham2008}
\[
{\rm D}q(\Lambda)[K]
=\frac{q(x)-q(y)}{x-y}\,K
\]
give
\begin{align*}
	G_{rt}(0,0)
	&=\Tr\!\left(K\,{\rm D}q(\Lambda)[K]\right)
	+\Tr\!\left(A_0{\rm D}q(\Lambda)[-A_0^{-1}]\right)\\
	&=2\frac{q(x)-q(y)}{x-y}-q'(x)-q'(y).
\end{align*}
Together with \eqref{eq:mixed-pinching-zero}, this proves
\begin{equation}\label{eq:trapezoidal-q}
	2\frac{q(x)-q(y)}{x-y}=q'(x)+q'(y),
	\qquad 0<x<1<y.
\end{equation}
Differentiating \eqref{eq:trapezoidal-q} with respect to each endpoint
gives
\[
q''(x)=\frac{q'(x)-q'(y)}{x-y}=q''(y).
\]
Fixing one endpoint at a time shows that $q''$ is constant on
$(0,1)$ and on $(1,\infty)$; analyticity gives the same value at
$1$.  Thus $q=f^2$ is a polynomial of degree at most two.  This
completes the equivalence of ${\rm(i)}$--${\rm(vi)}$.  It remains to justify the CPTP formulation stated in the theorem. Let
$({\rm i}_{\rm CPTP})$ denote exact DPI under every completely positive
trace-preserving map. Since diagonal pinching is CPTP,
$({\rm i}_{\rm CPTP})$ implies ${\rm(ii)}$. Conversely,
${\rm(iii)}$ implies ${\rm(i)}$, which is exact DPI under every PTP
map and hence, in particular, under every CPTP map. Therefore
$({\rm i}_{\rm CPTP})$ is equivalent to conditions
${\rm(i)}$--${\rm(vi)}$.

For the convexity--concavity assertion, under ${\rm(iii)}$ we have
\[
\mathcal F_f(A,B)
=\alpha\Tr A+\beta\mathcal F(A,B)+\gamma\Tr B.
\]
The root fidelity $\mathcal F$ is jointly concave and $\beta\ge0$,
so $\mathcal F_f$ is jointly concave and $\altPhi_f$ is jointly
convex.  Conversely, either property in size two, together with
simultaneous unitary invariance, implies ${\rm(ii)}$, because
\[
\mathcal P(X)=\frac12(X+UXU^*),
\qquad
U=\operatorname{diag}(1,-1).
\]

Finally, suppose that $q$ is not a polynomial of degree at most two.
Then \eqref{eq:trapezoidal-q} fails for some $0<x<1<y$.  For the
corresponding $G$, note that
\[
G(r,0)=\Tr(A_0q(\Lambda)).
\]
Hence
\[
G_r(0,0)=G_{rr}(0,0)=0.
\]
Moreover, the chain rule and $R_t(0,0)=K$ give
\[
G_t(0,0)
=\Tr\!\left(A_0{\rm D}q(\Lambda)[K]\right)=0,
\]
since
\[
{\rm D}q(\Lambda)[K]
=\frac{q(x)-q(y)}{x-y}K
\]
is off-diagonal.  Thus $\nabla G(0,0)=0$.

By the choice of $x$ and $y$, the trapezoidal identity fails, and the
preceding mixed-derivative formula gives
\[
G_{rt}(0,0)
=
2\frac{q(x)-q(y)}{x-y}-q'(x)-q'(y)
\ne0.
\]
Consequently,
\[
\operatorname{Hess}G(0,0)
=
\begin{pmatrix}
	0&G_{rt}(0,0)\\
	G_{rt}(0,0)&G_{tt}(0,0)
\end{pmatrix},
\qquad
\det\!\bigl(\operatorname{Hess}G(0,0)\bigr)
=-G_{rt}(0,0)^2<0.
\]
Thus $\operatorname{Hess}G(0,0)$ is indefinite. Choose
$v\in\mathbb R^2$ such that
\[
v^{T}\operatorname{Hess}G(0,0)v>0.
\]
Since $\nabla G(0,0)=0$, Taylor's formula yields
\[
G(\varepsilon v)
=
G(0,0)
+\frac{\varepsilon^2}{2}
v^{T}\operatorname{Hess}G(0,0)v
+o(\varepsilon^2)
>G(0,0)
\]
for all sufficiently small positive $\varepsilon$.  Along this
direction, the corresponding matrices $A_r$ and $B_{r,t}$ remain
real-symmetric faithful states, while
\[
\mathcal P(A_r)=A_0,
\qquad
\mathcal P(B_{r,t})=\Lambda A_0\Lambda.
\]
Hence
\[
\mathcal F_f(A_r,B_{r,t})
>
\mathcal F_f(\mathcal P(A_r),\mathcal P(B_{r,t})),
\]
which is the asserted strict faithful qubit pinching counterexample.
\end{proof}

\begin{corollary}
	\label{cor:exact-dpi-parameters}
	Suppose that the equivalent conditions of
	Theorem~\ref{thm:quadratic-square-classification} hold, and write
	$f=f_{u,v}$ as in \eqref{eq:fuv}.  Then
	\begin{equation}\label{eq:fuv-coefficients}
		\alpha=(1-u)(1-v),
		\qquad
		\beta=u+v-2uv,
		\qquad
		\gamma=uv,
		\qquad
		c=\frac{\beta}{2},
	\end{equation}
	and
	\[
	f'(1)=\frac{u+v}{2}.
	\]
	In particular,
	\[
	\altPhi_f=\frac{\beta}{2}\,d_{\rm BW}^2
	\]
	in every matrix size.
	
	At fixed $s=f'(1)$, the exact-DPI class is parameterized by
	\begin{equation}\label{eq:fixed-weight-exact-dpi}
		f(x)^2
		=(1-2s+\gamma)+2(s-\gamma)x+\gamma x^2,
		\qquad
		\max\{0,2s-1\}\le\gamma\le s^2,
	\end{equation}
	and then
	\[
	\altPhi_f=(s-\gamma)d_{\rm BW}^2.
	\]
	
	Moreover, $f_{u,v}$ is nontrivial if and only if
	\[
	(u,v)\notin\{(0,0),(1,1)\}.
	\]
	In every nontrivial case, $\beta>0$.
\end{corollary}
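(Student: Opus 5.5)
Most of the corollary is read off from the proof of Theorem~\ref{thm:quadratic-square-classification}; the plan below records the remaining elementary verifications in order. First, expand the product in \eqref{eq:fuv}:
\[
\bigl((1-u)+ux\bigr)\bigl((1-v)+vx\bigr)=(1-u)(1-v)+(u+v-2uv)x+uvx^2 .
\]
Matching coefficients with condition (iv) gives $\alpha,\beta,\gamma$ as in \eqref{eq:fuv-coefficients}. The coefficients are unique because a polynomial representation of $f^2$ is unique. Differentiating $q=f^2$ at $1$ gives $2f'(1)=q'(1)=\beta+2\gamma=u+v$. The identity $(1-s)+sx^2-f(x)^2=\tfrac{\beta}{2}(x-1)^2$, already established in the proof of (v)$\Rightarrow$(vi), together with Proposition~\ref{prop:master-representation}, yields $\altPhi_f=\tfrac{\beta}{2}d_{\rm BW}^2$ in every matrix size, so $c=\beta/2$.

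For the fixed-weight parametrization, impose $s=f'(1)$, so that $\beta+2\gamma=2s$, and use $\alpha+\beta+\gamma=1$. Solving gives $\beta=2(s-\gamma)$ and $\alpha=1-2s+\gamma$, which produces \eqref{eq:fixed-weight-exact-dpi} and $c=s-\gamma$. The range of $\gamma$ follows from translating the constraints of (iv):
\begin{itemize}
\item $\gamma\ge0$ gives the lower bound $0$;
\item $\alpha\ge0$ gives $\gamma\ge 2s-1$;
\item $\beta\ge0$ gives $\gamma\le s$;
\item the discriminant condition $\beta^2\ge4\alpha\gamma$, rewritten as $(\beta+2\gamma)^2-4\gamma\ge0$, becomes $4s^2-4\gamma\ge0$, that is, $\gamma\le s^2$. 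This subsumes $\gamma\le s$.
\end{itemize}
Conversely, every $\gamma$ in this range gives nonnegative $\alpha,\beta,\gamma$ with the discriminant condition. So by (iv)$\Rightarrow$(v) the function $f$ is an admissible $f_{u,v}$, and the parametrization is exact. Equivalently, $u,v$ are the roots of $z^2-2sz+\gamma$, with $u+v=2s$ and $uv=\gamma$.

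For nontriviality, note that $f_{u,v}\equiv1$ forces $q\equiv1$, so $\beta=\gamma=0$. Then $uv=0$ and $u+v=0$, giving $u=v=0$. Similarly, $f_{u,v}(x)=x$ forces $q=x^2$, so $\alpha=\beta=0$ and $\gamma=1$, giving $uv=1$ with $u,v\in[0,1]$, hence $u=v=1$. The converse implications are immediate.

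Finally, suppose $\beta=u+v-2uv=0$. Write $\beta=u(1-v)+v(1-u)$; both terms are nonnegative, so both vanish. This leaves $u=0$ (which forces $v=0$) or $u=1$ (which forces $v=1$), and both are trivial cases. Hence $\beta>0$ whenever $f_{u,v}$ is nontrivial.

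No step is a real obstacle. The only point that needs care is confirming that the discriminant condition reduces exactly to $\gamma\le s^2$ and that it makes $\gamma\le s$ redundant. This comes down to the computation $(\beta+2\gamma)^2-4\gamma=\beta^2-4\alpha\gamma$ already used in the proof of Theorem~\ref{thm:quadratic-square-classification}, and the point to check is that it applies with $\beta+2\gamma=2s$.
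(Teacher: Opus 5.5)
Your proposal is correct and follows essentially the same route as the paper. Both arguments expand $f_{u,v}^2$, use $2s=q'(1)=u+v$ together with Proposition~\ref{prop:master-representation} to get $c=\beta/2$, substitute $\alpha=1-2s+\gamma$ and $\beta=2(s-\gamma)$ at fixed $s$, and write $\beta=u(1-v)+v(1-u)$ for positivity.

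The only differences are small. You derive the range $\max\{0,2s-1\}\le\gamma\le s^2$ from the inequalities in (iv) via $\beta^2-4\alpha\gamma=4(s^2-\gamma)$, checking both directions. The paper instead reads it off as the range of $uv$ over $u,v\in[0,1]$ with $u+v=2s$. You also verify explicitly the nontriviality equivalence that the paper leaves implicit.
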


\begin{proof}
	Expansion of \eqref{eq:fuv} gives
	\eqref{eq:fuv-coefficients}, except for the identity involving $c$.
	Since $2f'(1)=(f^2)'(1)=u+v$, we have
	$s=(u+v)/2$, and hence
	\[
	(1-s)-\alpha=s-\gamma=\frac{\beta}{2}.
	\]
	Proposition~\ref{prop:master-representation} now gives
	$c=\beta/2$ and the asserted Bures--Wasserstein identity.
	
	For fixed $s$, one has $u+v=2s$ and $\gamma=uv$.  As
	$u,v\in[0,1]$ vary with this fixed sum, their product ranges exactly
	over
	\[
	\max\{0,2s-1\}\le\gamma\le s^2.
	\]
	Substitution of
	\[
	\alpha=1-2s+\gamma,
	\qquad
	\beta=2(s-\gamma)
	\]
	gives \eqref{eq:fixed-weight-exact-dpi} and the corresponding
	Bures--Wasserstein factor.
	
	Finally,
	\[
	\beta=u(1-v)+v(1-u),
	\]
	which vanishes exactly at $(u,v)=(0,0)$ and $(1,1)$.  In every
	other case $\beta>0$.
\end{proof}

Identity \eqref{eq:exact-bw} defines a continuous extension of
$\altPhi_f$ to positive semidefinite pairs whenever the equivalent
conditions of Theorem~\ref{thm:quadratic-square-classification} hold.
For this extension, exact DPI under positive trace-preserving maps remains
valid on the closed cone.

Two notable functions satisfying the exact-DPI property are worth
mentioning.  The
choice $(u,v)=(0,1)$ gives $f(x)=\sqrt{x}$ and
$\altPhi_f=\tfrac12d_{\rm BW}^2$, while the diagonal choice $u=v=s$
gives $f(x)=(1-s)+sx$ and
$\altPhi_f=s(1-s)d_{\rm BW}^2$.

For the power family $f_t(x)=x^t$, where $0<t<1$,
	Theorem~\ref{thm:quadratic-square-classification} itself shows that exact
	DPI holds if and only if $t=1/2$: indeed, $f_t(x)^2=x^{2t}$ is a
	polynomial of degree at most two if and only if $2t=1$.  When $t\ne1/2$,
	the last assertion of that theorem gives a counterexample under diagonal
	pinching on real-symmetric faithful qubit states.  This conclusion agrees
	with \cite[Theorem~5.4]{LeVuHoDinh2026}.
	Positive-definite numerical pinching examples, including one at
	$t=0.3$, also appear in \cite[Proposition~4.3]{VuongDinh2026}.

	The sharp comparison also yields a dimension-free approximate form of data
	processing.

	\begin{corollary}
		\label{cor:relaxed-dpi}
 Let $f:(0,\infty)\to(0,\infty)$ be a nontrivial normalized operator
	monotone function, let $s=f'(1)$, and let $c_f,C_f$ be the exact
	constants from Theorem~\ref{thm:individual-constants}.  Let
$\Psi:\mathbb M_m\to\mathbb M_\ell$ be positive and trace preserving.
If
\[
A,B\in\mathbb P_m,
\qquad
\Psi(A),\Psi(B)\in\mathbb P_\ell,
\]
then
		\begin{equation}\label{eq:relaxed-dpi}
			\altPhi_f(\Psi(A),\Psi(B))
			\le \frac{C_f}{c_f}\,\altPhi_f(A,B)
			\le \frac{1-s+s^2}{s(1-s)}\,\altPhi_f(A,B).
		\end{equation}
	\end{corollary}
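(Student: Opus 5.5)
The plan is to chain three inequalities: the upper comparison of Theorem~\ref{thm:individual-constants} applied at the image pair, the Bures--Wasserstein contractivity \eqref{eq:bw-contractivity} under positive trace-preserving maps, and the lower comparison of Theorem~\ref{thm:individual-constants} applied at the original pair.

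Since $\Psi(A),\Psi(B)\in\mathbb P_\ell$, the pair $(\Psi(A),\Psi(B))$ lies in the domain of $\altPhi_f$ in size $\ell$. Because Theorem~\ref{thm:individual-constants} holds in every matrix size with the same constants, its upper bound gives
\[
\altPhi_f(\Psi(A),\Psi(B))\le C_f\,d_{\rm BW}(\Psi(A),\Psi(B))^2 .
\]
Next, \eqref{eq:bw-contractivity}, which was proved for every PTP map via the measurement characterization of the root fidelity, yields
\[
d_{\rm BW}(\Psi(A),\Psi(B))^2\le d_{\rm BW}(A,B)^2 .
\]
Finally, the lower bound of Theorem~\ref{thm:individual-constants} in size $m$ gives
\[
d_{\rm BW}(A,B)^2\le c_f^{-1}\altPhi_f(A,B).
\]
Here $c_f\ge s(1-s)>0$ because $f$ is nontrivial, so the division is legitimate. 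Combining the three steps gives the first inequality in \eqref{eq:relaxed-dpi}.

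For the second inequality, I will use the bounds $s(1-s)\le c_f\le C_f\le 1-s+s^2$ from Theorem~\ref{thm:individual-constants}. These give $C_f/c_f\le (1-s+s^2)/(s(1-s))$, and multiplying by the nonnegative quantity $\altPhi_f(A,B)$ finishes the proof.

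There is no real obstacle. The only points to check are that the constants $c_f,C_f$ do not depend on the matrix size and that positive definiteness of the images is available, and both are already built into the statement and into Theorem~\ref{thm:individual-constants}.
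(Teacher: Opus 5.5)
Your proposal is correct and is essentially the paper's proof. You chain the upper comparison of Theorem~\ref{thm:individual-constants} at the image pair, the PTP contractivity \eqref{eq:bw-contractivity}, and the lower comparison at the original pair, then finish with $s(1-s)\le c_f\le C_f\le 1-s+s^2$; the paper does the same, and your added checks (dimension-independence of $c_f,C_f$ and $c_f>0$) are the right ones.
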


	\begin{proof}
		This follows from Theorem~\ref{thm:individual-constants},
		contractivity \eqref{eq:bw-contractivity}, and
		$c_f\ge s(1-s)$, $C_f\le1-s+s^2$.
	\end{proof}

We close the section with a tensor-rigidity consequence.  For
$f_t(x)=x^t$, the restriction of $\mathcal F_{f_t}$ to density operators
is precisely the weighted spectral fidelity considered in
\cite[Proposition~3.4]{LeVuHoDinh2026}, where its tensor multiplicativity
was proved.  The following proposition extends this multiplicativity to
arbitrary positive definite inputs and establishes the converse among all
normalized operator monotone functions.

	\begin{proposition}
		\label{prop:tensor-rigidity}
		The identity
		\begin{equation}\label{eq:tensor-multiplicativity}
			\mathcal F_f(A_1\otimes A_2,B_1\otimes B_2)
			=\mathcal F_f(A_1,B_1)\mathcal F_f(A_2,B_2)
		\end{equation}
		for all positive definite pairs of arbitrary sizes holds if and only if
			$f(x)=x^t$ for some $t\in[0,1]$.  For any nontrivial tensor-multiplicative normalized operator monotone
			function $f$, exact DPI for $\altPhi_f$, symmetry of $\mathcal F_f$,
			and the identity $f(x)=\sqrt{x}$ are equivalent.
	\end{proposition}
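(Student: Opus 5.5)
The argument has two halves: the power functions are multiplicative because the relevant square roots factor over tensor products, and scalar test pairs reduce multiplicativity to the Cauchy equation $f(xy)=f(x)f(y)$. The equivalence for the power family then follows from Theorem~\ref{thm:quadratic-square-classification}.

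\emph{Sufficiency.} Let $f(x)=x^t$. For $A=A_1\otimes A_2$ and $B=B_1\otimes B_2$ we have
\[
A^{1/2}BA^{1/2}=(A_1^{1/2}B_1A_1^{1/2})\otimes(A_2^{1/2}B_2A_2^{1/2}).
\]
The principal square root factors over tensor products of positive definite matrices, so
\[
A^{-1}\sharp B=(A_1^{-1}\sharp B_1)\otimes(A_2^{-1}\sharp B_2).
\]
Alternatively, this follows from the uniqueness of the positive definite solution of the Riccati equation \eqref{eq:riccati}, since $R_1\otimes R_2$ is positive definite and solves it. Write $R_i=A_i^{-1}\sharp B_i$. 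Since $R_1\otimes R_2$ has commuting tensor factors, $(R_1\otimes R_2)^{2t}=R_1^{2t}\otimes R_2^{2t}$. Therefore
\[
\Tr\bigl((A_1\otimes A_2)(R_1\otimes R_2)^{2t}\bigr)=\Tr(A_1R_1^{2t})\Tr(A_2R_2^{2t}),
\]
which is \eqref{eq:tensor-multiplicativity}.

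\emph{Necessity.} Take $1\times1$ pairs $A_1=A_2=1$, $B_1=x^2$, $B_2=y^2$. Then $\mathcal F_f(1,x^2)=f(x)^2$, and \eqref{eq:tensor-multiplicativity} gives $f(xy)^2=f(x)^2f(y)^2$. Since $f>0$, this means $f(xy)=f(x)f(y)$ on $(0,\infty)$. The function $f$ is continuous, indeed monotone, so $g(u)=\log f(e^u)$ is additive and continuous, hence linear. Thus $f(x)=x^t$. Monotonicity gives $t\ge0$, and the Kubo--Ando bound $f\le a_s$ with sublinear growth gives $t\le1$. One could instead cite operator monotonicity of $x^t$ exactly for $t\in[0,1]$.

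\emph{Equivalences for nontrivial $f=x^t$, $0<t<1$.}
\begin{itemize}
\item[(a)] By Theorem~\ref{thm:quadratic-square-classification}(iii), exact DPI holds exactly when $x^{2t}$ is a polynomial of degree at most two, that is, when $t=1/2$. This is the remark following Corollary~\ref{cor:exact-dpi-parameters}.
\item[(b)] For $f=\sqrt{x}$ we have $\mathcal F_f(A,B)=\Tr(AR)=\mathcal F(A,B)$, the root fidelity, which is symmetric. For the converse, test symmetry on scalars: $\mathcal F_f(1,x^2)=x^{2t}$ while $\mathcal F_f(x^2,1)=x^2\cdot x^{-2t}$. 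Equality for all $x$ forces $2t=2-2t$, so $t=1/2$.
\end{itemize}
The only step needing care is the tensor factorization of $A^{-1}\sharp B$, and uniqueness in the Riccati equation settles it cleanly.
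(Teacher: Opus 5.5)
Your proposal is correct and follows essentially the same route as the paper. For sufficiency, both use the tensor factorization $(A_1\otimes A_2)^{-1}\sharp(B_1\otimes B_2)=R_1\otimes R_2$. For necessity, both use the scalar pairs $(1,x^2),(1,y^2)$ to reach $f(xy)=f(x)f(y)$. For the equivalences, both combine Theorem~\ref{thm:quadratic-square-classification} with the scalar symmetry test $x^{2t}=x^{2(1-t)}$. The only difference is that you spell out steps the paper compresses: the Cauchy-equation reduction through $\log f(e^u)$, and the bound $t\le1$ from $f\le a_s$.
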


	\begin{proof}
		If $f(x)=x^t$ and $R_j=A_j^{-1}\sharp B_j$, the Riccati
		characterization gives
		\[
		(A_1\otimes A_2)^{-1}\sharp(B_1\otimes B_2)=R_1\otimes R_2,
		\]
		so functional calculus and factorization of the trace prove
		\eqref{eq:tensor-multiplicativity}.  Conversely, applying that identity
		to the scalar pairs $(1,x^2)$ and $(1,y^2)$ gives
		$f(xy)^2=f(x)^2f(y)^2$.  Positivity therefore gives
		$f(xy)=f(x)f(y)$.  Continuity, monotonicity, and concavity then yield
		$f(x)=x^t$ with $t\in[0,1]$.

		In the nontrivial case, $0<t<1$. By
		Theorem~\ref{thm:quadratic-square-classification}, exact DPI implies that
		\[
		f(x)^2=x^{2t}
		\]
		is a polynomial of degree at most two. Since $0<2t<2$, this is possible
		only when $2t=1$, and hence $t=1/2$.
		Symmetry applied to the scalar pair $(1,x^2)$ likewise gives
		$x^{2t}=x^{2(1-t)}$, hence $t=1/2$.  The converse follows from the
		standard properties of Uhlmann root fidelity and
		$\altPhi_{f_{1/2}}=\tfrac12d_{\rm BW}^2$.
	\end{proof}

	\section{Barycenters}
	\label{sec:barycenters}
	
Throughout this section,
	$f:(0,\infty)\to(0,\infty)$ is a nontrivial normalized operator
	monotone function, $q(x):=f(x)^2$, and
	$s:=f'(1)\in(0,1)$.  For $A\in\Pm$ and
	$X\succeq0$, put
	\begin{equation}\label{eq:RA}
		C_A(X):=A^{1/2}XA^{1/2},
		\qquad
		R_A(X):=A^{-1/2}C_A(X)^{1/2}A^{-1/2}.
	\end{equation}
	For $X\succ0$, this is $A^{-1}\sharp X$, and
	\begin{equation}\label{eq:RA-properties}
		R_A(X)AR_A(X)=X,
		\qquad
		\Tr\!\left(AR_A(X)^2\right)=\Tr X.
	\end{equation}
	
Fix $A_1,\ldots,A_n\in\Pm$ and a positive probability vector
$\omega=(w_1,\ldots,w_n)$.
	Define
	\begin{align}
		H_{f,A}(X)
		&:=\Tr\!\left(Aq(R_A(X))\right), \label{eq:HfA}\\
		H_{f,\omega}(X)
		&:=\sum_{j=1}^n w_jH_{f,A_j}(X), \label{eq:Hfomega}\\
		\mathcal E_f(X)
		&:=\sum_{j=1}^n w_j\altPhi_f(A_j,X) \notag\\
		&=C_\omega+s\Tr X-H_{f,\omega}(X), \label{eq:energy}
	\end{align}
	where
	\[
	C_\omega:=(1-s)\sum_{j=1}^n w_j\Tr A_j.
	\]
Since $f(0+)$ exists and is finite, all these functions extend
	continuously to the closed cone
	\[
	\overline{\mathbb{P}}_m:=\{X\in\Hm:X\succeq0\}.
	\]
	We use the following standard optimization terminology.  If
	$F:\mathcal D\to\mathbb R$ is a real-valued function on a feasible set
	$\mathcal D\subseteq\Hm$, then $X_*\in\mathcal D$ is a
	\emph{global minimizer} if
	$F(X_*)\le F(X)$ for every $X\in\mathcal D$, and a \emph{local
		minimizer} if the same inequality holds for every $X\in\mathcal D$
	sufficiently close to $X_*$.  The set of global minimizers is denoted by
	$\operatorname*{argmin}_{X\in\mathcal D}F(X)$.  If $\mathcal D$ is open
	and $F$ is differentiable, then $X_*$ is a \emph{critical point} if
	$DF(X_*)[Y]=0$ for every direction $Y\in\Hm$; the directional derivative
	$DF(X_*)[Y]$ is also called the first variation of $F$ at $X_*$.  The
	function $F$ is called \emph{coercive} on $\mathcal D$ if
	$F(X_k)\to+\infty$ whenever $X_k\in\mathcal D$ and
	$\|X_k\|_{\rm HS}\to\infty$.  A \emph{closed-cone minimizer} of
	$\mathcal E_f$ means a global minimizer over $\overline{\mathbb{P}}_m$.

	\subsection{Closed-cone existence, first variation, and scalar uniqueness}
	
We first establish the existence of a minimizer on the closed cone
	$\overline{\mathbb{P}}_m$, without any additional assumption on $f$.

\begin{proposition}
		\label{prop:closed-cone}
		For every nontrivial normalized operator monotone function $f$, the set
		\[
		\operatorname*{argmin}_{X\succeq0}\mathcal E_f(X)
		\]
		is nonempty and compact.  Moreover,
		\begin{equation}\label{eq:open-closed}
			\inf_{X\in\Pm}\mathcal E_f(X)
			=\min_{X\succeq0}\mathcal E_f(X),
		\end{equation}
		and $X=0$ is not a minimizer.
	\end{proposition}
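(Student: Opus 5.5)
We plan to use the direct method: continuity of $\mathcal E_f$ on the closed cone, coercivity, and a comparison argument at $X=0$.

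\emph{Continuity and nonnegativity.} On $\overline{\mathbb P}_m$ the map $X\mapsto C_{A}(X)^{1/2}$ is continuous, so $R_A(X)$ is continuous. The function $q=f^2$ extends continuously to $[0,\infty)$ because $f(0+)$ is finite, hence $H_{f,A}$ is continuous. Therefore $\mathcal E_f$ is continuous on $\overline{\mathbb P}_m$. By Theorem~\ref{thm:sharp-comparison}, $\altPhi_f(A_j,X)\ge s(1-s)d_{\rm BW}(A_j,X)^2\ge0$ for $X\succ0$. This extends by continuity to $X\succeq0$, because $d_{\rm BW}^2$, written as $\Tr A+\Tr X-2\Tr C_A(X)^{1/2}$, is continuous on the closed cone. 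Since $\Pm$ is dense in $\overline{\mathbb P}_m$ and $\mathcal E_f$ is continuous, \eqref{eq:open-closed} follows once a minimum on the closed cone exists.

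\emph{Coercivity and existence.} Again using the lower bound,
\[
\mathcal E_f(X)\ge s(1-s)\sum_j w_j d_{\rm BW}(A_j,X)^2 .
\]
We also have
\[
d_{\rm BW}(A_j,X)^2=\Tr A_j+\Tr X-2\Tr C_{A_j}(X)^{1/2}\ge \Tr A_j+\Tr X-2\sqrt{m}\,\|A_j\|^{1/2}(\Tr X)^{1/2}.
\]
This uses $\Tr Z^{1/2}\le \sqrt m(\Tr Z)^{1/2}$ and $\Tr C_A(X)\le\|A\|\Tr X$. For $X\succeq0$, $\|X\|_{\rm HS}\le\Tr X$, so the right side tends to $+\infty$ as $\|X\|_{\rm HS}\to\infty$, and $\mathcal E_f$ is coercive. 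Sublevel sets are then closed and bounded, hence compact. A continuous function attains its minimum on such a set, and the argmin is a closed subset of a compact sublevel set, hence nonempty and compact.

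\emph{$X=0$ is not a minimizer.} We compare with $X=\varepsilon I$, or more naturally with $X=tA_1$ scaled. First, $R_A(0)=0$, so $\mathcal E_f(0)=C_\omega-q(0)\sum_j w_j\Tr A_j=(1-s-f(0+)^2)\sum_j w_j\Tr A_j$. Next we compute the one-sided derivative of $t\mapsto\mathcal E_f(tI)$ at $t=0^+$. Here $R_A(tI)=\sqrt t\,A^{-1/2}A^{1/2}{}^{\,}\!\cdots$; more precisely $R_A(tI)=\sqrt t\,A^{-1/2}(A)^{1/2}A^{-1/2}=\sqrt t\,A^{-1/2}$. So
\[
\mathcal E_f(tI)=C_\omega+smt-\sum_j w_j\Tr\bigl(A_j q(\sqrt t A_j^{-1/2})\bigr).
\]
Since $q$ is increasing and not constant (as $f$ is nontrivial), $q(\sqrt t\,\lambda)-q(0)$ is positive, and we aim to show it dominates $smt$ for small $t$. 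The main obstacle is this case, when $f'(0+)$ is finite, where $q(\sqrt t\lambda)-q(0)\approx 2f(0)f'(0)\sqrt t\lambda$ is of order $\sqrt t$. That beats the linear term $smt$ provided $f(0+)f'(0+)>0$. If $f(0+)=0$ the analysis changes: $q(\sqrt t\lambda)\sim f'(0)^2t\lambda^2$, comparable to $t$. In that case we would instead use the direction $X=tB$ with $B=A_1$ (or a suitable $B$) so that $R_{A_1}(tA_1)=\sqrt t I$, and use the concavity bound $f(x)\ge x$ on $(0,1]$, which gives $q(\sqrt t)\ge t$. This needs to be weighed against $st\Tr B$; strictness would come from $f(x)>x$ for $x<1$ when $f\not\equiv x$. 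If the single-direction comparison proves delicate, the fallback is a direct first-variation estimate of $H_{f,\omega}$ along $t\mapsto tX_0$ using concavity of $f$. Either way, the result is $\mathcal E_f(tX_0)<\mathcal E_f(0)$ for small $t>0$.
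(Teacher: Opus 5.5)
Your treatment of continuity, coercivity, existence, compactness of the argmin, and \eqref{eq:open-closed} is correct and follows the paper's route. The only difference is your bound $\Tr C_A(X)^{1/2}\le\sqrt m\,\|A\|^{1/2}(\Tr X)^{1/2}$, where the paper uses the dimension-free bound $\Tr C_A(X)^{1/2}=\|X^{1/2}A^{1/2}\|_1\le\sqrt{\Tr X\,\Tr A}$. Either suffices for coercivity.

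The gap is in the last claim, which your proposal does not actually close.

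\emph{Case $f(0+)=0$.} You leave the direction $tI$ because $q(\sqrt t\lambda)$ is ``comparable to $t$'', and switch to $X=tA_1$. But only $R_{A_1}(tA_1)=\sqrt t\,I$ is scalar. For $j\ge2$, $R_{A_j}(tA_1)=\sqrt t\,(A_j^{-1}\sharp A_1)$ is in general not scalar, so $q(\sqrt t)\ge t$ says nothing about those terms. The closing ``fallback'' is a plan, not an argument.

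The worry was unfounded. With $a_0:=f(0+)=0$, concavity gives $f(r)\ge r$ on $[0,1]$. Take $t$ small enough that $\sqrt t\,A_j^{-1/2}\preceq I$. Since the matrices involved commute, $q(\sqrt t\,A_j^{-1/2})\succeq tA_j^{-1}$, hence $\Tr\bigl(A_jq(\sqrt t\,A_j^{-1/2})\bigr)\ge mt$ and
\[
\mathcal E_f(tI)-\mathcal E_f(0)\le smt-mt=-(1-s)mt<0 .
\]
So the strict decrease comes from $s<1$, not from $f(x)>x$ as you suggest. Your $tA_1$ route can also be completed for every $j$, using $q(R)\succeq R^2$ for $0\preceq R\preceq I$ together with the trace identity $\Tr\bigl(A_jR_{A_j}(X)^2\bigr)=\Tr X$ from \eqref{eq:RA-properties}.

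\emph{Case $a_0>0$.} The Taylor heuristic ``$\approx 2f(0)f'(0)\sqrt t\lambda$'' needs to be an inequality. Doing so also removes any discussion of whether $f'(0+)$ is finite or positive. Nontriviality forces $a_0<1$: if $a_0=1$, monotonicity and $f(1)=1$ give $f\equiv1$ on $(0,1]$, hence everywhere by analyticity. The chord bound $f(r)\ge a_0+(1-a_0)r$ on $[0,1]$ then yields $q(r)-a_0^2\ge 2a_0(1-a_0)r$. Therefore
\[
H_{f,\omega}(tI)-H_{f,\omega}(0)\ge 2a_0(1-a_0)\sqrt t\sum_j w_j\Tr A_j^{1/2},
\]
which dominates $smt$ as $t\downarrow0$.

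This two-case chord argument along the single direction $tI$ is exactly the paper's proof.
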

	
	\begin{proof}
		Put
		\[
		c_-:=s(1-s),\qquad
		T_\omega:=\sum_{j=1}^n w_j\Tr A_j,\qquad
		M_\omega:=\sum_{j=1}^n w_j\sqrt{\Tr A_j}.
		\]
		The lower bound in \eqref{eq:sharp-sandwich} extends continuously to
		$X\succeq0$. Since
		\[
		\Tr(A^{1/2}XA^{1/2})^{1/2}
		=\|X^{1/2}A^{1/2}\|_1
		\le \|X^{1/2}\|_2\|A^{1/2}\|_2
		=\sqrt{\Tr X\,\Tr A},
		\]
	      we obtain
		\begin{eqnarray} \label{eq:coercivity}
				\mathcal E_f(X)
			&\ge& c_-\bigl(\Tr X-2M_\omega\sqrt{\Tr X}+T_\omega\bigr)\\ \nonumber
			&=&c_- \Big( (\sqrt{\Tr X}-M_\omega)^2 +T_{\omega} -M^2_{\omega}.   \Big)
			\end{eqnarray}
		
		Note that
		\begin{eqnarray*}
			M_{\omega}^2 = \Big( \sum_{j=1}^n \sqrt{\omega_j}. \sqrt{\omega_j \Tr A_j} \Big)^2 \le \sum_{j=1}^n \omega_j \sum_{j=1}^n \omega_j \Tr A_j=T_{\omega}.
			\end{eqnarray*}
Thus $\mathcal{E}_f(X) \rightarrow \infty$ as $\operatorname{Tr} X \rightarrow \infty$. 
		Hence, for every $a\in\mathbb R$, the sublevel set
		\[
		\{X\in\overline{\mathbb{P}}_m:\mathcal E_f(X)\le a\}
		\]
		is bounded: indeed, for $X\succeq0$, bounded trace bounds every
		eigenvalue of $X$ and hence bounds $X$ itself.  The sublevel set is also
		closed by continuity, and therefore compact in the finite-dimensional
		space $\Hm$.  Taking
		$a=\mathcal E_f(I)$ and applying continuity on this nonempty compact set
		shows that $\mathcal E_f$ attains its minimum on $\overline{\mathbb{P}}_m$.
		Its minimizer set is a closed subset of a compact sublevel set, and is
		therefore compact.  If $X_0\succeq0$ is a minimizer, then
		$X_0+\varepsilon I\in\Pm$ and, by continuity,
		\[
		\mathcal E_f(X_0+\varepsilon I)
		\longrightarrow \mathcal E_f(X_0)
		\qquad (\varepsilon\downarrow0).
		\]
		This proves \eqref{eq:open-closed}.
		
	It remains to show that $X=0$ is not a minimizer. Set $a_0=f(0+)$.  Nontriviality gives
		$0\le a_0<1$: if $a_0=1$, monotonicity and $f(1)=1$ force
		$f\equiv1$ on $(0,1]$, hence everywhere by analyticity.  For
		$\varepsilon>0$,
		\[
		R_j:=R_{A_j}(\varepsilon I)=\sqrt{\varepsilon}\,A_j^{-1/2}.
		\]
		Choose $\varepsilon$ so small that
		$\sqrt{\varepsilon}\,\|A_j^{-1/2}\|\le1$ for every $j$.
	If $a_0>0$, concavity and normalization give
		\[
		f(r)\ge a_0+(1-a_0)r,\qquad 0\le r\le1.
		\]
		Consequently,
		\[
		f(r)^2-a_0^2
		\ge2a_0(1-a_0)r+(1-a_0)^2r^2
		\ge2a_0(1-a_0)r.
		\]
		Our choice of $\varepsilon$ ensures that $0\preceq R_j\preceq I$.
		Functional calculus therefore yields
		\[
		f(R_j)^2-a_0^2I
		\succeq2a_0(1-a_0)R_j.
		\]
		Since $R_{A_j}(0)=0$, we have
		$H_{f,A_j}(0)=a_0^2\Tr A_j$, and hence
		\begin{align*}
			H_{f,A_j}(\varepsilon I)-H_{f,A_j}(0)
			&=\Tr\!\left(
			A_j\bigl(f(R_j)^2-a_0^2I\bigr)
			\right)\\
			&\ge2a_0(1-a_0)\Tr(A_jR_j)\\
			&=2a_0(1-a_0)\sqrt{\varepsilon}\,
			\Tr A_j^{1/2}.
		\end{align*}
		Summing over $j$ gives
		\[
		H_{f,\omega}(\varepsilon I)-H_{f,\omega}(0)
		\ge2a_0(1-a_0)\sqrt{\varepsilon}
		\sum_{j=1}^n w_j\Tr A_j^{1/2}.
		\]  
		The coefficient
		\[
		C:=2a_0(1-a_0)\sum_{j=1}^n w_j\Tr A_j^{1/2}
		\]
		is strictly positive.  Since
		\[
		\frac{C\sqrt{\varepsilon}}{sm\varepsilon}
		=\frac{C}{sm\sqrt{\varepsilon}}
		\longrightarrow+\infty
		\qquad(\varepsilon\downarrow0),
		\]
		we have
		\[
		H_{f,\omega}(\varepsilon I)-H_{f,\omega}(0)
		>sm\varepsilon
		\]
		for all sufficiently small $\varepsilon>0$.
		
		If $a_0=0$, then the continuous extension of $f$ satisfies
		$f(0)=0$ and $f(1)=1$.  By concavity, for $0\le r\le1$,
		\[
		f(r)
		=f\bigl((1-r)0+r1\bigr)
		\ge(1-r)f(0)+rf(1)
		=r.
		\]
		Hence functional calculus gives $f(R_j)\succeq R_j$.  Since both
		matrices are functions of $R_j$, they commute, and therefore
		$f(R_j)^2\succeq R_j^2$.  Hence,
		\begin{align*}
			H_{f,A_j}(\varepsilon I)-H_{f,A_j}(0)
			&=\Tr\!\left(A_jf(R_j)^2\right)\\
			&\ge\Tr(A_jR_j^2)
			=\Tr(\varepsilon I)
			=m\varepsilon.
		\end{align*}
Therefore,
		\[
		H_{f,\omega}(\varepsilon I)-H_{f,\omega}(0)
		\ge m\varepsilon>sm\varepsilon,
		\]
		because $0<s<1$. Thus, in both cases,
		\[
		H_{f,\omega}(\varepsilon I)-H_{f,\omega}(0)
		>sm\varepsilon
		\]
		for all sufficiently small $\varepsilon>0$.  In either case, \eqref{eq:energy} gives
		\begin{align*}
			\mathcal E_f(\varepsilon I)-\mathcal E_f(0)
			=sm\varepsilon
			-\bigl(
			H_{f,\omega}(\varepsilon I)-H_{f,\omega}(0)
			\bigr)<0.
		\end{align*}
		Consequently,
		$\mathcal E_f(\varepsilon I)<\mathcal E_f(0)$ for all sufficiently
		small $\varepsilon>0$, and hence $X=0$ is not a minimizer.

	\end{proof}
	
	For $C\in\Pm$, denote the Fr\'echet derivative of the square-root map by
	\[
	\mathcal S_C[Z]:=D(C^{1/2})[Z]
	=\frac1\pi\int_0^\infty
	\lambda^{1/2}(\lambda I+C)^{-1}
	Z(\lambda I+C)^{-1}\,d\lambda.
	\]
	It is positive and self-adjoint for the Hilbert--Schmidt pairing; see, for
	example, \cite[Chapter~6]{Higham2008}.  More generally, for a differentiable
	real function $g$ on $(0,\infty)$, define
	\begin{equation}\label{eq:Kg}
		\mathcal K_g(A;X)
		:=A^{1/2}\mathcal S_{C_A(X)}\!\left[
		A^{-1/2}Dg(R_A(X))[A]A^{-1/2}
		\right]A^{1/2},
		\qquad X\in\Pm.
	\end{equation}
	
 \begin{proposition}
		\label{prop:euler}
		Under the standing assumptions, recall that $q(x)=f(x)^2$ and
		$s=f'(1)\in(0,1)$.  For $A,X\in\Pm$ and $Y\in\Hm$,
		\begin{equation}\label{eq:first-variation}
			D_X\altPhi_f(A,X)[Y]
			=\Tr\!\left((sI-\mathcal K_q(A;X))Y\right).
		\end{equation}
		Hence every local minimizer $X_*\in\Pm$ satisfies the barycenter equation
		\begin{equation}\label{eq:euler}
			sI=\sum_{j=1}^n w_j\mathcal K_q(A_j;X_*).
		\end{equation}
		Conversely, if $\mathcal E_f$ is convex on $\Pm$, every solution of
		\eqref{eq:euler} is a global minimizer over $\Pm$.  If $\mathcal E_f$ is
		strictly convex on $\Pm$, then \eqref{eq:euler} has at most one solution;
		whenever such a solution exists, it is the unique global minimizer over
		$\Pm$.
\end{proposition}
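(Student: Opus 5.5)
Since $(1-s)\Tr A$ does not depend on $X$ and $D_X(s\Tr X)[Y]=s\Tr Y$, the first variation formula \eqref{eq:first-variation} comes down to one identity,
\[
D_X H_{f,A}(X)[Y]=\Tr\bigl(\mathcal K_q(A;X)Y\bigr),
\]
and I would prove it by the chain rule. Write $H_{f,A}(X)=\Tr(Aq(R))$ with $R=R_A(X)=A^{-1/2}C_A(X)^{1/2}A^{-1/2}$. The map $X\mapsto C_A(X)$ is linear with derivative $Y\mapsto A^{1/2}YA^{1/2}$, so
\[
DR[Y]=A^{-1/2}\,\mathcal S_{C}\bigl[A^{1/2}YA^{1/2}\bigr]\,A^{-1/2},\qquad C=C_A(X).
\]
Next, $D_X\Tr(Aq(R))[Y]=\Tr\bigl(A\,Dq(R)[DR[Y]]\bigr)$. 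For Fréchet derivatives of functional calculus, Daleckii--Krein says $Dq(R)$ is self-adjoint with respect to the Hilbert--Schmidt pairing: in an eigenbasis of $R$ it is Schur multiplication by the real symmetric divided-difference matrix $q^{[1]}$. So $\Tr(A\,Dq(R)[Z])=\Tr(Dq(R)[A]\,Z)$.

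Putting $Z=DR[Y]$ and moving the outer $A^{\pm1/2}$ factors by cyclicity, the expression becomes $\Tr\bigl(A^{-1/2}Dq(R)[A]A^{-1/2}\,\mathcal S_C[A^{1/2}YA^{1/2}]\bigr)$. The integral formula shows that $\mathcal S_C$ is self-adjoint for the trace pairing, so this equals $\Tr\bigl(\mathcal S_C[A^{-1/2}Dq(R)[A]A^{-1/2}]\,A^{1/2}YA^{1/2}\bigr)$. One more use of cyclicity gives $\Tr(\mathcal K_q(A;X)Y)$, with $\mathcal K_q$ as in \eqref{eq:Kg}. Every operator that appears is Hermitian because $A$ and $q$ are real, so the pairing is real, the trace is real, and no real parts are needed. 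Smoothness is justified as in Corollary~\ref{cor:quantum-divergence}: $q$ is analytic on $(0,\infty)$ and $R$ stays positive definite near $X\in\Pm$.

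The barycenter equation \eqref{eq:euler} then follows from \eqref{eq:energy}. At an interior local minimizer $X_*$ the set $\Pm$ is open, so $D\mathcal E_f(X_*)[Y]=0$ for every $Y\in\Hm$. Linearity gives $\Tr\bigl((sI-\sum_j w_j\mathcal K_q(A_j;X_*))Y\bigr)=0$ for all Hermitian $Y$. The matrix in parentheses is Hermitian, so it must vanish.

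For the converse, suppose $\mathcal E_f$ is convex on the convex open set $\Pm$ and $X_*$ solves \eqref{eq:euler}. For any $X\in\Pm$, the function $\phi(t)=\mathcal E_f(X_*+t(X-X_*))$ on $[0,1]$ is convex and differentiable with $\phi'(0)=0$, so $\phi(1)\ge\phi(0)$ and $X_*$ is a global minimizer over $\Pm$. If $\mathcal E_f$ is strictly convex and $X_1\neq X_2$ both solve the equation, then both are global minimizers with equal values. The midpoint then has a strictly smaller value, which is a contradiction. So there is at most one solution, and any solution is the unique global minimizer over $\Pm$.

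The main obstacle is the adjoint bookkeeping in the first identity: the Daleckii--Krein self-adjointness of $Dq(R)$, the self-adjointness of $\mathcal S_C$, and making sure that $A^{1/2}$ on the outside and $A^{-1/2}$ on the inside land exactly as in \eqref{eq:Kg}. I would check the formula against the scalar case $m=1$, where $R=\sqrt{x/a}$, $H=aq(\sqrt{x/a})$, and the derivative should be $q'(R)\cdot\frac{1}{2}\sqrt{a/x}$, matching $\mathcal K_q$.
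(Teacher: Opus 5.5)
Your proposal is correct and follows the paper's proof essentially step for step. It uses the same chain rule through $C_A$ and $\mathcal S_C$, Daleckii--Krein self-adjointness of $Dq(R)$, self-adjointness of $\mathcal S_C$ with cyclicity to arrive at $\mathcal K_q(A;X)$, and the standard first-order and (strict) convexity arguments for the barycenter equation; your extra remarks on Hermiticity of $\mathcal K_q$, the one-dimensional convexity reduction, and the scalar check are sound refinements rather than a different route.
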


 \begin{proof}
		Fix $A,X\in\Pm$ and $Y\in\Hm$, and put
		\[
		C:=C_A(X),\qquad R:=R_A(X).
		\]
		Since
		\[
		DC_A(X)[Y]=A^{1/2}YA^{1/2},
		\]
		the chain rule gives
		\[
		DR_A(X)[Y]
		=A^{-1/2}\mathcal S_C
		[A^{1/2}YA^{1/2}]A^{-1/2}.
		\]
		
		By the chain rule and the self-adjointness of $Dq(R)$ for the
		Hilbert--Schmidt pairing,
		\begin{align*}
			DH_{f,A}(X)[Y]
			&=\Tr\!\left(A\,Dq(R)[DR_A(X)[Y]]\right)\\
			&=\Tr\!\left(
			A^{-1/2}Dq(R)[A]A^{-1/2}
			\mathcal S_C[A^{1/2}YA^{1/2}]
			\right).
		\end{align*}
		Using the self-adjointness of $\mathcal S_C$ and cyclicity of the
		trace, we obtain
		\begin{align*}
			DH_{f,A}(X)[Y]
			&=\Tr\!\left(
			A^{1/2}\mathcal S_C\!\left[
			A^{-1/2}Dq(R)[A]A^{-1/2}
			\right]A^{1/2}Y
			\right)\\
			&=\Tr\!\left(\mathcal K_q(A;X)Y\right).
		\end{align*}
		Here the self-adjointness of $Dq(R)$ follows from the
		Daleckii--Krein formula; see \cite[Chapter~3]{Higham2008}.
		
		Since
		\[
		\altPhi_f(A,X)
		=(1-s)\Tr A+s\Tr X-H_{f,A}(X),
		\]
		it follows that
		\[
		D_X\altPhi_f(A,X)[Y]
		=\Tr\!\left((sI-\mathcal K_q(A;X))Y\right),
		\]
		proving \eqref{eq:first-variation}.
		
		If $X_*\in\Pm$ is a local minimizer of $\mathcal E_f$, then
		$D\mathcal E_f(X_*)[Y]=0$ for every $Y\in\Hm$.  Hence
		\[
		0
		=\Tr\!\left(
		\left[
		sI-\sum_{j=1}^n w_j\mathcal K_q(A_j;X_*)
		\right]Y
		\right)
		\]
		for every $Y\in\Hm$, which gives \eqref{eq:euler}.
		
		Conversely, if $\mathcal E_f$ is convex and $X_*$ satisfies
		\eqref{eq:euler}, then $D\mathcal E_f(X_*)=0$, and therefore
		\[
		\mathcal E_f(X)
		\ge\mathcal E_f(X_*)
		+D\mathcal E_f(X_*)[X-X_*]
		=\mathcal E_f(X_*)
		\]
		for every $X\in\Pm$.  Thus $X_*$ is a global minimizer.  If
		$\mathcal E_f$ is strictly convex, this minimizer, and hence the
		solution of \eqref{eq:euler}, is unique.
\end{proof}
	
For scalar data, the barycenter is unique and lies between the smallest and
largest data values, without any additional assumption on $f$.

\begin{proposition}
		\label{prop:scalar-barycenter}
		Let $m=1$, let $f$ be a nontrivial normalized operator monotone
		function, and let $a_1,\ldots,a_n>0$.  For positive weights summing to
		one, $\mathcal E_f$ has a unique minimizer $x_*\in(0,\infty)$.
		Moreover,
		\[
			\psi_f(r):=\frac{f(r)f'(r)}r
		\]
		is strictly decreasing, $\psi_f(1)=s$, and $x_*$ is the unique solution
		of
		\begin{equation}\label{eq:scalar-euler}
			\sum_{j=1}^n w_j
			\psi_f\!\left(\sqrt{\frac{x_*}{a_j}}\right)=s.
		\end{equation}
		It also satisfies
		\begin{equation}\label{eq:scalar-localization}
			\min_j a_j\le x_*\le\max_j a_j,
		\end{equation}
		with both inequalities strict unless all the data are equal.
	\end{proposition}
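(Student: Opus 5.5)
For $m=1$ I will write everything explicitly. For $a,x>0$ we have $R_a(x)=\sqrt{x/a}$, so
\[
\mathcal E_f(x)=C_\omega+sx-\sum_{j=1}^n w_j a_j\, f\bigl(\sqrt{x/a_j}\bigr)^2 .
\]
Put $r_j=\sqrt{x/a_j}$. Then $\frac{d}{dx}\bigl[a_j f(r_j)^2\bigr]=a_j\cdot 2f(r_j)f'(r_j)\cdot\frac{1}{2\sqrt{x a_j}}=\psi_f(r_j)$, and therefore
\[
\mathcal E_f'(x)=s-\sum_j w_j\psi_f\bigl(\sqrt{x/a_j}\bigr).
\]
This is the scalar instance of \eqref{eq:euler}. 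It also gives $\psi_f(1)=f(1)f'(1)=s$ at once.

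The core step is strict monotonicity of $\psi_f$. Here $f(r)/r$ is decreasing, since $f$ is concave with $f(0+)\ge0$; it is in fact strictly decreasing unless $f(x)=x$. Also $f'$ is nonincreasing by concavity and positive, because operator monotone functions that are not constant have $f'>0$. Hence $\psi_f=(f/r)\cdot f'$ is a product of two positive nonincreasing functions, one of which is strictly decreasing. So $\psi_f$ is strictly decreasing for nontrivial $f$. (Alternatively one could use $f(r)/r=f^\perp(r)^{-1}$, but the concavity argument suffices.) As a result, $\mathcal E_f'$ is strictly increasing on $(0,\infty)$. This makes $\mathcal E_f$ strictly convex, so it has at most one critical point.

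For existence I will invoke Proposition~\ref{prop:closed-cone}: a minimizer exists on $[0,\infty)$ and $0$ is not a minimizer. So the minimizer is interior and satisfies $\mathcal E_f'(x_*)=0$, which is \eqref{eq:scalar-euler}. Uniqueness follows from strict monotonicity of $\mathcal E_f'$.

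Localization follows from the same monotonicity. Let $a_{\min}=\min_j a_j$. At $x=a_{\min}$ every argument satisfies $r_j=\sqrt{a_{\min}/a_j}\le1$, so $\psi_f(r_j)\ge s$, with strict inequality for some $j$ unless all the $a_j$ are equal. Hence $\mathcal E_f'(a_{\min})\le0$, and it is strictly negative in the non-equal case. Symmetrically, $\mathcal E_f'(a_{\max})\ge0$, and it is strictly positive in the non-equal case. Because $\mathcal E_f'$ is increasing, its zero $x_*$ lies in $[a_{\min},a_{\max}]$, and it lies in the open interval unless all the data are equal. When all data are equal, $x_*=a_1$, consistent with $\psi_f(1)=s$.

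The main obstacle is the strict decrease of $\psi_f$, in particular the positivity of $f'$ and the strictness claim. I would handle this through the integral representation \eqref{eq:representing-measure}, which gives $f'(r)=\int\lambda\,h_\lambda(r)^2 r^{-2}\,d\mu_f>0$ since $s>0$, and $f(r)/r=\int((1-\lambda)r+\lambda)^{-1}d\mu_f$, which is strictly decreasing since $\mu_f\ne\delta_1$.
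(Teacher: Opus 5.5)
Your proposal is correct and follows essentially the same route as the paper. It uses the same ingredients: the derivative formula $\mathcal E_f'(x)=s-\sum_j w_j\psi_f(\sqrt{x/a_j})$, positivity of $f'$ and of $f(r)-rf'(r)$ from the representing measure together with $f''\le0$, existence and exclusion of $0$ via Proposition~\ref{prop:closed-cone}, and localization from the sign of $\mathcal E_f'$ at $\min_j a_j$ and $\max_j a_j$. The only difference is the order of steps: you get strict decrease of $\psi_f$ first by writing it as the product $(f(r)/r)\cdot f'(r)$, whereas the paper computes $H_{f,a}''$ explicitly, proves strict concavity of each $H_{f,a}$, and then reads off the monotonicity of $\psi_f$.
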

	
	\begin{proof}
		For $a,x>0$, put $r=\sqrt{x/a}$.  Then
		\[
			H_{f,a}'(x)=\psi_f(r),
			\qquad
			H_{f,a}''(x)
			=-\frac{f'(r)(f(r)-rf'(r))-rf(r)f''(r)}{2ar^3}.
		\]
		If $\mu_f$ is the representing measure and
		$d_\lambda(r)=(1-\lambda)r+\lambda$, then
		\begin{align*}
			f'(r)&=\int_{[0,1]}\frac{\lambda}{d_\lambda(r)^2}\,d\mu_f(\lambda),\\
			f(r)-rf'(r)
			&=\int_{[0,1]}\frac{(1-\lambda)r^2}{d_\lambda(r)^2}
			\,d\mu_f(\lambda).
		\end{align*}
		The two quantities are strictly positive because the excluded endpoint
		functions correspond to $\mu_f=\delta_0$ and $\mu_f=\delta_1$,
		respectively.  Since $f''\le0$, every $H_{f,a}$ is strictly concave;
		hence $\mathcal E_f$ is strictly convex.  Proposition~\ref{prop:closed-cone}
		supplies a nonzero closed-cone minimizer, which is therefore the unique
		minimizer $x_*\in(0,\infty)$.
		
		Because $H_{f,a}'(x)=\psi_f(\sqrt{x/a})$, strict concavity makes
		$\psi_f$ strictly decreasing; also $\psi_f(1)=s$.  Thus
		\[
			\mathcal E_f'(x)
			=s-\sum_{j=1}^n w_j
			\psi_f\!\left(\sqrt{\frac{x}{a_j}}\right),
		\]
		which proves \eqref{eq:scalar-euler}.  This derivative is negative below
		$\min_j a_j$ and positive above $\max_j a_j$, proving
		\eqref{eq:scalar-localization}.  At either endpoint the inequality is strict
		unless all $a_j$ coincide.
	\end{proof}
	
	\subsection{Residual certificates and the sharp power transition}
	
\begin{definition}\label{def:certificate}
		Let $m\ge1$ be an integer.  We say that $f$ satisfies the
		\emph{order-$m$ residual certificate} if there is a real constant
		$c\in\mathbb R$ such that
		\begin{equation}\label{eq:residual}
			G(x):=f(x)^2-cx^2
		\end{equation}
		is nonnegative, nonconstant, and matrix concave of order $m$ on
		$(0,\infty)$.  If, for the same constant $c$, the residual $G$ is a
		nonconstant operator monotone function on $(0,\infty)$, we say that $f$
		satisfies the \emph{dimension-free residual certificate}.  Here the word
		``certificate'' means that the stated property of the explicitly
		checkable residual $G$ will serve below as a sufficient condition for
		uniqueness and positive definiteness of the barycenter.
\end{definition}
	
We shall use the following standard consequence of positive matrix
concavity.

\begin{lemma}
	\label{lem:positive-matrix-concavity}
	Let $h:(0,\infty)\to[0,\infty)$ be matrix concave of order $m$.
	Then $h$ is matrix monotone of order $m$.  If, in addition, $h$
	is real analytic and nonconstant, then
	\[
	h'(x)>0,\qquad x>0.
	\]
\end{lemma}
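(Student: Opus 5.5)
The proof has two parts: first show that $h$ is matrix monotone of order $m$, then deduce strict positivity of $h'$ from analyticity.

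\textbf{Step 1: monotonicity.} I will use the standard trick that combines concavity with nonnegativity.

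Let $0<X\preceq Y$ be $m\times m$ matrices. For $\lambda\in(0,1)$ write
\[
\lambda Y=\lambda X+(1-\lambda)Z_\lambda,
\qquad
Z_\lambda:=\frac{\lambda}{1-\lambda}(Y-X).
\]
The matrix $Z_\lambda$ is only positive semidefinite, so I will instead use $Z_{\lambda,\delta}:=Z_\lambda+\delta I\succ0$. By continuity of $h$ this can be done with $\delta\downarrow0$ at the end, which also requires replacing $\lambda Y$ by $\lambda Y+(1-\lambda)\delta I$.

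Matrix concavity of order $m$ then gives
\[
h\bigl(\lambda Y+(1-\lambda)\delta I\bigr)\succeq\lambda h(X)+(1-\lambda)h(Z_{\lambda,\delta})\succeq\lambda h(X).
\]
The last inequality uses $h\ge0$, so that $h(Z_{\lambda,\delta})\succeq0$.

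Now let $\delta\downarrow0$, and afterwards let $\lambda\uparrow1$. By continuity of the functional calculus on $\mathbb P_m$ this yields $h(Y)\succeq h(X)$. Note that $h$ is continuous on $(0,\infty)$, being concave on that open interval.

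\textbf{Step 2: the scalar case and $h'\ge0$.} Taking scalar multiples of $I_m$ in Step 1 shows that $h$ is nondecreasing on $(0,\infty)$. Hence $h'\ge0$ wherever the derivative exists. Under the analyticity hypothesis it exists everywhere.

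\textbf{Step 3: strict positivity.} Suppose $h'(x_0)=0$ for some $x_0>0$.

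Scalar concavity (order $1$ already suffices) makes $h'$ nonincreasing. Since $h'\ge0$ and $h'(x_0)=0$, we get $h'\equiv0$ on $[x_0,\infty)$. So $h$ is constant on $[x_0,\infty)$.

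By real analyticity on the connected interval $(0,\infty)$, $h$ is then constant everywhere. This contradicts the assumption that $h$ is nonconstant. Therefore $h'(x)>0$ for every $x>0$.

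\textbf{Main obstacle.} The only delicate point is the boundary issue in Step 1: matrix concavity is assumed only on $\mathbb P_m$, while $Y-X$ may be singular. The $\delta I$-perturbation together with continuity of $h$ handles this. Everything else is scalar.
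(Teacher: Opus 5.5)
Your proof is correct and follows the paper's argument. The second part is the same: monotonicity gives $h'\ge0$, and scalar concavity makes $h'$ nonincreasing, so a zero of $h'$ forces $h$ to be constant on $[x_0,\infty)$; analyticity then makes $h$ constant on $(0,\infty)$. The only difference is in the first assertion. The paper cites Hansen's Theorem~2.3 for matrix monotonicity, whereas your perturbed decomposition $\lambda Y+(1-\lambda)\delta I=\lambda X+(1-\lambda)Z_{\lambda,\delta}$, with the limits $\delta\downarrow0$ and then $\lambda\uparrow1$, proves that cited fact directly and correctly.
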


\begin{proof}
	The first assertion is exactly
	\cite[Theorem~2.3]{Hansen2013}.  For the second, matrix monotonicity
	gives $h'\ge0$, while scalar concavity makes $h'$ nonincreasing.
	If $h'(x_0)=0$ at some $x_0>0$, then $h'$ vanishes on
	$(x_0,\infty)$.  Thus $h$ is constant on an open interval, and real
	analyticity forces it to be constant on $(0,\infty)$, a contradiction.
\end{proof}
	
	\begin{lemma}
		\label{lem:certificate-constant}
	 If $f$ satisfies the order-$m$ residual certificate, then the real
		constant $c$ in \eqref{eq:residual} is uniquely determined by $f$, and
		\begin{equation}\label{eq:canonical-c}
			c=\left(\lim_{x\to\infty}\frac{f(x)}{x}\right)^2
			=\mu_f(\{1\})^2,
			\qquad 0\le c<s<1,
		\end{equation}
		where $\mu_f$ is the representing measure in
		\eqref{eq:representing-measure}.  Moreover,
		\begin{equation}\label{eq:normalized-residual}
			\widetilde G(x):=\frac{G(x)}{1-c}
		\end{equation}
		is normalized, nonnegative, nonconstant, matrix concave of order $m$, and
		matrix monotone of order $m$.  Under the dimension-free certificate,
		$\widetilde G$ is operator monotone.
	\end{lemma}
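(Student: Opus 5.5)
The plan is to pin down $c$ through the large-$x$ asymptotics of $G$, and then read off the properties of $\widetilde G$ from Lemma~\ref{lem:positive-matrix-concavity}. Put $\kappa:=\lim_{x\to\infty}f(x)/x$. From the Kubo--Ando representation \eqref{eq:representing-measure}, $h_\lambda(x)/x=1/((1-\lambda)x+\lambda)$ tends to $0$ for $\lambda<1$ and equals $1$ for $\lambda=1$, and it is bounded by $\max(1,1/x)$. Dominated convergence therefore gives $\kappa=\mu_f(\{1\})$.

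Since $G$ is nonnegative and matrix concave of order $m$, Lemma~\ref{lem:positive-matrix-concavity} shows that $G$ is nondecreasing and scalar concave on $(0,\infty)$. Concavity gives $G(x)\le G(1)+G'(1)(x-1)$, so $G(x)/x^2\to0$ as $x\to\infty$. On the other hand, $G(x)/x^2=(f(x)/x)^2-c\to\kappa^2-c$. Hence $c=\kappa^2=\mu_f(\{1\})^2$, and in particular $c$ is determined by $f$ and satisfies $c\ge0$.

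For the inequality $c<s$, I would use $\mu_f(\{1\})\le\int\lambda\,d\mu_f=s<1$, which gives $c=\mu_f(\{1\})^2\le s^2<s$ because $0<s<1$. Thus $0\le c<s<1$.

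Finally, consider $\widetilde G=G/(1-c)$. Since $1-c>0$, dividing by it preserves nonnegativity, nonconstancy, and matrix concavity of order $m$. Normalization follows from $G(1)=f(1)^2-c=1-c$. Matrix monotonicity of order $m$ then comes from Lemma~\ref{lem:positive-matrix-concavity}. Under the dimension-free certificate, $G$ is operator monotone and $1-c>0$, so $\widetilde G$ is operator monotone as well.

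The only delicate step is the identification of $c$. It needs both the sublinear growth of a concave function that is bounded below and the dominated-convergence evaluation of $\kappa$, and I expect both of these to be routine.
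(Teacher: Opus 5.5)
Your proof is correct. Its skeleton matches the paper's: identify $c$ from $G(x)/x^2\to0$, evaluate $\lim_{x\to\infty}f(x)/x$ by dominated convergence in the Kubo--Ando representation, then rescale. Two steps are argued differently.

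First, you let dominated convergence give both the existence and the value of $\lim_{x\to\infty}f(x)/x$, and you bound $G$ by its tangent line at $1$. The paper first proves that the limit exists, via monotonicity of $(f(x)-f(0+))/x$. It then bounds $G$ by the chord of its concave extension to $[0,\infty)$. Your version makes that preliminary step unnecessary.

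Second, and more substantively, the paper gets $c<s$ from the strict positivity $G'>0$ in Lemma~\ref{lem:positive-matrix-concavity}, which rests on real analyticity and nonconstancy of $G$. This gives $0<G'(1)=2(s-c)$. You use only the representing measure: $\mu_f(\{1\})\le\int_{[0,1]}\lambda\,d\mu_f(\lambda)=s$, hence $c\le s^2<s$ because $0<s<1$. Your route needs nothing from the certificate beyond $c=\mu_f(\{1\})^2$. It also gives the sharper bound $c\le s^2$, with equality exactly when $\mu_f=(1-s)\delta_0+s\delta_1$, i.e.\ $f=a_s$. This agrees with the value $c=s^2$ for the arithmetic case in Remark~\ref{rem:certificate-examples}.

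The paper's route costs it nothing extra, since $G'>0$ is needed anyway in the proof of Theorem~\ref{thm:certificate}, for injectivity of $G$ and the positive lower bound on $G'$ in the boundary argument.

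A minor point on attribution: scalar concavity of $G$ comes directly from matrix concavity by restricting to scalar matrices, not from the lemma. Also, monotonicity of $G$ plays no role in your growth estimate, which uses only $G\ge0$ and the tangent-line bound.
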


 \begin{proof}
		Since $f(0+)$ is finite, we extend $f$ continuously and concavely to
		$[0,\infty)$ by setting $f(0)=f(0+)$.  If $0<x<y$, then concavity gives
		\[
		f(x)
		\ge \frac{x}{y}f(y)
		+\left(1-\frac{x}{y}\right)f(0),
		\]
		and hence
		\[
		\frac{f(x)-f(0)}{x}
		\ge
		\frac{f(y)-f(0)}{y}.
		\]
		Thus the function
		\[
		x\longmapsto\frac{f(x)-f(0+)}{x}
		\]
		is nonincreasing.  It is also nonnegative because $f$ is increasing.
		Consequently, the finite limit
		\[
		\ell:=\lim_{x\to\infty}\frac{f(x)}{x}\ge0
		\]
		exists.
		
		By the residual-certificate assumption, $G$ is nonnegative and
		scalar concave.  Since
		\[
		G(0+)=f(0+)^2<\infty,
		\]
		it extends continuously and concavely to $[0,\infty)$.  For $x>1$,
		write
		\[
		1=\left(1-\frac1x\right)0+\frac1x\,x.
		\]
		Concavity gives
		\[
		G(1)
		\ge
		\left(1-\frac1x\right)G(0+)+\frac1xG(x),
		\]
		or equivalently,
		\[
		G(x)
		\le
		G(1)+(x-1)\bigl(G(1)-G(0+)\bigr).
		\]
		The coefficient $G(1)-G(0+)$ must be nonnegative; otherwise, the
		right-hand side would be negative for all sufficiently large $x$,
		contradicting $G(x)\ge0$.  Therefore,
		\[
		0\le\frac{G(x)}{x^2}
		\le
		\frac{G(1)}{x^2}
		+\frac{x-1}{x^2}\bigl(G(1)-G(0+)\bigr)
		\longrightarrow0.
		\]
		Since
		\[
		\frac{G(x)}{x^2}
		=\left(\frac{f(x)}{x}\right)^2-c,
		\]
		letting $x\to\infty$ yields
		\[
		c=\ell^2.
		\]
		In particular, $c$ is uniquely determined by $f$ and satisfies
		$c\ge0$.
		
		We next identify $\ell$ in terms of the representing measure.  Dividing
		\eqref{eq:representing-measure} by $x$ gives
		\[
		\frac{f(x)}{x}
		=\int_{[0,1]}
		\frac{h_\lambda(x)}{x}\,d\mu_f(\lambda).
		\]
		For $x\ge1$,
		\[
		0\le\frac{h_\lambda(x)}{x}
		=\frac{1}{(1-\lambda)x+\lambda}\le1,
		\]
		and, for every $\lambda\in[0,1]$,
		\[
		\frac{h_\lambda(x)}{x}
		\longrightarrow
		\mathbf 1_{\{1\}}(\lambda)
		\qquad (x\to\infty).
		\]
		Since $\mu_f$ is a probability measure, the dominated convergence
		theorem gives
		\[
		\ell
		=\int_{[0,1]}\mathbf 1_{\{1\}}(\lambda)\,d\mu_f(\lambda)
		=\mu_f(\{1\}).
		\]
		Hence
		\[
		c=\ell^2=\mu_f(\{1\})^2.
		\]
		
		Since $f$ is operator monotone, it is real analytic on $(0,\infty)$,
		and therefore so is
		\[
		G(x)=f(x)^2-cx^2.
		\]
		Lemma~\ref{lem:positive-matrix-concavity} shows that $G$ is matrix
		monotone of order $m$ and, because it is nonconstant,
		\[
		G'(x)>0,\qquad x>0.
		\]
		In particular, using $f(1)=1$ and $f'(1)=s$, we obtain
		\[
		0<G'(1)=2f(1)f'(1)-2c=2(s-c).
		\]
		Thus
		\[
		0\le c<s<1.
		\]
		
		Finally,
		\[
		G(1)=f(1)^2-c=1-c>0.
		\]
		Hence
		\[
		\widetilde G(x):=\frac{G(x)}{1-c}
		\]
		is normalized, nonnegative, nonconstant, matrix concave of order $m$,
		and matrix monotone of order $m$.  Under the dimension-free residual
		certificate, $G$ is operator monotone, and therefore so is
		$\widetilde G$.
\end{proof}

For $m\times m$ data, the order-$m$ residual certificate guarantees a
unique positive definite global minimizer.

\begin{theorem}
	\label{thm:certificate}
	Let $m\ge1$, and suppose that $f$ satisfies the order-$m$ residual
	certificate of Definition~\ref{def:certificate}.  Then, for every
	integer $n\ge1$, every positive probability vector
	$\omega=(w_1,\ldots,w_n)$, and every
	$A_1,\ldots,A_n\in\Pm$, the continuous extension of
	$\mathcal E_f$ to $\overline{\mathbb{P}}_m$ has a unique global minimizer
	$X_*$.  Moreover, $X_*\in\Pm$ and is the unique solution of the
	barycenter equation \eqref{eq:euler}.
\end{theorem}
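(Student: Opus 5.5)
The plan is to split off the quadratic part of $q=f^2$ using the certificate, and to rewrite what remains as a trace of a matrix perspective composed with the square root. Let $c$ be the constant from Lemma~\ref{lem:certificate-constant}, so $q=G+cx^2$. By \eqref{eq:RA-properties}, $\Tr(Aq(R_A(X)))=\Tr(AG(R_A(X)))+c\Tr X$. Substituting into \eqref{eq:energy} gives
\[
\mathcal E_f(X)=C_\omega+(s-c)\Tr X-\sum_{j=1}^n w_j\Tr\!\bigl(A_jG(R_{A_j}(X))\bigr),
\qquad s-c>0 .
\]
Next, since $A^{1/2}R_A(X)A^{1/2}=C_A(X)^{1/2}$, I would write
\[
\Tr\bigl(AG(R_A(X))\bigr)=\Tr P_G\bigl(A,C_A(X)^{1/2}\bigr),
\qquad
P_G(A,Y):=A^{1/2}G(A^{-1/2}YA^{-1/2})A^{1/2}.
\]
The point is that $P_G$ is the perspective of $G$ at fixed $A$.

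For convexity, fix $A\in\Pm$. The map $Y\mapsto P_G(A,Y)$ is matrix concave and matrix monotone of order $m$ on $\overline{\mathbb P}_m$. This holds because it is a congruence of $G$, and $G$ has both properties by Lemma~\ref{lem:certificate-constant}; at the boundary one uses the continuous extension $G(0+)=f(0+)^2$. The map $X\mapsto C_A(X)^{1/2}$ is operator concave, being a congruence followed by the operator-concave square root. A monotone concave map composed with a concave map is concave. Hence each $X\mapsto\Tr(A_jG(R_{A_j}(X)))$ is concave on $\overline{\mathbb P}_m$, and $\mathcal E_f$ is convex on the closed cone. Proposition~\ref{prop:closed-cone} then makes the minimizer set a nonempty compact convex set not containing $0$.

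For strictness, I would take $X_0\neq X_1$ in the minimizer set and require equality of $\mathcal E_f$ along the segment between them. Since $G'>0$ by Lemma~\ref{lem:positive-matrix-concavity}, equality in the monotone step forces
\[
C^{1/2}_{A_j}\bigl((1-t)X_0+tX_1\bigr)=(1-t)C_{A_j}(X_0)^{1/2}+tC_{A_j}(X_1)^{1/2}.
\]
Squaring this identity (or using the strict trace concavity of the square root along non-proportional directions) forces the two square roots to commute and then to coincide, so $X_0=X_1$. This gives at most one closed-cone minimizer.

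The hardest step is showing the minimizer $X_*$ is positive definite. Suppose $X_*$ is singular but nonzero, with kernel projection $Q$. I would compute the one-sided derivative of $\mathcal E_f$ in the direction $Y=Q$. The linear term contributes $(s-c)\Tr Q$, which is finite. For $C=C_{A_j}(X_*)$, the square root $(C+\varepsilon A_j^{1/2}QA_j^{1/2})^{1/2}$ grows like $\sqrt\varepsilon$ on the kernel of $C$. Since the derivative of $P_G(A_j,\cdot)$ is bounded below by a positive multiple of the identity (coming from $G'>0$ on the relevant compact spectrum), the concave term increases at rate of order $\sqrt\varepsilon$. The directional derivative is therefore $-\infty$, which contradicts minimality. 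I would first carry this out by block-decomposing $C$ along its kernel, mirroring the $\varepsilon I$ estimate in Proposition~\ref{prop:closed-cone}. Finally, since $X_*\in\Pm$, Proposition~\ref{prop:euler} shows that it solves \eqref{eq:euler}. Conversely, any solution of \eqref{eq:euler} is a global minimizer over $\Pm$ by convexity, hence over $\overline{\mathbb P}_m$ by \eqref{eq:open-closed}, and so equals $X_*$.
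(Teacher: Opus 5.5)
Your proposal is correct and is essentially the paper's proof. It uses the same splitting $q=G+cx^2$ via the Riccati identity, and the same chain for convexity: operator concavity of the square root, followed by matrix monotonicity and concavity of $G$. Strictness again comes from injectivity of $G$ and the squaring identity $t(1-t)(S_0-S_1)^2=0$. The boundary mechanism is also the paper's: the $\sqrt{\varepsilon}$ growth of the square root on the kernel beats the linear term, because the gradient of the $G$-part is bounded below in the trace sense as in \eqref{eq:DG-lower-bound}.

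The differences are cosmetic. You prove convexity on the closed cone and perturb along the kernel projection $Q$. The paper instead proves strict convexity on $\Pm$ and differentiates along $X_0+\varepsilon I$.
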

	
 \begin{proof}
		Fix $A\in\Pm$.  Since
		\[
		q(x)=f(x)^2=cx^2+G(x),
		\]
		functional calculus, cyclicity of the trace, and the Riccati identity
		$R_A(X)AR_A(X)=X$ give
		\begin{align}
			H_{f,A}(X)
			&=c\Tr\!\left(AR_A(X)^2\right)
			+\Tr\!\left(AG(R_A(X))\right) \notag\\
			&=c\Tr\!\left(R_A(X)AR_A(X)\right)
			+\Tr\!\left(AG(R_A(X))\right) \notag\\
			&=c\Tr X+\Tr\!\left(AG(R_A(X))\right).
			\label{eq:H-residual}
		\end{align}
		
		We first prove strict concavity.  Let $X,Y\in\Pm$ with $X\ne Y$,
		let $0<\theta<1$, and set
		\[
		X_\theta=(1-\theta)X+\theta Y,
		\qquad
		\overline R_\theta
		=(1-\theta)R_A(X)+\theta R_A(Y).
		\]
		Since $C_A$ is linear and the square-root map is operator concave,
		\begin{equation}\label{eq:strict-RA-concavity}
			R_A(X_\theta)\succeq\overline R_\theta.
		\end{equation}
		This inequality is strict in the sense that its two sides are not
		equal.  Indeed, put
		\[
		P=C_A(X)^{1/2},
		\qquad
		Q=C_A(Y)^{1/2}.
		\]
		Equality in \eqref{eq:strict-RA-concavity} would imply
		\[
		C_A(X_\theta)^{1/2}
		=(1-\theta)P+\theta Q.
		\]
		Squaring both sides and using
		\[
		C_A(X_\theta)
		=(1-\theta)P^2+\theta Q^2
		\]
		would give
		\[
		\theta(1-\theta)(P-Q)^2=0.
		\]
		Hence $P=Q$, so $C_A(X)=C_A(Y)$ and therefore $X=Y$, a
		contradiction.
		
		By Definition~\ref{def:certificate} and
		Lemma~\ref{lem:positive-matrix-concavity}, the residual $G$ is both
		matrix concave and matrix monotone of order $m$.  Therefore,
		\[
		G(R_A(X_\theta))
		\succeq G(\overline R_\theta)
		\succeq
		(1-\theta)G(R_A(X))+\theta G(R_A(Y)).
		\]
		Set
		\[
		D_1
		:=G(R_A(X_\theta))-G(\overline R_\theta)
		\]
		and
		\[
		D_2
		:=G(\overline R_\theta)
		-(1-\theta)G(R_A(X))-\theta G(R_A(Y)).
		\]
		Then $D_1,D_2\succeq0$.  Moreover, $D_1\ne0$.  Otherwise,
		\[
		G(R_A(X_\theta))=G(\overline R_\theta).
		\]
		By Lemma~\ref{lem:certificate-constant}, $G'(x)>0$ for every $x>0$,
		so $G$ is injective and has an inverse on its range.  Applying this
		inverse by functional calculus would yield
		\[
		R_A(X_\theta)=\overline R_\theta,
		\]
		contrary to the strictness of \eqref{eq:strict-RA-concavity}.
		
		It follows that $D:=D_1+D_2$ is positive semidefinite and nonzero.
		Since $A\succ0$,
		\[
		\Tr(AD)
		=\Tr\!\left(A^{1/2}DA^{1/2}\right)>0.
		\]
		Consequently,
		\[
		\Tr\!\left(AG(R_A(X_\theta))\right)
		>
		(1-\theta)\Tr\!\left(AG(R_A(X))\right)
		+\theta\Tr\!\left(AG(R_A(Y))\right).
		\]
		Thus
		\[
		X\longmapsto\Tr\!\left(AG(R_A(X))\right)
		\]
		is strictly concave.  Since $c\Tr X$ is affine,
		\eqref{eq:H-residual} shows that $H_{f,A}$ is strictly concave.
		Hence the positively weighted sum $H_{f,\omega}$ is strictly concave,
		and \eqref{eq:energy} shows that $\mathcal E_f$ is strictly convex
		on $\Pm$.
		
		We next show that no minimizer can lie on the boundary of the positive
		cone.  Let
		\[
		X_0\in\overline{\mathbb{P}}_m\setminus\Pm,
		\]
		so that $X_0\succeq0$ is noninvertible, and for $\varepsilon>0$ set
		\[
		X_\varepsilon=X_0+\varepsilon I,
		\qquad
		C_\varepsilon=C_A(X_\varepsilon),
		\qquad
		R_\varepsilon=R_A(X_\varepsilon).
		\]
		Fix $0<\varepsilon_0<1$.  The map
		\[
		\varepsilon\longmapsto R_A(X_0+\varepsilon I)
		\]
		is continuous on $[0,\varepsilon_0]$.  Its image is therefore
		bounded, so there exists $\kappa>0$ such that
		\[
		R_\varepsilon\preceq\kappa I,
		\qquad 0<\varepsilon\le\varepsilon_0.
		\]
		Put
		\[
		a:=\lambda_{\min}(A),
		\qquad
		b:=\lambda_{\max}(A),
		\qquad
		d_\kappa:=G'(\kappa)>0.
		\]
		
		Matrix monotonicity of order $m$ implies that
		$DG(R_\varepsilon)$ is a positive linear map on $\mathbb H_m$.
		Indeed, if $Z\succeq0$ and $t>0$, then
		\[
		R_\varepsilon\preceq R_\varepsilon+tZ,
		\]
		so
		\[
		\frac{G(R_\varepsilon+tZ)-G(R_\varepsilon)}{t}\succeq0.
		\]
		Letting $t\downarrow0$ and using the closedness of the positive
		semidefinite cone gives
		\[
		DG(R_\varepsilon)[Z]\succeq0.
		\]
		In particular, since $A\succeq aI$,
		\[
		DG(R_\varepsilon)[A]
		\succeq aDG(R_\varepsilon)[I]
		=aG'(R_\varepsilon).
		\]
		Scalar concavity makes $G'$ nonincreasing.  Since
		$\sigma(R_\varepsilon)\subset(0,\kappa]$, functional calculus gives
		\[
		G'(R_\varepsilon)\succeq G'(\kappa)I=d_\kappa I.
		\]
		Using also $A^{-1}\succeq b^{-1}I$, we obtain
		\begin{align}
			A^{-1/2}DG(R_\varepsilon)[A]A^{-1/2}
			&\succeq
			aA^{-1/2}G'(R_\varepsilon)A^{-1/2} \notag\\
			&\succeq
			ad_\kappa A^{-1} \notag\\
			&\succeq
			\frac{a}{b}d_\kappa I.
			\label{eq:DG-lower-bound}
		\end{align}
		
		By \eqref{eq:Kg}, positivity of
		$\mathcal S_{C_\varepsilon}$ and
		\[
		\mathcal S_{C_\varepsilon}[I]
		=\frac12C_\varepsilon^{-1/2}
		\]
		now yield
		\[
		\mathcal K_G(A;X_\varepsilon)
		\succeq
		\frac{ad_\kappa}{2b}
		A^{1/2}C_\varepsilon^{-1/2}A^{1/2}.
		\]
		Taking traces and using $A\succeq aI$, we obtain
		\begin{align}
			\Tr\mathcal K_G(A;X_\varepsilon)
			&\ge
			\frac{ad_\kappa}{2b}
			\Tr\!\left(AC_\varepsilon^{-1/2}\right) \notag\\
			&\ge
			\frac{a^2d_\kappa}{2b}
			\Tr C_\varepsilon^{-1/2}.
			\label{eq:KG-boundary-blowup}
		\end{align}
		
		Since $A\succ0$, the congruence
		\[
		C_A(X_0)=A^{1/2}X_0A^{1/2}
		\]
		is singular whenever $X_0$ is singular.  Moreover,
		\[
		C_\varepsilon=C_A(X_0)+\varepsilon A
		\longrightarrow C_A(X_0).
		\]
		Hence
		\[
		\lambda_{\min}(C_\varepsilon)\longrightarrow0,
		\]
		and therefore
		\[
		\Tr C_\varepsilon^{-1/2}
		\ge
		\lambda_{\min}(C_\varepsilon)^{-1/2}
		\longrightarrow+\infty.
		\]
		It follows from \eqref{eq:KG-boundary-blowup} that
		\[
		\Tr\mathcal K_G(A;X_\varepsilon)
		\longrightarrow+\infty.
		\]
		
		Differentiating \eqref{eq:H-residual} in the direction $I$ gives
		\[
		DH_{f,A}(X_\varepsilon)[I]
		=mc+\Tr\mathcal K_G(A;X_\varepsilon)
		\longrightarrow+\infty.
		\]
		Applying this argument to each $A_j$, summing with the positive
		weights, and using \eqref{eq:energy}, we obtain
		\[
		\frac{d}{d\varepsilon}
		\mathcal E_f(X_0+\varepsilon I)
		=
		sm-DH_{f,\omega}(X_0+\varepsilon I)[I]
		\longrightarrow-\infty.
		\]
		Thus there exists $\delta>0$ such that
		\[
		\frac{d}{d\varepsilon}
		\mathcal E_f(X_0+\varepsilon I)\le-1,
		\qquad 0<\varepsilon<\delta.
		\]
		Define
		\[
		\varphi(t):=\mathcal E_f(X_0+tI).
		\]
		For $0<\eta<\varepsilon<\delta$,
		\[
		\varphi(\varepsilon)-\varphi(\eta)
		=\int_\eta^\varepsilon\varphi'(t)\,dt
		\le-(\varepsilon-\eta).
		\]
		Letting $\eta\downarrow0$ and using continuity of the closed-cone
		extension gives
		\[
		\mathcal E_f(X_0+\varepsilon I)
		-\mathcal E_f(X_0)
		\le-\varepsilon<0.
		\]
		Hence no matrix in
		$\overline{\mathbb{P}}_m\setminus\Pm$ can minimize $\mathcal E_f$.
		
		Proposition~\ref{prop:closed-cone} supplies a global minimizer in
		$\overline{\mathbb{P}}_m$, and the preceding argument shows that it belongs to
		$\Pm$.  Strict convexity makes this minimizer unique.
		Proposition~\ref{prop:euler} then shows that it satisfies
		\eqref{eq:euler}; conversely, every solution of \eqref{eq:euler} is
		this unique minimizer.
\end{proof}
	
	\begin{corollary}
		\label{cor:dimension-free-certificate}
		Suppose that there is $c\in\mathbb R$ such that
		\[
			G(x)=f(x)^2-cx^2
		\]
		is a nonconstant operator monotone function on $(0,\infty)$.  Then
		Theorem~\ref{thm:certificate} holds in every matrix size.
		
		Conversely, if the order-$m$ certificate holds for every $m$ with
		the same constant $c$, then $G$ is operator monotone.  In this
		dimension-free setting, the certificate holds with $c=0$ exactly for
		\[
			f=\sqrt g,
		\]
		where $g$ is a normalized nonconstant operator monotone function.
	\end{corollary}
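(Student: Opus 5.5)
The proof splits into three parts, one for each assertion of Corollary~\ref{cor:dimension-free-certificate}.

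\emph{Forward direction.} Assume $G=f^2-cx^2$ is nonconstant and operator monotone on $(0,\infty)$. The plan is to check the order-$m$ residual certificate of Definition~\ref{def:certificate} for every $m$ and then invoke Theorem~\ref{thm:certificate}. Three properties need checking.
\begin{itemize}
\item[$\bullet$] \emph{Concavity.} A positive operator monotone function on $(0,\infty)$ is operator concave; this is the standard L\"owner--Hansen fact. Hence $G$ is matrix concave of every order, provided $G\ge0$.
\item[$\bullet$] \emph{Nonnegativity.} Operator monotonicity makes $G$ nondecreasing, so $G\ge G(0+)$. We also have $G(0+)=f(0+)^2\ge0$, because $f(0+)$ is finite and the term $cx^2$ tends to $0$.
\item[$\bullet$] \emph{Nonconstancy.} This is assumed.
\end{itemize}
With these in hand, the certificate holds in each size $m$, and Theorem~\ref{thm:certificate} applies verbatim in every matrix size.

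\emph{Converse.} Suppose the order-$m$ certificate holds for all $m$ with one common $c$. Then $G$ is nonnegative and matrix concave of every order. By Lemma~\ref{lem:positive-matrix-concavity} it is matrix monotone of every order $m$, and matrix monotone of all orders is, by definition, operator monotone. Nonconstancy is part of the certificate.

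\emph{The case $c=0$.} If the dimension-free certificate holds with $c=0$, put $g:=G=f^2$. Then $g$ is nonconstant and operator monotone, and it is normalized since $g(1)=f(1)^2=1$; so $f=\sqrt g$. Conversely, let $g$ be a normalized nonconstant operator monotone function and set $f=\sqrt g$. Three checks are needed.
\begin{itemize}
\item[$\bullet$] $f$ is normalized, since $f(1)=\sqrt{g(1)}=1$.
\item[$\bullet$] $f$ is operator monotone. This is the main point, and I would argue it via the L\"owner--Heinz inequality: $x\mapsto x^{1/2}$ is operator monotone, and composing operator monotone functions preserves operator monotonicity ($X\preceq Y\Rightarrow g(X)\preceq g(Y)\Rightarrow g(X)^{1/2}\preceq g(Y)^{1/2}$). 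Equivalently, $f(X)=g(X)\sharp I$.
\item[$\bullet$] $f$ is nontrivial. If $f\equiv x$, then $g=x^2$, which is not operator monotone. If $f\equiv1$, then $g$ is constant, which is excluded.
\end{itemize}
Then $G=f^2=g$ is nonconstant and operator monotone with $c=0$, so the dimension-free certificate holds with $c=0$.

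I expect no real obstacle beyond correctly citing the fact that nonnegative operator monotone functions on $(0,\infty)$ are operator concave, which feeds the forward direction. In the converse, note also that $c$ is unique by Lemma~\ref{lem:certificate-constant}, so ``with $c=0$'' is unambiguous.
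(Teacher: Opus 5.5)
Your proposal is correct and follows the paper's proof essentially step for step. The steps are: nonnegativity of $G$ from $G(0+)=f(0+)^2\ge0$ and monotonicity, operator concavity of operator monotone functions on the half-line, Lemma~\ref{lem:positive-matrix-concavity} for the converse, and L\"owner--Heinz composition to get operator monotonicity of $f=\sqrt g$. Your explicit check that $f=\sqrt g$ is nontrivial, so that the section's standing assumptions hold, is a small addition the paper leaves implicit.
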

	
	\begin{proof}
		Since $G(0+)=f(0+)^2\ge0$ and $G$ is increasing, $G$ is
		nonnegative.  Every operator monotone function on the positive half-line
		is operator concave, so Definition~\ref{def:certificate} applies in every
		finite size.  Conversely, certificates of every order make $G$ matrix
		monotone of every order by
		Lemma~\ref{lem:positive-matrix-concavity}, hence operator monotone.
		
		If $c=0$, take $g=f^2=G$.  Conversely, if $g$ is normalized,
		nonconstant, and operator monotone, then $f=\sqrt g$ is normalized and
		operator monotone because both $g$ and the square-root function preserve
		operator order; moreover, $f^2=g$.
	\end{proof}
	
	\begin{remark}
		\label{rem:certificate-examples}
	Several concrete classes connect the criterion directly with the preceding
	results.  For $f(x)=x^t$ with $0<t\le1/2$, one has
	$f(x)^2=x^{2t}$, which is a nonconstant operator monotone function.
	Thus the dimension-free certificate holds with $c=0$.
	
	More generally, let
	\[
	f(x)=g_{r,t}(x):=(1-r)+rx^t,
	\qquad
	0<r<1,\quad 0<t\le\frac12.
	\]
	Then $g_{r,t}$ is a nontrivial normalized operator monotone function, and
	\[
	g_{r,t}(x)^2
	=(1-r)^2+2r(1-r)x^t+r^2x^{2t}.
	\]
	Since both $x^t$ and $x^{2t}$ are operator monotone in the stated
	parameter range, $g_{r,t}^2$ is a nonconstant operator monotone function.
	Hence the dimension-free certificate again holds with $c=0$.
	
Every nontrivial exact-DPI function $f=f_{u,v}$ identified in
Theorem~\ref{thm:quadratic-square-classification} also satisfies the
dimension-free certificate, since
\[
f_{u,v}(x)^2-uvx^2
=(1-u)(1-v)+(u+v-2uv)x
\]
is affine with strictly positive slope.  In particular, for
	$f(x)=(1-s)+sx$, the certificate holds with $c=s^2$, and
	\eqref{eq:euler} reduces to the standard Bures--Wasserstein barycenter
	equation.
		
		The residual certificate is not needed for scalar data:
		Proposition~\ref{prop:scalar-barycenter} applies to every nontrivial
		normalized operator monotone function.  The next theorem shows that the
		power-family range is sharp.  For general $f$, no necessity of the
		finite-order certificate for arbitrary-data uniqueness is asserted.
	\end{remark}
	
 \begin{theorem}
		\label{thm:power-barycenter-phase-transition}
		Let $f_t(x)=x^t$, where $0<t<1$.  The following are equivalent.
		\begin{enumerate}
			\item[\rm(i)] $0<t\le\frac12$.
			
			\item[\rm(ii)] For all integers $m,n\ge1$, every positive probability
			vector $\omega=(w_1,\ldots,w_n)$, and every
			$A_1,\ldots,A_n\in\mathbb P_m$, the continuous extension of
			$\mathcal E_{f_t}$ to $\overline{\mathbb P_m}$ has a unique global
			minimizer, and this minimizer belongs to $\mathbb P_m$.
		\end{enumerate}
		More precisely, if $\frac12<t<1$, then, for all sufficiently large
		$L>1$, the equal-weight pair
		\[
		A_1=\begin{pmatrix}L&0\\0&1\end{pmatrix},
		\qquad
		A_2=\begin{pmatrix}1&0\\0&L\end{pmatrix}
		\]
		has at least two distinct global minimizers on
		$\overline{\mathbb P_2}$.
	\end{theorem}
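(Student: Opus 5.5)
For (i)$\Rightarrow$(ii) I would simply invoke the existing machinery. By Remark~\ref{rem:certificate-examples}, for $0<t\le\frac12$ the residual $f_t(x)^2=x^{2t}$ is a nonconstant operator monotone function, so the dimension-free certificate holds with $c=0$. Corollary~\ref{cor:dimension-free-certificate} then gives Theorem~\ref{thm:certificate} in every matrix size, and this is exactly (ii). The substance is therefore the converse, in the explicit form stated: for $\frac12<t<1$ and the equal-weight pair $A_1=\operatorname{diag}(L,1)$, $A_2=\operatorname{diag}(1,L)$, the minimizer is not unique.

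The first step is a symmetry reduction. Let $K$ be the flip and $U=\operatorname{diag}(1,-1)$. Conjugation by $U$ fixes both $A_j$, and conjugation by $K$ swaps them. Since the weights are equal and $\mathcal E_{f_t}$ is unitarily covariant, $\mathcal E_{f_t}(VXV^*)=\mathcal E_{f_t}(X)$ for $V\in\{U,K\}$. So if the closed-cone minimizer were unique, it would commute with $U$ and $K$, hence be a scalar matrix $xI$.

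On diagonal $X=\operatorname{diag}(x_1,x_2)$ everything commutes. We get $R_{A}(X)=\operatorname{diag}(\sqrt{x_i/a_i})$ and $\Tr(Aq(R))=\sum_i a_i^{1-t}x_i^{t}$, so the energy separates into two identical scalar problems with data $\{L,1\}$. By Proposition~\ref{prop:scalar-barycenter} (with $s=t$), the only diagonal candidate is $x_*I$, where
\[
x_*=\Bigl(\tfrac{L^{1-t}+1}{2}\Bigr)^{1/(1-t)}\sim 2^{-1/(1-t)}L .
\]
It therefore suffices to show that $x_*I$ is not a global minimizer for large $L$. Once that is shown, any global minimizer $X_*$ (which exists by Proposition~\ref{prop:closed-cone}) is not $U,K$-invariant, and $X_*$ together with $VX_*V^*$ for a suitable $V$ gives two distinct minimizers.

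To show $x_*I$ is not a minimizer, I would compute the second variation of $\mathcal E_{f_t}$ at $X=x_*I$ in the off-diagonal direction $K$, and show it is negative for large $L$. At $X=x_*I$ each $R_j=\sqrt{x_*}\,A_j^{-1/2}$ is diagonal with eigenvalues $r_1=\sqrt{x_*/L}$ and $r_2=\sqrt{x_*}$ (in some order). The plan is to expand $R_{A_j}(x_*I+\varepsilon K)$ to second order using the Riccati equation $RAR=X$, as in the proof of Theorem~\ref{thm:hessian}. With $R=R_j+\varepsilon H+\varepsilon^2 M$:
\begin{itemize}
\item $H$ solves $HA_jR_j+R_jA_jH=K$ and is off-diagonal;
\item $M$ solves $MA_jR_j+R_jA_jM=-HA_jH$ and is diagonal.
\end{itemize}
Then the Daleckii--Krein second-order formula for $q(x)=x^{2t}$ expresses $\frac{d^2}{d\varepsilon^2}\Tr(A_jq(R))$ through $q'(r_i)$ and the divided difference $\frac{q(r_1)-q(r_2)}{r_1-r_2}$. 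The $s\Tr X$ term is linear and contributes nothing.

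As $L\to\infty$, $r_1$ stays bounded while $r_2\sim cL^{1/2}$. The leading behaviour of the second variation of $-H_{f,\omega}$ should then be governed by a comparison of the form
\[
\frac{q(r_2)-q(r_1)}{r_2-r_1}\quad\text{versus}\quad \tfrac12\bigl(q'(r_1)+q'(r_2)\bigr).
\]
This is the trapezoidal quantity from the proof of Theorem~\ref{thm:quadratic-square-classification}, and for $2t>1$ the convexity of $x^{2t}$ should give it the sign that makes $\mathcal E_{f_t}$ decrease along $K$. This asymptotic sign computation is the main obstacle.

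If the Hessian bookkeeping proves intractable, I would instead use a direct comparison. Take a rank-one competitor $X=\lambda vv^*$ with $v=(1,1)/\sqrt2$. There $C^{1/2}=C/\sqrt{\Tr C}$ gives the closed form
\[
\Tr\bigl(Aq(R_A(X))\bigr)=\lambda^{t}\,(v^*Av)^{1-t}\quad(\text{since }f_t(0)=0).
\]
I would optimize over $\lambda$ and compare with $\mathcal E_{f_t}(x_*I)$ as $L\to\infty$. The rank-one energy scales like $(L/2)^{(1-t)/(1-t)}$ with one coefficient against two scalar coordinates, and the exponent $t>\frac12$ should make it win.
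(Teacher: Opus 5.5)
Your proposal is correct, and your fallback is exactly the paper's proof. The paper restricts $\mathcal E_{f_t}$ to rays $\lambda Y$ with $\Tr Y=1$. Homogeneity gives $\min_{\lambda\ge0}\mathcal E_{f_t}(\lambda Y)=(1-t)(L+1)-(1-t)H_{t,L}(Y)^{1/(1-t)}$. It then compares $H_{t,L}(vv^*)=((L+1)/2)^{1-t}$ with $H_{t,L}(I/2)=2^{-t}(L^{1-t}+1)$, whose ratio tends to $2^{2t-1}>1$. This is the same as your comparison of $(L+1)/2$ with $2x_*\sim 2^{1-1/(1-t)}L$. Your symmetry step, using $U$ and $K$ to force a unique minimizer to be scalar, is equivalent to the paper's use of $J=\operatorname{diag}(1,-1)$ alone after ruling out diagonal minimizers. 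I would make the rank-one comparison your main argument.

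The second-variation route is genuinely different. It can be completed, but your heuristic for its sign is off, and it is a weaker test. Carry out your expansion at $X=xI+\varepsilon K$ for diagonal $A$, with $r_1<r_2$ the eigenvalues of $R_A(xI)$. Then the $\varepsilon^2$-coefficient of $\Tr(A\,q(R_A(X)))$ is
\[
\frac{B}{2x(r_2^2-r_1^2)},\qquad
B=2\bigl(q(r_2)-q(r_1)\bigr)-(r_2-2r_1)q'(r_2)-(2r_2-r_1)q'(r_1).
\]
Thus $B$ is the trapezoidal quantity $2(q(r_2)-q(r_1))-(r_2-r_1)(q'(r_1)+q'(r_2))$ plus the extra term $r_1q'(r_2)-r_2q'(r_1)$.

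For $q(x)=x^{2t}$ the trapezoidal quantity is positive because $q'$ is concave. Convexity of $q$ is not the reason: it persists for $2t>2$, where the sign flips. The extra term is negative because $q'(x)/x$ is decreasing. The trapezoidal part wins only when $r_2/r_1=\sqrt L$ is large, since it grows like $(2-2t)r_2^{2t}$ against an $O(r_2)$ negative part; this is where $t>\tfrac12$ enters.

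Quantitatively, for $t=\tfrac34$ one still has $B<0$ at $L=100$. So $x_*I$ has positive second variation along $K$ there, even though the rank-one ray already beats it once $L$ exceeds roughly $30$. The local test only succeeds for $L$ above roughly $530$. The Hessian computation therefore proves the statement for large $L$, but it misses the actual mechanism. The winning competitor lies far from $x_*I$, and that is exactly what the global ray comparison detects.
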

	
	\begin{proof}
		Suppose first that $0<t\le\frac12$.  Then
		\[
		f_t(x)^2=x^{2t}
		\]
		is a nonconstant operator monotone function.  Thus $f_t$ satisfies
		the dimension-free residual certificate with $c=0$, and
		Corollary~\ref{cor:dimension-free-certificate} proves ${\rm(ii)}$.
		
		Conversely, fix $\frac12<t<1$.  For $L>1$, let
		\[
		A_1=\begin{pmatrix}L&0\\0&1\end{pmatrix},
		\qquad
		A_2=\begin{pmatrix}1&0\\0&L\end{pmatrix},
		\qquad
		w_1=w_2=\frac12,
		\]
		and define
		\[
		H_{t,L}(X)
		:=\frac12\sum_{j=1}^2
		\Tr\!\left(A_jR_{A_j}(X)^{2t}\right).
		\]
		The corresponding barycenter functional is
		\begin{equation}\label{eq:power-counterexample-energy}
			\mathcal E_{t,L}(X)
			=(1-t)(L+1)+t\Tr X-H_{t,L}(X).
		\end{equation}
		
		We first minimize the functional over the diagonal cone.  If
		\[
		A=\operatorname{diag}(a_1,a_2),
		\qquad
		X=\operatorname{diag}(x_1,x_2)\succeq0,
		\]
		then
		\[
		R_A(X)
		=\operatorname{diag}\left(
		\sqrt{\frac{x_1}{a_1}},
		\sqrt{\frac{x_2}{a_2}}
		\right).
		\]
		Applying this formula to $A_1$ and $A_2$ gives
		\[
		H_{t,L}(X)
		=b_L(x_1^t+x_2^t),
		\qquad
		b_L:=\frac{L^{1-t}+1}{2}.
		\]
		Consequently,
		\[
		\mathcal E_{t,L}(X)
		=(1-t)(L+1)
		+\sum_{k=1}^2\bigl(tx_k-b_Lx_k^t\bigr).
		\]
		
		For
		\[
		\phi_L(x):=tx-b_Lx^t,
		\]
		we have
		\[
		\phi_L'(x)
		=t\bigl(1-b_Lx^{t-1}\bigr),
		\qquad
		\phi_L''(x)
		=b_Lt(1-t)x^{t-2}>0.
		\]
		Moreover,
		\[
		\phi_L'(x)\longrightarrow-\infty
		\quad\text{as }x\downarrow0,
		\qquad
		\phi_L'(x)\longrightarrow t
		\quad\text{as }x\to\infty.
		\]
		Hence $\phi_L$ has the unique minimizer
		\[
		\rho_L=b_L^{1/(1-t)}.
		\]
		Therefore, the restriction of $\mathcal E_{t,L}$ to the closed
		diagonal cone has the unique minimizer
		\[
		X_L^{\rm diag}=\rho_LI.
		\]
		
		We now compare the diagonal minimizer with a rank-one direction.  Put
		\[
		v=\frac1{\sqrt2}\binom11,
		\qquad
		P=vv^*,
		\qquad
		Y_0=\frac12I.
		\]
		For every $A\succ0$, write $w=A^{1/2}v$.  Then
		\[
		C_A(P)=A^{1/2}PA^{1/2}=ww^*.
		\]
		Since
		\[
		(ww^*)^{1/2}=\frac{ww^*}{\|w\|}
		\qquad\text{and}\qquad
		\|w\|^2=v^*Av,
		\]
		we obtain
		\begin{equation}\label{eq:rank-one-riccati-root}
			R_A(P)
			=A^{-1/2}C_A(P)^{1/2}A^{-1/2}
			=\frac{P}{\sqrt{v^*Av}}.
		\end{equation}
		Because $P^2=P$,
		\[
		R_A(P)^{2t}=(v^*Av)^{-t}P,
		\]
		and therefore
		\[
		\Tr\!\left(AR_A(P)^{2t}\right)
		=(v^*Av)^{1-t}.
		\]
		It follows that
		\[
		H_{t,L}(P)
		=\left(\frac{L+1}{2}\right)^{1-t}.
		\]
		A direct diagonal calculation also gives
		\[
		H_{t,L}(Y_0)
		=2^{-t}\bigl(L^{1-t}+1\bigr).
		\]
		
		For every $\lambda\ge0$ and $Y\succeq0$, homogeneity of $R_A$ gives
		\[
		R_A(\lambda Y)=\sqrt{\lambda}\,R_A(Y),
		\]
		and hence
		\[
		H_{t,L}(\lambda Y)=\lambda^tH_{t,L}(Y).
		\]
		In particular, if $\Tr Y=1$, then
		\[
		\mathcal E_{t,L}(\lambda Y)
		=(1-t)(L+1)+t\lambda
		-\lambda^tH_{t,L}(Y).
		\]
		The function of $\lambda$ on the right is strictly convex on
		$(0,\infty)$, and its derivative vanishes exactly when
		\[
		\lambda^{1-t}=H_{t,L}(Y).
		\]
		Thus its unique minimizer is
		\[
		\lambda_Y=H_{t,L}(Y)^{1/(1-t)},
		\]
		and
		\begin{equation}\label{eq:power-ray-minimum}
			\min_{\lambda\ge0}\mathcal E_{t,L}(\lambda Y)
			=(1-t)(L+1)
			-(1-t)H_{t,L}(Y)^{1/(1-t)}.
		\end{equation}
		
		Finally,
		\[
		\frac{H_{t,L}(P)}{H_{t,L}(Y_0)}
		=
		2^{2t-1}
		\frac{(L+1)^{1-t}}{L^{1-t}+1}
		\longrightarrow 2^{2t-1}>1
		\qquad (L\to\infty).
		\]
		Hence, for all sufficiently large $L$,
		\[
		H_{t,L}(P)>H_{t,L}(Y_0).
		\]
		By \eqref{eq:power-ray-minimum}, the minimum along a ray is strictly
		decreasing as $H_{t,L}(Y)$ increases.  Therefore,
		\[
		\min_{\lambda\ge0}\mathcal E_{t,L}(\lambda P)
		<
		\min_{\lambda\ge0}\mathcal E_{t,L}(\lambda Y_0).
		\]
		The ray $\{\lambda Y_0:\lambda\ge0\}$ is precisely the scalar ray,
		since $\lambda Y_0=(\lambda/2)I$, and it contains
		$X_L^{\rm diag}=\rho_LI$.  Its minimum therefore equals the minimum
		of $\mathcal E_{t,L}$ over the entire diagonal cone.  It follows that
		no diagonal matrix can be a global minimizer.
		
		By Proposition~\ref{prop:closed-cone}, a global minimizer
		$X_*\succeq0$ nevertheless exists.  Let
		\[
		J=\operatorname{diag}(1,-1).
		\]
		Since $J$ commutes with $A_1$ and $A_2$, we have
		\[
		C_{A_j}(JXJ)=JC_{A_j}(X)J
		\]
		and hence, by functional calculus,
		\[
		R_{A_j}(JXJ)=JR_{A_j}(X)J,
		\qquad j=1,2.
		\]
		Unitary invariance of the trace now gives
		\[
		\mathcal E_{t,L}(JXJ)=\mathcal E_{t,L}(X).
		\]
		Thus $JX_*J$ is also a global minimizer.
		
		A Hermitian $2\times2$ matrix satisfies $JXJ=X$ if and only if it is
		diagonal.  Since no global minimizer is diagonal, we have
		\[
		JX_*J\ne X_*.
		\]
		Therefore $X_*$ and $JX_*J$ are two distinct global minimizers, so
		${\rm(ii)}$ fails whenever $\frac12<t<1$.
\end{proof}
	
The preceding argument establishes nonuniqueness for the continuous
closed-cone problem, but it does not show that the minimizers are positive
definite.  For the open-cone problem, therefore, either the infimum is not
attained in $\mathbb P_2$, or it is attained by at least two distinct
positive definite matrices related by the symmetry $X\mapsto JXJ$.
Thus these data cannot have a unique positive definite barycenter.
	
	\section{Concluding remarks}
	
	The sharp comparison and Hessian formula show that every alternative-mean
	divergence considered here is globally comparable with the Bures--Wasserstein distance, with diagonal
	Hessian governed by the corresponding metric. Exact data processing is much more rigid: the positive trace-preserving
	and the usual CPTP formulations coincide, are equivalent to joint
	convexity, and are completely detected by diagonal pinching on faithful
	qubit states.  The corresponding representing functions form a
	two-parameter family, and their divergences are precisely scalar multiples
	of squared Bures--Wasserstein distance.
	
For barycenters, a global minimizer on the closed cone always exists, and
scalar data have a unique positive minimizer, without additional assumptions
on $f$.  A nonnegative, nonconstant, order-$m$ matrix-concave residual
gives a unique positive definite minimizer for $m\times m$ data, while an
operator-monotone residual gives the same conclusion in every matrix size.
For the power family, the property that every data set has a unique
closed-cone minimizer belonging to $\Pm$ holds exactly for
$0<t\le1/2$.  When $1/2<t<1$, nonuniqueness of the closed-cone problem
already occurs for commuting qubit data.  It remains open to characterize all normalized operator monotone functions
$f$ for which $\mathcal E_f$ has a unique positive definite minimizer
in every matrix size, for every finite positive definite data set and every
positive probability vector.

\section*{Acknowledgments}
The third author was partially supported by the KAKENHI Grant Number JP25K07036.

\end{document}